\documentclass[journal]{IEEEtran}
\usepackage{amsmath,amssymb}
\usepackage{xcolor}
\usepackage{bm}
\usepackage{array}
\usepackage{booktabs}
\usepackage{slashed}
\usepackage{algorithm}
\usepackage{algorithmic}
\usepackage{graphicx}
\usepackage[caption=false,font=footnotesize]{subfig}
\usepackage{multirow}
\usepackage{makecell}
\usepackage{threeparttable}
\usepackage{bibunits}
\usepackage{stfloats}
\usepackage{needspace}
\usepackage{amssymb}
\usepackage{relsize}
\usepackage{xspace}
\newcommand{\papertablespacing}{%
  \footnotesize
  \renewcommand{\arraystretch}{1.25}%
  \setlength{\extrarowheight}{0pt}%
}
\newtheorem{lemma}{\bfseries Lemma}
\newtheorem{corollary}{\bfseries Corollary}
\newtheorem{proposition}{\bfseries Proposition}
\newcommand{\citepair}[2]{\cite{#1}--\cite{#2}}
\newcommand{\citerange}[3]{\cite{#1}\nocite{#2}--\cite{#3}}
\newcommand{\DOAM}{DOAM\xspace}
\usepackage[hidelinks]{hyperref}

\ifCLASSINFOpdf
\else
\fi

\begin{document}
\bstctlcite{IEEEtran:BSTcontrol}
%
\title{Trajectory CPHD Filtering for Multiple Turning Vehicles With Decoupled Orientation and Axial-Scale Estimation}
%
%
%

\author{Yunhe Cao,~\IEEEmembership{Member,~IEEE,}
        Yu Wan,
        Yuanhao Cheng,
        Tat-Soon Yeo,~\IEEEmembership{Life~Fellow,~IEEE,}
        Jie Fu
\thanks{Yunhe Cao, Yu Wan, Yuanhao Cheng and Jie Fu are with the National Key Laboratory of Radar Signal Processing, Xidian University, Xi'an 710071, China, E-mail: caoyunhe@mail.xidian.edu.cn; chengyh@stu.xidian.edu.cn; 25021110918@stu.xidian.edu.cn; 21021210890@stu.xidian.edu.cn;}
\thanks{Tat-Soon Yeo is with the Department of Electrical and Computer Engineering, National University of Singapore, Singapore 119077, E-mail: eleyeots@nus.edu.sg;} \thanks{Corresponding author: Yuanhao Cheng.}
\thanks{Manuscript received A; revised August B.}}

%
%

\markboth{IEEE Transactions on Intelligent Transportation Systems}%
{Shell \MakeLowercase{\textit{et al.}}: x}
%



\maketitle

\begin{abstract}
\begingroup
Reliable estimation of vehicle orientation, centroid, and footprint is important for representing road-user occupancy during vehicle turns at intersections and other common road maneuvers. During a turn, the vehicle orientation and direction of motion may differ, and coupling the orientation with the axial scales can degrade both orientation and extent estimates. This paper proposes a trajectory cardinalized probability hypothesis density filter with a Decoupled Orientation and Axial-Scale Estimation (DOAM) model for multiple-vehicle tracking. The model represents the kinematic, orientation, and squared semi-axis-length states separately within each trajectory component, reducing the mutual interference between orientation variation and scale estimation without introducing an additional interacting multiple-model structure. A structured coordinate-ascent variational inference recursion jointly updates these state sequences and the measurement-source variables. The trajectory-component likelihood and weight update associated with the decoupled representation are derived while retaining the standard cardinality recursion. A fixed-lag trajectory implementation further uses current measurements to correct historical states within the smoothing window. Evaluation with signalized-intersection simulations and real onboard LiDAR measurements shows improved estimation of vehicle orientation, centroid, and extent, particularly during turns. The resulting vehicle-state and footprint estimates provide information for road-user occupancy perception and subsequent collision-risk assessment and motion planning.
\endgroup
\end{abstract}

\begin{IEEEkeywords}
\begingroup
System state estimation, sensor technology, road transportation, connected and autonomous vehicles, multiple extended target tracking, trajectory filtering.
\endgroup
\end{IEEEkeywords}

%
\IEEEpeerreviewmaketitle

\section{Introduction}

\begingroup
\IEEEPARstart{A}{ccurate} estimation of multiple road vehicles supports environmental perception in connected and autonomous vehicles and intelligent traffic control. Radar and light detection and ranging (LiDAR) sensors can provide multiple spatially distributed measurements from each vehicle, allowing it to be represented as an extended target with kinematic and extent states~\citerange{Mihaylova2014SMCReviewGroupExtended}{Cao2024RectangularObjectTITS}{Granstrom2017Overview}. Its centroid, orientation, size, and shape characterize road-space occupancy and provide information for collision-risk assessment and motion planning. Elliptical models describe these quantities using two axial scales and can be divided into implicit matrix and explicit parametric representations~\citerange{Liu2025GMMVB}{Zheng2025HeavyTailed,Cao2021EDA}{Zhang-model-2025}. The random matrix model (RMM) jointly represents the orientation and axial scales using a symmetric positive-definite matrix~\citerange{Koch2008RM}{Jiao2024DistributedRM}{Pei2025ConstrainedRM}, whereas the multiplicative error model (MEM) represents the orientation and two semi-axis lengths as explicit extent states~\citerange{2019YangShishanMEM}{Li-model2023}{Zheng2025OrientationVector}.
\par
\endgroup

\begingroup
Vehicle turning maneuvers are common at intersections, during lane changes, and during obstacle avoidance~\cite{Kellner2016HighResolutionDoppler}. The vehicle’s immediate orientation may then differ from its instantaneous direction of motion, and an inaccurate representation of this relationship can degrade the estimated orientation and occupied road region. In the RMM, orientation is embedded with the axial scales and must be extracted through eigendecomposition. Kinematics-dependent prediction models and rotation transformations improve its adaptability~\citerange{Granstrom2014NewPrediction}{Lan-model1-2016}{Sahin2024TrajectoryAligned}, but retain the implicit orientation-scale representation. Explicit parametric models avoid indirect extraction but involve nonlinear measurement equations and require approximation. Particle filtering and interacting multiple-model methods improve maneuver adaptability~\citerange{Steuernagel2025RBPF}{IMM-SOEKF,Sarkar2026IMM}{Wang2026IMMPF}, but add computational cost and may require a predefined model set and transition probabilities. Expectation-maximization and constraint-based alternatives~\citepair{Li-model2023}{Zheng2025OrientationVector}, \citerange{Liu2021EMExtendedObject}{Liu2022ConstrainedEM}{Wen2024VelocityOrientation} generally rely on motion-orientation relationships or fixed shape parameters.
\par
\endgroup

To address the interaction between orientation and scale and the intractable nonlinear posterior under maneuvering conditions, Tuncer \emph{et al.} separated the target orientation from the two axial scales within the RMM framework and employed variational inference to iteratively update the kinematic, orientation, and scale states~\cite{Tuncer-model-2021}. Variational inference provides an effective means of handling nonconjugate posteriors in extended target tracking~\cite{Jiao2024DistributedRM}, \citepair{Orguner2012VB}{Yang2023AdaptiveVB}. This model explicitly estimates the orientation and axial scales while mitigating their mutual interference. Since it decouples the orientation from the axial scales, this model is referred to as the Decoupled Orientation and Axial-Scale Estimation (DOAM) model in this paper.

\begingroup
\renewcommand{\DOAM}{DOAM\xspace}
For multi-vehicle tracking in road scenes, existing methods can be reviewed
from two complementary perspectives: data-association-based tracking and
filtering that combines extent models with the random finite set (RFS)
framework. The first focuses on inferring uncertain measurement-to-object
correspondences while estimating object states. Recent developments include
particle-based and neural-enhanced belief propagation (BP) for association
inference and joint state estimation.
For example, \cite{Meyer2021ScalableEOT} uses particle-based BP for joint object detection,
measurement association, and geometric extent estimation,
but propagates current-object densities rather than a posterior over
trajectories. \cite{Liang2024NeuralBP} enhances BP with learned sensor
features to improve association and false-alarm rejection, but does not
directly estimate object orientation or size.
The second incorporates RMM or MEM into RFS filters for joint
estimation of object number and
states~\citerange{Li2023JDTCPMBMGGIW}{Xie2023MMPMBM,Yang2019NetworkFlow}{Cheng2025VGMCPHD}.
Variational RFS methods additionally address association uncertainty,
unknown parameters, and extent
states~\citerange{Li2022VBEMCPHD}{Cheng2025VBMultipleExtended}{Tuncer2022MultiEllipsoidalVB}.
However, implementations based on clustering and association-hypothesis
management are sensitive to ambiguous measurement partitions when nearby
vehicles generate overlapping returns. Trajectory RFS formulations enable
direct inference on vehicle state
sequences~\citerange{Wang2022RobustTrajectoryRFS}{Granstrom2025PMBMTrajectories,GarciaFernandez2019TPHDTCPHD}{GarciaFernandez2022TreeTrajectories}.
Combining these two approaches, \cite{Xia2023TPMBBP} integrates BP with
trajectory Poisson multi-Bernoulli (TPMB) filtering, enabling trajectory
inference without explicit enumeration of association hypotheses.
\cite{Lyu2026PLMBBP} combines a Poisson labeled multi-Bernoulli filter with
parameter-based BP and fixed-point iteration to reduce sampling cost and
jointly estimate the measurement rate, kinematics, and extent.
Nevertheless, both filters retain the implicit matrix representation of
orientation and axial scales. This coupling also persists in RMM-based
trajectory filters, while
explicit parametric trajectory models remain affected by nonlinear
approximation
errors~\citerange{Sjudin2021ExtendedTPHD}{2025WeiShaoxiuTAES}{Cheng2026ExplicitExtent}.
The remaining challenge is to extend the single-time \DOAM posterior to
the trajectory space while preserving the temporal dependence of the
axial-scale states and incorporating the resulting likelihood into a
multi-vehicle trajectory-filter update.
\par
\endgroup

\begingroup
To fill this gap, this paper proposes a trajectory cardinalized probability hypothesis density (TCPHD) algorithm based on the \DOAM, termed TCPHD-\DOAM, for multiple turning vehicles. The main contributions are summarized as follows:

\begin{enumerate}\item A TCPHD-\DOAM algorithm is developed for multi-vehicle extended-target tracking in road scenes. The kinematic, orientation, and squared semi-axis-length (SSAL) states are represented separately within each trajectory component, thereby mitigating the interaction between orientation variations and scale estimation during vehicle turns. This capability is achieved using a single-model formulation without introducing an additional interacting multiple-model structure.
	
	\item Structured variational updates are derived for the kinematic, orientation, and SSAL state sequences. The \DOAM-specific trajectory-component likelihood and weight update are derived, while the standard TCPHD cardinality recursion is retained. A fixed-lag $L$-scan implementation is incorporated to enable current measurements to correct historical vehicle states within the scan window.
	
	\item The proposed algorithm is evaluated using simulated measurements from a signalized-intersection scenario and real onboard LiDAR measurements from a multi-vehicle road scenario. The experiments assess vehicle orientation, centroid, and extent estimation under both steady motion and turns, together with computational efficiency.
	
\end{enumerate}
\endgroup

\begingroup
The remainder of this paper is organized as follows. Section~II introduces
the \DOAM and its variational filtering recursion~\cite{Tuncer-model-2021}.
Section~III presents the proposed TCPHD-\DOAM algorithm, including its
trajectory-state recursion, cardinality-distribution recursion, $L$-scan
implementation, and computational complexity. Section~IV presents the
experimental results, and Section~V concludes the paper. The principal notation and
typographical conventions are listed in Tab.~\ref{tab:notations}.
\endgroup

\begin{table*}[!h]
	\centering
	\caption{Notations}
	\label{tab:notations}
	\footnotesize
	\papertablespacing
	\begin{tabular}{>{\raggedright\arraybackslash}p{0.20\textwidth} >{\raggedright\arraybackslash}p{0.74\textwidth}}
		\hline
		\textbf{Notation} & \textbf{Description}\\
		\hline
		
		$\mathbb R$, $\mathbb R_{>0}$, $\mathbb Z_{\geq0}$ &
		The field of real numbers, the set of positive real numbers, and the set of nonnegative integers, respectively.\\
		
		$\mathbb R^n$, $\mathbb R^{m\times n}$ &
		The space of real column vectors of length~$n$ and the space of real matrices of size~$m\times n$, respectively.\\
		
		$\mathrm{KL}\!\left(q(x)\|p(x)\right)$ &
		The Kullback--Leibler divergence from~$q(x)$ to~$p(x)$, defined as
		$\mathrm{KL}\!\left(q(x)\|p(x)\right)=\int q(x)\ln\!\left(q(x)/p(x)\right)\mathrm dx$.\\

		$\mathcal N(\bm x;\bm\mu,\bm\varXi)$ &
		The multivariate Gaussian distribution with mean~$\bm\mu\in\mathbb R^n$ and positive-definite covariance matrix~$\bm\varXi\in\mathbb R^{n\times n}$, satisfying
		$\mathcal N(\bm x;\bm\mu,\bm\varXi)=(2\pi)^{-n/2}|\bm\varXi|^{-1/2}
		\exp[-(\bm x-\bm\mu)^{\mathrm T}\bm\varXi^{-1}(\bm x-\bm\mu)/2]$, where~$|\bm\varXi|$ denotes the determinant of~$\bm\varXi$.\\
	
	$\mathcal{IG}(x;a,b)$ &
	The inverse-Gamma distribution with shape parameter~$a\in\mathbb R_{>0}$ and scale parameter~$b\in\mathbb R_{>0}$, satisfying
	$\mathcal{IG}(x;a,b)=b^a x^{-a-1}\exp(-b/x)/\Gamma(a)$ for~$x>0$, where~$\Gamma(a)=\int_{0}^{\infty}t^{a-1}\exp(-t)\,\mathrm dt$ denotes the Gamma function.\\
	
	$\delta_{a\geq b}$ &
	The generalized indicator function, where~$\delta_{a\geq b}=1$ if~$a\geq b$, and~$\delta_{a\geq b}=0$ otherwise.\\
	
	$\operatorname{tr}(\bm A)$, $\bm a^{\mathrm T}$, $a!$, $|\mathcal A|$, $\operatorname{Dim}(\bm a)$ &
	The trace of matrix~$\bm A$, the transpose of~$\bm a$, the factorial of~$a$, the cardinality of set~$\mathcal A$, and the dimension of~$\bm a$, respectively.\\
	
	$\left[\bm A\right]_{mn}$, $\left[\bm a\right]_n$ &
	The $(m,n)$-th element of matrix~$\bm A$ and the $n$-th element of vector~$\bm a$, respectively.\\

	$\mathbb E[a]$, $\operatorname{Var}(a)$, $\operatorname{Cov}(a,b)$ &
	The expectation and variance of~$a$, and the covariance between~$a$ and~$b$, respectively.\\
	
	$\bm I_n$, $\bm 0_n$ &
	The identity matrix and zero matrix of order~$n$, respectively.\\
	
	$\bm 1_{a\times b}$, $\bm 0_{a\times b}$ &
	The all-ones matrix and all-zeros matrix of size~$a\times b$, respectively.\\
	
	$\operatorname{diag}(a_1,\ldots,a_n)$ &
	The diagonal matrix with diagonal entries~$a_1,\ldots,a_n$. \\
	
	$\operatorname{blkdiag}(\bm A_1,\ldots,\bm A_n)$ & The block-diagonal matrix with diagonal blocks~$\bm A_1,\ldots,\bm A_n$. \\
	
	\hline
	\multicolumn{2}{l}{\textbf{Typographical conventions}}\\
	\hline
	
	\multicolumn{2}{l@{}}{
		\begin{tabular*}{0.987\textwidth}
			{@{\extracolsep{\fill}}llll@{}}
			
			Lowercase italic, e.g.,~$a$
			& Scalar variable.
			& Lowercase bold italic, e.g.,~$\bm a$
			& Vector variable.\\
			
			Uppercase bold italic, e.g.,~$\bm A$
			&  Matrix or stacked-vector variable.
			& Upright typewriter, e.g.,~$\mathtt{a}$
			& Constant.\\
			
			Uppercase calligraphic, e.g.,~$\mathcal A$
			& Function or set.
			& Upright sans serif, e.g.,~$\mathsf{RMM}$
			& Model, algorithm, or descriptive identifier.\\
			
		\end{tabular*}
	}\\
	
	\hline
\end{tabular}
\end{table*}

\section{Decoupled Orientation and Axial-Scale Estimation}
\label{Sec:DOAM_model}

\begingroup
Consider a two-dimensional road scene in which each extended target represents the elliptical footprint of a vehicle.
\endgroup
At time~$k$, the kinematic state, orientation state, and two SSAL states of a single extended target are denoted by~$\bm r_k\in\mathbb R^{d_{\bm r}}$, $\theta_k\in\mathbb R$, and~$\ddot l_{u,k}\in\mathbb R_{>0}$, respectively, where~$u\in\{1,2\}$. Taking the constant-velocity model as an example, the dimension of the kinematic state is~$d_{\bm r}=4$, and the state can be written as
\begin{equation}
\bm r_k=
\begin{bmatrix}
p_{k}^{\mathsf x} &
p_{k}^{\mathsf y} &
\dot p_{k}^{\mathsf x} &
\dot p_{k}^{\mathsf y}
\end{bmatrix}^{\mathrm T}
\in\mathbb R^4
\label{eq:rk_def}
\end{equation}

\noindent where $p_{k}^{\mathsf x}$ and~$p_{k}^{\mathsf y}$ denote the position components of the target centroid, while~$\dot p_{k}^{\mathsf x}$ and~$\dot p_{k}^{\mathsf y}$ denote the corresponding velocity components. The orientation~$\theta_k$ is measured counterclockwise from the positive~$\mathsf x$-axis of the global coordinate system to the first semi-axis of the target. When $l_{1,k}=l_{2,k}$, the orientation does not
affect the extent matrix. The~$u$-th SSAL state is defined as
\begin{equation}
\ddot l_{u,k}=l_{u,k}^{2}\in\mathbb R_{>0},
\qquad u\in\{1,2\}
\label{eq:axial_square_def}
\end{equation}

The target extent matrix in the global coordinate system is a function of the orientation state and the two SSAL states, given by
\begin{equation}
\bm S_k
=
\mathcal R\left(\theta_k\right)\mathcal{W}\left(\ddot l_{1,k},\ddot l_{2,k}\right)\mathcal R(-\theta_k)
:=\mathcal{S}\left(\theta_k,\ddot l_{1,k},\ddot l_{2,k}\right)
\label{eq:Sk_def}
\end{equation}
\noindent where~$\mathcal R(\theta_k)$ is the rotation function and~$\mathcal{W}\left(\ddot l_{1,k},\ddot l_{2,k}\right)$ is the diagonal SSAL mapping function, respectively given by
\begin{equation}
\mathcal R(\theta_k)
=
\begin{bmatrix}
\cos\theta_k & -\sin\theta_k\\
\sin\theta_k & \cos\theta_k
\end{bmatrix}
\label{eq:O_theta_def}
\end{equation}
and
\begin{equation}
\mathcal W\left(\ddot l_{1,k},\ddot l_{2,k}\right)
=
\operatorname{diag}\!\left(
\ddot l_{1,k},\ddot l_{2,k}
\right)
\label{eq:Lk_def}
\end{equation}

Following \cite[Eq.~(4)]{Tuncer-model-2021}, the initial kinematic state, orientation state, and two SSAL states are assumed to be mutually independent. Their joint prior density can therefore be expressed as
\begin{equation}
\begin{aligned}
&p\!\left(
\bm r_0,\theta_0,\ddot l_{1,0},\ddot l_{2,0}
\right)\\
&\quad =
\mathcal N\!\left(
\bm r_0;
\hat{\bm r}_0,
\bm\varXi_0^{\bm r}
\right)
\mathcal N\!\left(
\theta_0;
\hat{\theta}_0,
\varXi_0^{\theta}
\right)
\prod_{u=1}^{2}
\mathcal{IG}\!\left(
\ddot l_{u,0};
a_{u,0},b_{u,0}
\right)
\end{aligned}
\label{eq:DOAM_joint_prior}
\end{equation}

\noindent where $\hat{\bm r}_0\in\mathbb R^{d_{\bm r}}$ and~$\bm\varXi_0^{\bm r}\in\mathbb R^{d_{\bm r}\times d_{\bm r}}$ denote the prior mean and covariance matrix of the initial kinematic state, respectively; $\hat{\theta}_0\in\mathbb R$ and~$\varXi_0^{\theta}\in\mathbb R_{>0}$ denote the prior mean and variance of the initial orientation state, respectively; and~$a_{u,0}\in\mathbb R_{>0}$ and~$b_{u,0}\in\mathbb R_{>0}$ denote the shape and scale parameters of the inverse-Gamma distribution corresponding to the $u$-th initial SSAL state, respectively, where~$u\in\{1,2\}$.

According to \cite[Eq.~(1)]{Tuncer-model-2021}, conditioned on the target state, the $m$-th measurement satisfies

\begin{equation}
\begin{aligned}
p\!\left(
\bm z_k^{\mathsf O,m}
\mid
\bm r_k,\theta_k,\ddot l_{1,k},\ddot l_{2,k}
\right)=
\mathcal N\!\left(
\bm z_k^{\mathsf O,m};
\bm H_k\bm r_k,
\bm R_k+
\mathtt{s}\bm S_k
\right)
\end{aligned}
\label{eq:z_measurement_model}
\end{equation}

\noindent where~$\bm z_k^{\mathsf O,m}\in\mathbb R^{d_{\bm z}}$ denotes the $m$-th measurement generated by the target at time~$k$, and~$d_{\bm z}=2$ is the measurement dimension; $\bm H_k\in\mathbb R^{d_{\bm z}\times d_{\bm r}}$ is the measurement matrix; $\bm R_k\in\mathbb R^{d_{\bm z}\times d_{\bm z}}$ is the measurement-noise covariance matrix; and~$\mathtt{s}\in\mathbb R_{>0}$ is the measurement-spread scale constant. Conditioned on the target state and the number of target-generated measurements, these measurements are mutually independent. The measurement set generated by the target at time~$k$ is denoted by
\begin{equation}
\mathcal Z_k^{\mathsf O}
=
\left\{
\bm z_k^{\mathsf O,m}
\right\}_{m=1}^{M_k^{\mathsf O}},
\qquad
M_k^{\mathsf O}
=
\left|
\mathcal Z_k^{\mathsf O}
\right|
\label{eq:object_measurement_set}
\end{equation}
\noindent where~$M_k^{\mathsf O}$ denotes the number of measurements generated by the target at time~$k$. Under the conditional-independence assumption, the corresponding likelihood follows \cite[Eq.~(7)]{Tuncer-model-2021} and is given by
\begin{equation}
\begin{aligned}
p\!\left(
\mathcal Z_k^{\mathsf O}
\mid
\bm r_k,\theta_k,\ddot l_{1,k},\ddot l_{2,k}
\right)=
\prod_{m=1}^{M_k^{\mathsf O}}
\mathcal N\!\left(
\bm z_k^{\mathsf O,m};
\bm H_k\bm r_k,
\bm R_k+
\mathtt{s}\bm S_k
\right)
\end{aligned}
\label{eq:Zk_likelihood}
\end{equation}

\begingroup
For the single-time \DOAM, the predictions for the kinematic
and SSAL states are given in
\cite[Eqs.~(5) and~(40)]{Tuncer-model-2021}, respectively.
Let $\omega_k$ and~$\pi_k$ denote predicted and posterior quantities,
respectively.
The scalar Beta--Bartlett evolution of the SSAL states
\cite[Eqs.~(4.9a) and~(4.9b)]{Kartal-2022-VS-ETT} is given by
\begin{equation}
a_{u,\omega_k}
=
\mathtt g a_{u,\pi_{k-1}}+(1-\mathtt g),
\qquad
u\in\{1,2\}
\label{eq:l_shape_predict}
\end{equation}
\begin{equation}
b_{u,\omega_k}
=
\mathtt g b_{u,\pi_{k-1}},
\qquad
u\in\{1,2\}
\label{eq:l_scale_predict}
\end{equation}

\noindent where~$\mathtt g\in(0,1)$ is the forgetting factor of the SSAL
states.

The nonconjugate update in \cite[Eqs.~(8)--(12)]{Tuncer-model-2021}
introduces $\mathcal Y_k^{\mathsf O}=\{\bm y_k^{\mathsf O,m}\}_{m=1}^{M_k^{\mathsf O}}$
and applies coordinate-ascent variational inference (CAVI) to the kinematic,
orientation, SSAL, and measurement-source factors. The closed forms are
given in \cite[Eqs.~(17), (21), (25), and~(32)]{Tuncer-model-2021}; only
the two expectations used below are retained.

For concise notation, the scale-weighted reciprocal expectation of the~$u$-th SSAL state at the~$\ell$-th variational iteration is defined as
\begin{equation}
\iota_{u,k}^{[\ell]}
=
\mathbb E_{q_k^{\ddot l_u,[\ell]}}
\!\left[
(\mathtt s\ddot l_{u,k})^{-1}
\right]
=
\frac{a_{u,\pi_k}}
{\mathtt s b_{u,\pi_k}^{[\ell]}},
\qquad
u\in\{1,2\}
\label{eq:iota_u}
\end{equation}
Furthermore, the expectation of the inverse extent-spread covariance at
the~$\ell$-th variational iteration is
\begin{equation}
\bar{\bm G}_k^{[\ell]}
=
\mathbb E_{q_k^{\theta,[\ell]}}
\!\left[
\mathcal R(\theta_k)
\operatorname{diag}\!\left(
\iota_{1,k}^{[\ell]},\iota_{2,k}^{[\ell]}
\right)
\mathcal R(-\theta_k)
\right]
\label{eq:Gbar_def}
\end{equation}
\noindent where the closed-form expression of
$\bar{\bm G}_k^{[\ell]}$ follows directly from
\cite[Corollary~1, Eq.~(36)]{Tuncer-model-2021}.

After convergence, the kinematic and orientation posteriors are Gaussian,
whereas the two SSAL posteriors are inverse-Gamma. Their approximate joint
posterior is
\begin{equation}
\resizebox{0.98\columnwidth}{!}{$
\begin{aligned}
q_k(\bm r_k,\theta_k,\ddot l_{1,k},\ddot l_{2,k})
&=
\mathcal N\!\left(
\bm r_k;
\hat{\bm r}_{\pi_k},
\bm\varXi_{\pi_k}^{\bm r}
\right)\\
&\quad\times
\mathcal N\!\left(
\theta_k;
\hat{\theta}_{\pi_k},
\varXi_{\pi_k}^{\theta}
\right)
\prod_{u=1}^{2}
\mathcal{IG}\!\left(
\ddot l_{u,k};
a_{u,\pi_k},
b_{u,\pi_k}
\right)
\end{aligned}
$}
\label{eq:DOAM_DGDIG_posterior}
\end{equation}

Since Eq.~\eqref{eq:DOAM_DGDIG_posterior} consists of two Gaussian factors and two inverse-Gamma factors, this distribution structure is referred to as the Double-Gaussian Double-Inverse-Gamma (DGDIG) density. The DGDIG density is a structured variational approximation to the nonconjugate joint posterior of the \DOAM.
\endgroup

\section{TCPHD-\DOAM Filtering Algorithm}
\label{C6_6}

This section presents the TCPHD-\DOAM filtering algorithm based on the DGDIG density obtained by \DOAM variational filtering and the TCPHD filtering framework. Unlike single-target \DOAM variational filtering, the single-time-step DGDIG density is extended to the trajectory space. Each trajectory component stores the historical statistical parameters of the kinematic, orientation, and two SSAL states, which are jointly corrected using the current measurements.

\subsection{Single-Trajectory DGDIG Representation}
\label{C6_6_1}

Consider a surviving trajectory at time~$k$ with start time~$t$ and length~$n$, where~$t+n-1=k$. The single-trajectory state is represented as
\begin{equation}
\mathcal X
=
\left(
t,\bm r^{1:n},\bm\theta^{1:n},
\ddot l_1^{1:n},\ddot l_2^{1:n}
\right)
\label{eq:C6_traj_state}
\end{equation}
\noindent where~$\bm r^{1:n}$ and~$\bm\theta^{1:n}$ denote the kinematic and orientation state sequences, respectively, and~$\ddot l_u^{1:n}$ denotes the $u$-th SSAL state sequence, where~$u\in\{1,2\}$.

After applying assumed-density projection to the structured variational
trajectory posterior, the trajectory-level extension of the single-time
DGDIG density is referred to as the trajectory DGDIG density. For compact
notation, let
$\bm\varOmega\triangleq(\hat{\bm\varGamma},
\bm\varXi^{\bm\varGamma},\hat{\bm\varTheta},
\bm\varXi^{\bm\varTheta},\bm A_1,\bm B_1,\bm A_2,\bm B_2)$
denote its statistical-parameter tuple. Any subscripts and superscripts of
$\bm\varOmega$ are inherited by all entries in this tuple. The trajectory
start time~$t$ is retained as a separate argument because it is not a
statistical parameter. The trajectory DGDIG density is defined as
\begin{equation}
\begin{aligned}
\mathcal{DGDIG}\!\left(\mathcal X;t,\bm\varOmega\right)
&\triangleq
\mathcal N\!\left(
\bm r^{1:n};
\hat{\bm\varGamma},
\bm\varXi^{\bm\varGamma}
\right)\\
&\quad\times
\mathcal N\!\left(
\bm\theta^{1:n};
\hat{\bm\varTheta},
\bm\varXi^{\bm\varTheta}
\right)\\
&\quad\times
\prod_{h=1}^{n}\prod_{u=1}^{2}
\mathcal{IG}\!\left(
\ddot l_u^h;a_u^h,b_u^h
\right)
\end{aligned}
\label{eq:C6_single_component}
\end{equation}

\noindent where~$\hat{\bm\varGamma}\in\mathbb R^{n d_{\bm r}}$
stacks the kinematic means as
$\hat{\bm\varGamma}=[(\hat{\bm r}^{1})^{\mathrm T},\ldots,
(\hat{\bm r}^{n})^{\mathrm T}]^{\mathrm T}$, with
$\hat{\bm r}^{h}=\mathbb E[\bm r^{h}]$. The covariance
$\bm\varXi^{\bm\varGamma}\in
\mathbb R^{n d_{\bm r}\times n d_{\bm r}}$ consists of blocks
$\bm\varXi_{h,h'}^{\bm\varGamma}
=\operatorname{Cov}(\bm r^{h},\bm r^{h'})$. Similarly,
$\hat{\bm\varTheta}=[\hat\theta^{1},\ldots,\hat\theta^{n}]^{\mathrm T}$
and~$\bm\varXi^{\bm\varTheta}\in\mathbb R^{n\times n}$ are the
orientation mean vector and covariance matrix, with
$\hat\theta^{h}=\mathbb E[\theta^{h}]$ and
$\varXi_{h,h'}^{\bm\varTheta}
=\operatorname{Cov}(\theta^{h},\theta^{h'})$. For~$u\in\{1,2\}$,
$\bm A_u=[a_u^1,\ldots,a_u^n]^{\mathrm T}$ and
$\bm B_u=[b_u^1,\ldots,b_u^n]^{\mathrm T}$ collect the shape and scale
parameters of the inverse-Gamma factors for the $u$-th SSAL state
sequence.

The product of inverse-Gamma factors in
Eq.~\eqref{eq:C6_single_component} is retained as an assumed-density
representation of the smoothed SSAL marginals. Before this projection,
the sequence-level evolution of each SSAL state used in the backward
smoothing step follows the scalar Beta--Bartlett Markov model
\cite[Eq.~(4.33)]{Kartal-2022-VS-ETT}, i.e.,
\begin{equation}
p\!\left(\ddot l_u^{1:n}\right)
=
p\!\left(\ddot l_u^1\right)
\prod_{h=2}^{n}
p\!\left(
\ddot l_u^h\mid\ddot l_u^{h-1};\mathtt g
\right),
\qquad u\in\{1,2\}
\label{eq:C6_SSAL_Markov_joint}
\end{equation}

After the backward recursion, the resulting inverse-Gamma marginals are
projected back onto the product form in
Eq.~\eqref{eq:C6_single_component}. Thus, the product representation does
not remove the Beta--Bartlett dependence used during smoothing.

The density in Eq.~\eqref{eq:C6_single_component} is nonzero only when the trajectory start time is~$t$ and~$n=\operatorname{Dim}(\hat{\bm\varGamma})/d_{\bm r}=\operatorname{Dim}(\hat{\bm\varTheta})$; otherwise, it is zero. Accordingly, the DGDIG-mixture form of the multitrajectory probability hypothesis density (PHD) at time~$k$ is
\begin{equation}
\begin{aligned}
\mathcal{D}_k(\mathcal X)
&=
\sum_{j=1}^{J_k}w_k^{(j)}
\mathcal{DGDIG}\!\left(
\mathcal X;t_k^{(j)},
\bm\varOmega_k^{(j)}
\right)
\end{aligned}
\label{eq:C6_tphd_mix}
\end{equation}
\noindent where~$J_k$ and~$w_k^{(j)}$ denote the number of trajectory
components and the weight of the $j$-th component, respectively. For a
surviving component,
$n_k^{(j)}=\operatorname{Dim}(\hat{\bm\varGamma}_k^{(j)})/d_{\bm r}
=\operatorname{Dim}(\hat{\bm\varTheta}_k^{(j)})$ and
$t_k^{(j)}+n_k^{(j)}-1=k$.

\subsection{TCPHD-\DOAM Prediction}
\label{C6_6_2}

Given the posterior multitrajectory PHD~$\mathcal{D}_{\pi_{k-1}}(\mathcal X)$ at time~$k-1$ and the birth multitrajectory PHD~$\mathcal{D}_{\beta_k}(\mathcal X)$ at time~$k$, both represented by the DGDIG mixture in Eq.~\eqref{eq:C6_tphd_mix}, the predicted multitrajectory PHD at time~$k$ is
\begin{equation}
\begin{aligned}
\mathcal{D}_{\omega_k}(\mathcal X)
=
\mathcal{D}_{\beta_k}(\mathcal X)+
\sum_{j=1}^{J_{\pi_{k-1}}}
p^{\mathsf S}w_{\pi_{k-1}}^{(j)}
\mathcal{DGDIG}\!\left(
\mathcal X;t_{\omega_k}^{(j)},
\bm\varOmega_{\omega_k}^{(j)}
\right)
\end{aligned}
\label{eq:C6_pred_mix}
\end{equation}
\noindent where~$p^{\mathsf S}\in[0,1]$ is the target survival probability. For the $j$-th surviving trajectory component, $w_{\omega_k}^{(j)}=p^{\mathsf S}w_{\pi_{k-1}}^{(j)}$, $t_{\omega_k}^{(j)}=t_{\pi_{k-1}}^{(j)}$, and~$n_{\omega_k}^{(j)}=n_{\pi_{k-1}}^{(j)}+1$. The predicted mean vector and covariance matrix of the kinematic state sequence are respectively calculated as
\begin{equation}
\hat{\bm\varGamma}_{\omega_k}^{(j)}
=
\begin{bmatrix}
(\hat{\bm\varGamma}_{\pi_{k-1}}^{(j)})^{\mathrm T}&
(\dot{\bm F}_k^{\bm\varGamma,(j)}
\hat{\bm\varGamma}_{\pi_{k-1}}^{(j)})^{\mathrm T}
\end{bmatrix}^{\mathrm T}
\label{eq:C6_r_mean_pred}
\end{equation}
\begin{equation}
\bm\varXi_{\omega_k}^{\bm\varGamma,(j)}
=
\begin{bmatrix}
\bm\varXi_{\pi_{k-1}}^{\bm\varGamma,(j)}
&
\bm\varXi_{\pi_{k-1}}^{\bm\varGamma,(j)}
(\dot{\bm F}_k^{\bm\varGamma,(j)})^{\mathrm T}\\
\dot{\bm F}_k^{\bm\varGamma,(j)}
\bm\varXi_{\pi_{k-1}}^{\bm\varGamma,(j)}
&
\dot{\bm F}_k^{\bm\varGamma,(j)}
\bm\varXi_{\pi_{k-1}}^{\bm\varGamma,(j)}
(\dot{\bm F}_k^{\bm\varGamma,(j)})^{\mathrm T}
+\bm Q_k^{\bm r}
\end{bmatrix}
\label{eq:C6_r_cov_pred}
\end{equation}
\begin{equation}
\dot{\bm F}_k^{\bm\varGamma,(j)}
=
\begin{bmatrix}
\bm 0_{d_{\bm r}\times(n_{\pi_{k-1}}^{(j)}-1)d_{\bm r}}
&
\bm F_k^{\bm r}
\end{bmatrix}
\label{eq:C6_terminal_transition1}
\end{equation}

The predicted mean vector and covariance matrix of the orientation state sequence are respectively calculated as
\begin{equation}
\hat{\bm\varTheta}_{\omega_k}^{(j)}
=
\begin{bmatrix}
(\hat{\bm\varTheta}_{\pi_{k-1}}^{(j)})^{\mathrm T}&
\dot{\bm F}_k^{\bm\varTheta,(j)}
\hat{\bm\varTheta}_{\pi_{k-1}}^{(j)}
\end{bmatrix}^{\mathrm T}
\label{eq:C6_theta_mean_pred}
\end{equation}
\begin{equation}
\bm\varXi_{\omega_k}^{\bm\varTheta,(j)}
=
\begin{bmatrix}
\bm\varXi_{\pi_{k-1}}^{\bm\varTheta,(j)}
&
\bm\varXi_{\pi_{k-1}}^{\bm\varTheta,(j)}
(\dot{\bm F}_k^{\bm\varTheta,(j)})^{\mathrm T}\\
\dot{\bm F}_k^{\bm\varTheta,(j)}
\bm\varXi_{\pi_{k-1}}^{\bm\varTheta,(j)}
&
\dot{\bm F}_k^{\bm\varTheta,(j)}
\bm\varXi_{\pi_{k-1}}^{\bm\varTheta,(j)}
(\dot{\bm F}_k^{\bm\varTheta,(j)})^{\mathrm T}
+Q_k^{\theta}
\end{bmatrix}
\label{eq:C6_theta_cov_pred}
\end{equation}
\begin{equation}
\dot{\bm F}_k^{\bm\varTheta,(j)}
=
\begin{bmatrix}
\bm 0_{1\times(n_{\pi_{k-1}}^{(j)}-1)}
&
F_k^{\theta}
\end{bmatrix}
\label{eq:C6_terminal_transition2}
\end{equation}

The predicted parameters of the SSAL state sequences are respectively calculated as
\begin{equation}
\bm A_{u,\omega_k}^{(j)}
=
\begin{bmatrix}
(\bm A_{u,\pi_{k-1}}^{(j)})^{\mathrm T}&
a_{u,\omega_k}^{(j),n_{\omega_k}^{(j)}}
\end{bmatrix}^{\mathrm T}
\label{eq:C6_a_pred_hist}
\end{equation}
\begin{equation}
a_{u,\omega_k}^{(j),n_{\omega_k}^{(j)}}
=
\mathtt g
a_{u,\pi_{k-1}}^{(j),n_{\pi_{k-1}}^{(j)}}
+(1-\mathtt g),
\qquad u\in\{1,2\}
\label{eq:C6_a_terminal_pred}
\end{equation}
\begin{equation}
\bm B_{u,\omega_k}^{(j)}
=
\begin{bmatrix}
(\bm B_{u,\pi_{k-1}}^{(j)})^{\mathrm T}&
b_{u,\omega_k}^{(j),n_{\omega_k}^{(j)}}
\end{bmatrix}^{\mathrm T}
\label{eq:C6_b_pred_hist}
\end{equation}
\begin{equation}
b_{u,\omega_k}^{(j),n_{\omega_k}^{(j)}}
=
\mathtt g
b_{u,\pi_{k-1}}^{(j),n_{\pi_{k-1}}^{(j)}},
\qquad u\in\{1,2\}
\label{eq:C6_b_terminal_pred}
\end{equation}

\subsection{TCPHD-\DOAM Update}
\label{C6_6_3}

Given the predicted multitrajectory PHD at time~$k$
\begin{equation}
\begin{aligned}
\mathcal{D}_{\omega_k}(\mathcal X)
&=
\sum_{j=1}^{J_{\omega_k}}w_{\omega_k}^{(j)}
\mathcal{DGDIG}\!\left(
\mathcal X;t_{\omega_k}^{(j)},
\bm\varOmega_{\omega_k}^{(j)}
\right)
\end{aligned}
\label{eq:C6_update_pred_phd}
\end{equation}
let~$\mathcal Z_k=\{\bm z_k^m\}_{m=1}^{M_k}$ denote the total measurement set acquired by the sensor at time~$k$, where~$M_k=|\mathcal Z_k|$ is the total number of measurements at time~$k$. The notation~$\mathcal P\angle\mathcal Z_k$ indicates that~$\mathcal P$ is a partition of~$\mathcal Z_k$, and~$\mathcal C\in\mathcal P$ denotes a measurement cell in this partition. Let~$M_k^{\mathcal P,\mathcal C}=|(\mathcal P,\mathcal C)|$ denote the number of measurements in this cell. The corresponding measurement set and measurement-source-variable set are respectively denoted by~$\mathcal Z_k^{(\mathcal P,\mathcal C)}=\{\bm z_k^{\mathcal P,\mathcal C,m}\}_{m=1}^{M_k^{\mathcal P,\mathcal C}}$ and~$\mathcal Y_k^{(\mathcal P,\mathcal C)}=\{\bm y_k^{\mathcal P,\mathcal C,m}\}_{m=1}^{M_k^{\mathcal P,\mathcal C}}$. The posterior multitrajectory PHD can then be written as
\begin{equation}
\begin{aligned}
\mathcal{D}_{\pi_k}(\mathcal X)
&=
\kappa\left[
1-(1-\mathrm e^{-\gamma^{\mathsf O}})p^{\mathsf D}
\right]
\mathcal{D}_{\omega_k}(\mathcal X)\\
&\quad+
\sum_{\mathcal P\angle\mathcal Z_k}
\sum_{\mathcal C\in\mathcal P}
\mathcal{D}_{\pi_k}^{\mathsf D}
(\mathcal X,\mathcal P,\mathcal C)
\end{aligned}
\label{eq:C6_post_phd_update}
\end{equation}
\begin{equation}
\begin{aligned}
&\mathcal{D}_{\pi_k}^{\mathsf D}
(\mathcal X,\mathcal P,\mathcal C)=
\sum_{j=1}^{J_{\omega_k}}
w_{\pi_k}^{(j,\mathcal P,\mathcal C)}
\mathcal{DGDIG}\!\left(
\mathcal X;t_{\omega_k}^{(j)},
\bm\varOmega_{\pi_k}^{(j,\mathcal P,\mathcal C)}
\right)
\end{aligned}
\label{eq:C6_DpiD_expand}
\end{equation}
\noindent where~$p^{\mathsf D}$ is the target detection probability, $\gamma^{\mathsf O}$ is the target measurement rate, and~$\kappa$ is the TCPHD update coefficient, whose calculation is given later with the cardinality-distribution recursion. For the $j$-th predicted trajectory component with length~$n_{\omega_k}^{(j)}$, the structured variational factorization is
\begin{equation}
\begin{aligned}
&q_k^{(j,\mathcal P,\mathcal C)}\!\left(
\bm r^{1:n_{\omega_k}^{(j)}},\bm\theta^{1:n_{\omega_k}^{(j)}},
\ddot l_1^{1:n_{\omega_k}^{(j)}},\ddot l_2^{1:n_{\omega_k}^{(j)}},
\mathcal Y_k^{(\mathcal P,\mathcal C)}
\right)\\
&=
q_k^{\bm r,(j,\mathcal P,\mathcal C)}
\!\left(\bm r^{1:n_{\omega_k}^{(j)}}\right)
\prod_{u=1}^{2}
q_k^{\ddot l_u,(j,\mathcal P,\mathcal C)}
\!\left(\ddot l_u^{1:n_{\omega_k}^{(j)}}\right)\\
&\quad\times
q_k^{\theta,(j,\mathcal P,\mathcal C)}
\!\left(\bm\theta^{1:n_{\omega_k}^{(j)}}\right)
\prod_{m=1}^{M_k^{\mathcal P,\mathcal C}}
q_k^{\bm y^m,(j,\mathcal P,\mathcal C)}
\!\left(\bm y_k^{\mathcal P,\mathcal C,m}\right)
\end{aligned}
\label{eq:C6_traj_vb_factor}
\end{equation}
\noindent where~$q_k^{\bm r,(j,\mathcal P,\mathcal C)}(\cdot)$, $q_k^{\ddot l_u,(j,\mathcal P,\mathcal C)}(\cdot)$, $q_k^{\theta,(j,\mathcal P,\mathcal C)}(\cdot)$, and~$q_k^{\bm y^m,(j,\mathcal P,\mathcal C)}(\cdot)$ are the variational posterior factors of the kinematic state sequence, the $u$-th SSAL state sequence, the orientation state sequence, and the $m$-th measurement-source variable in measurement cell~$(\mathcal P,\mathcal C)$, respectively.

To extract the terminal statistics of both predicted and posterior trajectory components in a unified form, define the terminal-selection vector~$\bm e_n=\begin{bmatrix}\bm 0_{1\times(n-1)}&1\end{bmatrix}^{\mathrm T}$. For a trajectory component of length~$n$, let~$\hat{\bm\varGamma}$, $\bm\varXi^{\bm\varGamma}$, $\hat{\bm\varTheta}$, and~$\bm\varXi^{\bm\varTheta}$ denote the statistical parameters of its kinematic and orientation state sequences, and let~$\bm A_u$ and~$\bm B_u$ denote the statistical parameters of its two SSAL state sequences. The corresponding terminal statistical parameters are
\begin{equation}
\hat{\bm r}^{n}
=
\left(
\bm e_n^{\mathrm T}\otimes\bm I_{d_{\bm r}}
\right)
\hat{\bm\varGamma}
\label{eq:C6_terminal_r_mean}
\end{equation}
\begin{equation}
\bm\varXi^{\bm r,n}
=
\left(
\bm e_n^{\mathrm T}\otimes\bm I_{d_{\bm r}}
\right)
\bm\varXi^{\bm\varGamma}
\left(
\bm e_n\otimes\bm I_{d_{\bm r}}
\right)
\label{eq:C6_terminal_r_statistics}
\end{equation}
\begin{equation}
\hat{\theta}^{n}
=
\bm e_n^{\mathrm T}\hat{\bm\varTheta}
\label{eq:C6_terminal_theta_mean}
\end{equation}
\begin{equation}
\varXi^{\theta,n}
=
\bm e_n^{\mathrm T}
\bm\varXi^{\bm\varTheta}
\bm e_n
\label{eq:C6_terminal_theta_statistics}
\end{equation}
\begin{equation}
a_u^{n}
=
\bm e_n^{\mathrm T}\bm A_u,
\qquad u\in\{1,2\}
\label{eq:C6_terminal_axis_shape}
\end{equation}
\begin{equation}
b_u^{n}
=
\bm e_n^{\mathrm T}\bm B_u,
\qquad u\in\{1,2\}
\label{eq:C6_terminal_axis_statistics}
\end{equation}

For the $j$-th predicted trajectory component, $n$ in Eqs.~\eqref{eq:C6_terminal_r_mean}--\eqref{eq:C6_terminal_axis_statistics} is set to~$n_{\omega_k}^{(j)}$, and all statistical parameters are set to the corresponding predicted parameters. For the posterior trajectory component obtained from this predicted component and measurement cell~$(\mathcal P,\mathcal C)$, the statistical parameters are set to the corresponding posterior parameters. The measurement update does not change the trajectory length, so that $n_{\pi_k}^{(j)}=n_{\omega_k}^{(j)}$. These terminal-extraction relations also apply to the variational posterior parameters at every iteration.

Based on the structured variational factorization and the terminal trajectory statistics defined above, the trajectory-level variational posterior of the $j$-th predicted trajectory component under measurement cell~$(\mathcal P,\mathcal C)$ can be recursively obtained using CAVI. Given the available results from the $\ell$-th iteration, the posterior kinematic state sequence, the two posterior SSAL state sequences, the posterior orientation state sequence, and the measurement-source variables are sequentially updated at the $(\ell+1)$-th iteration. The historical trajectory states are corrected while the terminal trajectory state is updated. The variational updates of the posterior kinematic state sequence, SSAL state sequences, orientation state sequence, and measurement-source variables are given in Propositions~\ref{prop:C6_r_update}--\ref{prop:C6_y_update}, respectively.

\begin{proposition}[Variational Update of the Kinematic State Sequence]
	\label{prop:C6_r_update}
	
	Given the $j$-th predicted trajectory component and measurement cell~$(\mathcal P,\mathcal C)$, suppose that the variational posteriors of the SSAL state sequences, orientation state sequence, and measurement-source variables at the $\ell$-th iteration are available. The posterior kinematic state sequence obtained at the $(\ell+1)$-th iteration remains Gaussian, i.e.,
	\begin{equation}
	\begin{aligned}
	&q_k^{\bm r,(j,\mathcal P,\mathcal C),[\ell+1]}
	\!\left(
	\bm r^{1:n_{\pi_k}^{(j)}}
	\right)\\
	&=
	\mathcal N\!\left(
	\bm r^{1:n_{\pi_k}^{(j)}};
	\hat{\bm\varGamma}_{\pi_k}^{(j,\mathcal P,\mathcal C),[\ell+1]},
	\bm\varXi_{\pi_k}^{\bm\varGamma,(j,\mathcal P,\mathcal C),[\ell+1]}
	\right)
	\end{aligned}
	\label{eq:C6_r_traj_posterior}
	\end{equation}
	
	The posterior mean vector and covariance matrix are respectively given by

	\begin{equation}
	\begin{aligned}
	&\hat{\bm\varGamma}_{\pi_k}^{(j,\mathcal P,\mathcal C),[\ell+1]}=
	\hat{\bm\varGamma}_{\omega_k}^{(j)}
	+
	\bm K_{\pi_k}^{(j,\mathcal P,\mathcal C),[\ell+1]}
	\left[
	\bar{\bm y}_{\pi_k}^{(j,\mathcal P,\mathcal C),[\ell]}
	-
	\dot{\bm H}_k^{(j)}
	\hat{\bm\varGamma}_{\omega_k}^{(j)}
	\right]
	\end{aligned}
	\label{eq:C6_r_traj_mean_update}
	\end{equation}
	\begin{equation}
	\begin{aligned}
	&\bm\varXi_{\pi_k}^{\bm\varGamma,(j,\mathcal P,\mathcal C),[\ell+1]}=
	\bm\varXi_{\omega_k}^{\bm\varGamma,(j)}
	-
	\bm K_{\pi_k}^{(j,\mathcal P,\mathcal C),[\ell+1]}
	\dot{\bm H}_k^{(j)}
	\bm\varXi_{\omega_k}^{\bm\varGamma,(j)}
	\end{aligned}
	\label{eq:C6_r_traj_cov_update}
	\end{equation}
	\noindent where
	\begin{equation}
	\dot{\bm H}_k^{(j)}
	=
	\bm H_k
	\left(
	\bm e_{n_{\omega_k}^{(j)}}^{\mathrm T}
	\otimes
	\bm I_{d_{\bm r}}
	\right)
	\label{eq:C6_terminal_measurement_matrix}
	\end{equation}
	\begin{equation}
	\begin{aligned}
	&\bm K_{\pi_k}^{(j,\mathcal P,\mathcal C),[\ell+1]}=
	\bm\varXi_{\omega_k}^{\bm\varGamma,(j)}
	(\dot{\bm H}_k^{(j)})^{\mathrm T}
	\left(
	\bm\varXi_{\pi_k}^{\bar{\bm y},(j,\mathcal P,\mathcal C),[\ell]}
	\right)^{-1}
	\end{aligned}
	\label{eq:C6_r_traj_gain}
	\end{equation}
	\begin{equation}
	\begin{aligned}
	&\bm\varXi_{\pi_k}^{\bar{\bm y},(j,\mathcal P,\mathcal C),[\ell]}=
	\dot{\bm H}_k^{(j)}
	\bm\varXi_{\omega_k}^{\bm\varGamma,(j)}
	(\dot{\bm H}_k^{(j)})^{\mathrm T}
	+
	\bm\varXi_{\pi_k}^{\bm\psi,(j,\mathcal P,\mathcal C),[\ell]}
	\end{aligned}
	\label{eq:C6_r_innovation_cov}
	\end{equation}
	\begin{equation}
	\begin{aligned}
	&\bar{\bm y}_{\pi_k}^{(j,\mathcal P,\mathcal C),[\ell]}
	=\frac{1}{M_k^{\mathcal P,\mathcal C}}
	\sum_{m=1}^{M_k^{\mathcal P,\mathcal C}}
	\hat{\bm y}_{\pi_k}^{(j,\mathcal P,\mathcal C),m,[\ell]}
	\end{aligned}
	\label{eq:C6_r_equivalent_measurement_mean}
	\end{equation}
	
	The equivalent measurement noise~$\bm\psi$ has covariance matrix
	\begin{equation}
	\bm\varXi_{\pi_k}^{\bm\psi,(j,\mathcal P,\mathcal C),[\ell]}
	=
	\left[
	M_k^{\mathcal P,\mathcal C}
	\bar{\bm G}_{\pi_k}^{(j,\mathcal P,\mathcal C),[\ell]}
	\right]^{-1}
	\label{eq:C6_r_equivalent_measurement_cov}
	\end{equation}
	\begin{equation}
	\begin{aligned}
	&\bar{\bm G}_{\pi_k}^{(j,\mathcal P,\mathcal C),[\ell]}\\
	&\quad=
	\mathbb E_{q_k^{\theta,(j,\mathcal P,\mathcal C),[\ell]}}
	\!\Bigg[
	\mathcal R\!\left(\theta^{n_{\omega_k}^{(j)}}\right)
	\operatorname{diag}\!\left(
	\begin{gathered}
	\iota_{1,\pi_k}^{(j,\mathcal P,\mathcal C),
		n_{\pi_k}^{(j)},[\ell]},\\
	\iota_{2,\pi_k}^{(j,\mathcal P,\mathcal C),
		n_{\pi_k}^{(j)},[\ell]}
	\end{gathered}
	\right)
	\mathcal R\!\left(-\theta^{n_{\omega_k}^{(j)}}\right)
	\Bigg]
	\end{aligned}
	\label{eq:C6_r_expected_precision}
	\end{equation}
	\begin{equation}
	\begin{aligned}
	\iota_{u,\pi_k}^{(j,\mathcal P,\mathcal C),n_{\pi_k}^{(j)},[\ell]}
	&=
	\mathbb E_{q_k^{\ddot l_u,(j,\mathcal P,\mathcal C),[\ell]}}
	\!\left[
	(\mathtt s\ddot l_u^{n_{\pi_k}^{(j)}})^{-1}
	\right]\\
	&\quad=
	\frac{
		a_{u,\pi_k}^{(j,\mathcal P,\mathcal C),n_{\pi_k}^{(j)}}
	}{
	\mathtt s
	b_{u,\pi_k}^{(j,\mathcal P,\mathcal C),n_{\pi_k}^{(j)},[\ell]}
},
\qquad u\in\{1,2\}
\end{aligned}
\label{eq:C6_axis_inv_expectation_l}
\end{equation}

Eq.~\eqref{eq:C6_r_expected_precision} is evaluated using the closed-form
result associated with Eq.~\eqref{eq:Gbar_def}. In this result,
$\iota_{u,k}^{[\ell]}$ is instantiated as
$\iota_{u,\pi_k}^{(j,\mathcal P,\mathcal C),n_{\pi_k}^{(j)},[\ell]}$, while the
mean and variance of the orientation factor are replaced by
$\hat\theta_{\pi_k}^{(j,\mathcal P,\mathcal C), n_{\pi_k}^{(j)},[\ell]}$ and $\varXi_{\pi_k}^{\theta,(j,\mathcal P,\mathcal C), n_{\pi_k}^{(j)},[\ell]}$, respectively.
The latter two quantities are extracted using
Eqs.~\eqref{eq:C6_terminal_theta_mean} and
\eqref{eq:C6_terminal_theta_statistics}.

The proof of \textbf{Proposition \ref{prop:C6_r_update}} is provided in Appendix~\ref{app:C6_r_update_proof}.

\end{proposition}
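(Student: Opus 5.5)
The proof is a standard CAVI computation: the optimal kinematic factor is proportional to the exponentiated expected log-joint, taken over all other factors. Concretely,
\[
\ln q_k^{\bm r,[\ell+1]}(\bm r^{1:n}) = \mathbb E_{-\bm r}\!\left[\ln p\!\left(\mathcal Z_k^{(\mathcal P,\mathcal C)},\mathcal Y_k^{(\mathcal P,\mathcal C)},\bm r^{1:n},\bm\theta^{1:n},\ddot l_1^{1:n},\ddot l_2^{1:n}\right)\right] + \text{const},
\]
with $n=n_{\omega_k}^{(j)}$. I will write the joint density following the single-time \DOAM construction of \cite[Eqs.~(8)--(12)]{Tuncer-model-2021}, lifted to the trajectory space. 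It is the product of:
\begin{itemize}
\item the predicted trajectory prior $\mathcal N(\bm r^{1:n};\hat{\bm\varGamma}_{\omega_k}^{(j)},\bm\varXi_{\omega_k}^{\bm\varGamma,(j)})$ from Eqs.~\eqref{eq:C6_r_mean_pred}--\eqref{eq:C6_r_cov_pred};
\item the orientation and SSAL sequence priors, which do not involve $\bm r^{1:n}$;
\item for each measurement, $\mathcal N(\bm y^m;\bm H_k\bm r^n,\mathcal S(\theta^n,\ddot l_1^n,\ddot l_2^n)\mathtt s)$ times $\mathcal N(\bm z^m;\bm y^m,\bm R_k)$.
\end{itemize}
The key structural observation is that the measurement enters only through the terminal state $\bm r^n=(\bm e_n^{\mathrm T}\otimes\bm I_{d_{\bm r}})\bm r^{1:n}$. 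Hence $\bm H_k\bm r^n=\dot{\bm H}_k^{(j)}\bm r^{1:n}$ with $\dot{\bm H}_k^{(j)}$ as in Eq.~\eqref{eq:C6_terminal_measurement_matrix}.

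Next I collect the terms that depend on $\bm r^{1:n}$. From the measurement-source factor these are $-\tfrac12\sum_m\mathbb E[(\bm y^m-\dot{\bm H}\bm r^{1:n})^{\mathrm T}(\mathtt s\bm S^n)^{-1}(\bm y^m-\dot{\bm H}\bm r^{1:n})]$. Since $(\mathtt s\bm S^n)^{-1}=\mathcal R(\theta^n)\operatorname{diag}((\mathtt s\ddot l_1^n)^{-1},(\mathtt s\ddot l_2^n)^{-1})\mathcal R(-\theta^n)$ and the variational factors are independent, the expectation splits. Expectations over the SSAL factors give the terminal $\iota$'s of Eq.~\eqref{eq:C6_axis_inv_expectation_l}: the $n$-th marginal of the trajectory inverse-Gamma factor is used, extracted via Eqs.~\eqref{eq:C6_terminal_axis_shape}--\eqref{eq:C6_terminal_axis_statistics}. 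The expectation over the orientation factor then gives $\bar{\bm G}_{\pi_k}^{[\ell]}$ in Eq.~\eqref{eq:C6_r_expected_precision}. Here only the terminal marginal $\mathcal N(\theta^n;\hat\theta^n,\varXi^{\theta,n})$ of the joint Gaussian matters, so the closed form of \cite[Corollary~1]{Tuncer-model-2021} applies verbatim. Replacing $\bm y^m$ by its variational mean $\hat{\bm y}^m$ contributes only constants in $\bm r^{1:n}$. Completing the square in $\dot{\bm H}\bm r^{1:n}$ gives
\[
\sum_m(\hat{\bm y}^m-\dot{\bm H}\bm r)^{\mathrm T}\bar{\bm G}(\hat{\bm y}^m-\dot{\bm H}\bm r) = (\bar{\bm y}-\dot{\bm H}\bm r)^{\mathrm T}(M\bar{\bm G})(\bar{\bm y}-\dot{\bm H}\bm r)+\text{const}.
\]
This is exactly the log of a single pseudo-measurement $\bar{\bm y}$ with noise covariance $\bm\varXi^{\bm\psi}=(M\bar{\bm G})^{-1}$, matching Eqs.~\eqref{eq:C6_r_equivalent_measurement_mean} and \eqref{eq:C6_r_equivalent_measurement_cov}.

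The log-factor is therefore a Gaussian prior quadratic plus a linear-Gaussian pseudo-likelihood quadratic in $\bm r^{1:n}$, so $q^{\bm r}$ is Gaussian. Its information form has precision $(\bm\varXi_{\omega}^{\bm\varGamma})^{-1}+\dot{\bm H}^{\mathrm T}M\bar{\bm G}\dot{\bm H}$. Applying the Woodbury identity converts this to the Kalman form of Eqs.~\eqref{eq:C6_r_traj_mean_update}, \eqref{eq:C6_r_traj_cov_update}, \eqref{eq:C6_r_traj_gain}, and \eqref{eq:C6_r_innovation_cov}. Because the prior covariance is the full trajectory covariance, the gain $\bm\varXi_{\omega}^{\bm\varGamma}\dot{\bm H}^{\mathrm T}$ propagates the terminal correction to the historical blocks, which is the smoothing effect.

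The main point to argue carefully, rather than compute, is the marginalization step. The expectation over the trajectory-level orientation and SSAL factors must reduce to their terminal marginals. This holds because the integrand depends only on $(\theta^n,\ddot l_1^n,\ddot l_2^n)$. For the SSAL sequence I must also note that the Beta--Bartlett coupling in Eq.~\eqref{eq:C6_SSAL_Markov_joint} does not involve $\bm r$, so it drops into the constant. Invertibility of $\bar{\bm G}$, needed for the positive definiteness of $\bm\varXi^{\bm\psi}$, follows from $\iota_u>0$ and rotation invariance of positive definiteness.
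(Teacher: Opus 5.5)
Your proposal is correct and follows essentially the same route as the paper's proof. Both apply CAVI to the expected log-joint, use the rotation identity to split the expected inverse extent-spread covariance into the terminal $\iota$'s and $\bar{\bm G}$, replace $\bm y^m$ by its variational mean up to a trace constant, add and subtract $\bar{\bm y}$ to obtain a single pseudo-measurement with covariance $(M\bar{\bm G})^{-1}$, and then use the information form with the matrix inversion lemma to reach the Kalman-type mean, covariance, and gain; your explicit remarks on the reduction to terminal marginals and on the Beta--Bartlett coupling dropping into the constant make precise what the paper leaves implicit.
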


~\par

\begin{proposition}[Variational Update of the SSAL State Sequences]
	\label{prop:C6_axis_update}
	
	Given the $j$-th predicted trajectory component and measurement cell~$(\mathcal P,\mathcal C)$, suppose that the variational posterior of the kinematic state sequence at the $(\ell+1)$-th iteration and those of the orientation state sequence and measurement-source variables at the $\ell$-th iteration are available. The smoothed marginals of the two posterior SSAL state sequences at all historical times remain inverse-Gamma distributions, i.e.,
	\begin{equation}
	\begin{aligned}
	&q_k^{\ddot l_u,(j,\mathcal P,\mathcal C),[\ell+1]}
	\!\left(\ddot l_u^h\right)=
	\mathcal{IG}\!\left(
	\ddot l_u^h;
	a_{u,\pi_k}^{(j,\mathcal P,\mathcal C),h},
	b_{u,\pi_k}^{(j,\mathcal P,\mathcal C),h,[\ell+1]}
	\right)\\
	&\hspace{35mm}
	h=1,\ldots,n_{\pi_k}^{(j)},
	\qquad u\in\{1,2\}
	\end{aligned}
	\label{eq:C6_axis_marginal_posterior}
	\end{equation}
	
	The posterior shape-parameter vector and reciprocal scale-parameter vector at the $(\ell+1)$-th iteration are respectively given by
	\begin{equation}
	\begin{aligned}
	&\bm A_{u,\pi_k}^{(j,\mathcal P,\mathcal C)}=
	\bm T_{n_{\pi_k}^{(j)}}^{-1}(\mathtt g)
	\left[
	\bm W_{n_{\pi_k}^{(j)}}(\mathtt g)
	\bm U_{n_{\pi_k}^{(j)}}(\mathtt g)
	\bm A_{u,\omega_k}^{(j)}
	+\frac{M_k^{\mathcal P,\mathcal C}}{2}
	\bm e_{n_{\pi_k}^{(j)}}
	\right]\\
	&\hspace{75mm} u\in\{1,2\}
	\end{aligned}
	\label{eq:C6_axis_matrix_a}
	\end{equation}
	\begin{equation}
	\begin{aligned}
	\breve{\bm B}_{u,\pi_k}^{(j,\mathcal P,\mathcal C),[\ell+1]}&=
	\bm T_{n_{\pi_k}^{(j)}}^{-1}(\mathtt g)
	\left\{
	(1-\mathtt g)
	\bm\varPi_{n_{\pi_k}^{(j)}}
	\bm U_{n_{\pi_k}^{(j)}}(\mathtt g)
	\breve{\bm B}_{u,\omega_k}^{(j)}
	\right.\\
	&\left.
	\quad+
	\left[
	b_{u,\omega_k}^{(j),n_{\omega_k}^{(j)}}
	+
	\left[
	\widetilde{\bm C}_k^{(j,\mathcal P,\mathcal C),[\ell+1]}
	\right]_{uu}
	\right]^{-1}
	\bm e_{n_{\pi_k}^{(j)}}
	\right\}\\
	&\hspace{55mm} u\in\{1,2\}
	\end{aligned}
	\label{eq:C6_axis_matrix_eta}
	\end{equation}
	\noindent where
	\begin{equation}
	\begin{aligned}
	\breve{\bm B}_{u,\omega_k}^{(j)}
	=
	\begin{bmatrix}
	(b_{u,\omega_k}^{(j),1})^{-1}&
	\cdots&
	(b_{u,\omega_k}^{(j),n_{\omega_k}^{(j)}})^{-1}
	\end{bmatrix}^{\mathrm T},
	\quad u\in\{1,2\}
	\end{aligned}
	\label{eq:C6_axis_pred_invscale}
	\end{equation}
	\begin{equation}
		\begin{aligned}
			\widetilde{\bm C}_k^{(j,\mathcal P,\mathcal C),[\ell+1]}
			=
			&\mathbb E_{q_k^{\theta,(j,\mathcal P,\mathcal C),[\ell]}}\Bigl[
			\mathcal R\!\left(-\theta^{n_{\omega_k}^{(j)}}\right)\\[-1mm]
			&\qquad{}\times
			\bar{\bm C}_k^{(j,\mathcal P,\mathcal C),[\ell+1]}
			\mathcal R\!\left(\theta^{n_{\omega_k}^{(j)}}\right)
			\Bigr]
		\end{aligned}
		\label{eq:C6_axis_Bbar_update}
	\end{equation}
	\begin{equation}
	\begin{aligned}
	\bar{\bm C}_k^{(j,\mathcal P,\mathcal C),[\ell+1]}
	=
	\frac{1}{2\mathtt s}
	\sum_{m=1}^{M_k^{\mathcal P,\mathcal C}}
	\bm C_k^{(j,\mathcal P,\mathcal C),m,[\ell+1]}
	\end{aligned}
	\label{eq:C6_axis_Cbar_update}
	\end{equation}
	\begin{equation}
	\begin{aligned}
	&\bm C_k^{(j,\mathcal P,\mathcal C),m,[\ell+1]}=
	\left(
	\hat{\bm y}_{\pi_k}^{(j,\mathcal P,\mathcal C),m,[\ell]}
	-
	\dot{\bm H}_k^{(j)}
	\hat{\bm\varGamma}_{\pi_k}^{(j,\mathcal P,\mathcal C),[\ell+1]}
	\right)\\
	&\quad\times
	\left(
	\hat{\bm y}_{\pi_k}^{(j,\mathcal P,\mathcal C),m,[\ell]}
	-
	\dot{\bm H}_k^{(j)}
	\hat{\bm\varGamma}_{\pi_k}^{(j,\mathcal P,\mathcal C),[\ell+1]}
	\right)^{\mathrm T}\\
	&\quad+
	\dot{\bm H}_k^{(j)}
	\bm\varXi_{\pi_k}^{\bm\varGamma,(j,\mathcal P,\mathcal C),[\ell+1]}
	(\dot{\bm H}_k^{(j)})^{\mathrm T}
	+
	\bm\varXi_{\pi_k}^{\bm y,(j,\mathcal P,\mathcal C),m,[\ell]}
	\end{aligned}
	\label{eq:C6_axis_Cm_update}
	\end{equation}
	
	The posterior scale-parameter vector~$\bm B_{u,\pi_k}^{(j,\mathcal P,\mathcal C),[\ell+1]}$ is obtained by taking the elementwise reciprocal of~$\breve{\bm B}_{u,\pi_k}^{(j,\mathcal P,\mathcal C),[\ell+1]}$.
	
	In addition, the matrices~$\bm T_n(\mathtt g)$, $\bm W_n(\mathtt g)$,
	$\bm\varPi_n$, $\bm J_n$, and~$\bm U_n(\mathtt g)$ are auxiliary
	matrices used in Eqs.~\eqref{eq:C6_axis_matrix_a} and
	\eqref{eq:C6_axis_matrix_eta}. Their required instances are obtained
	by setting~$n=n_{\pi_k}^{(j)}$ in the following definitions:
	
	\begin{equation}
	\bm T_n(\mathtt g)
	=
	\bm I_n-\mathtt g\bm J_n
	\label{eq:C6_axis_transport_operator}
	\end{equation}
	\begin{equation}
	\bm W_n(\mathtt g)
	=
	(1-\mathtt g)\bm\varPi_n
	+\bm e_n\bm e_n^{\mathrm T}
	\label{eq:C6_axis_weight_operator}
	\end{equation}
	\begin{equation}
	\bm\varPi_n
	=
	\bm I_n-\bm e_n\bm e_n^{\mathrm T}
	\label{eq:C6_axis_projection_operator}
	\end{equation}
	\begin{equation}
	\bm J_n
	=
	\begin{cases}
	\bm 0_{1\times1}, & n=1\\
	\begin{bmatrix}
	\bm 0_{(n-1)\times1} & \bm I_{n-1}\\
	\bm 0_{1\times1} & \bm 0_{1\times(n-1)}
	\end{bmatrix}, & n\geq2
	\end{cases}
	\label{eq:C6_axis_shift_operator}
	\end{equation}
	\begin{equation}
	\bm U_n(\mathtt g)
	=
	\begin{cases}
	\bm I_1, & n=1\\
	\operatorname{blkdiag}\!\left(
	\bm W_{n-1}^{-1}(\mathtt g)
	\bm T_{n-1}(\mathtt g),1
	\right), & n\geq2
	\end{cases}
	\label{eq:C6_axis_recovery_operator}
	\end{equation}
	
	The proof of \textbf{Proposition \ref{prop:C6_axis_update}} is provided in Appendix~\ref{app:C6_axis_update_proof}.
	
\end{proposition}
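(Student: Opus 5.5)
The argument follows the usual CAVI route. First write the log optimal factor, $\ln q^{\ddot l_u,[\ell+1]}(\ddot l_u^{1:n}) = \mathbb E_{-\ddot l_u}[\ln p(\mathcal Z,\mathcal Y,\mathcal X)]+\text{const}$, with $n=n_{\pi_k}^{(j)}$. Only two terms depend on $\ddot l_u^{1:n}$.

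The first is the prior sequence term in Eq.~\eqref{eq:C6_SSAL_Markov_joint}, i.e., the Beta--Bartlett Markov chain. Its initial and filtered marginals are encoded by $\bm A_{u,\omega_k}^{(j)},\bm B_{u,\omega_k}^{(j)}$.

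The second is the measurement-source term. In the latent-variable form of \cite[Eqs.~(8)--(12)]{Tuncer-model-2021}, each $\bm y^m$ has Gaussian density with covariance $\mathtt s\bm S$ around $\dot{\bm H}_k^{(j)}\bm r^{1:n}$, and it involves only the terminal state $\ddot l_u^{n}$. Taking expectations over $q^{\bm r,[\ell+1]}$ (Proposition~\ref{prop:C6_r_update}), $q^{\bm y^m,[\ell]}$ and $q^{\theta,[\ell]}$ gives
\[
-\tfrac{M_k^{\mathcal P,\mathcal C}}{2}\ln\ddot l_u^{n}-\big[\widetilde{\bm C}_k^{(j,\mathcal P,\mathcal C),[\ell+1]}\big]_{uu}/\ddot l_u^{n}.
\]
Here $\bm C^m$ in Eq.~\eqref{eq:C6_axis_Cm_update} is the expected outer product of $\bm y^m-\dot{\bm H}_k^{(j)}\bm r^{1:n}$. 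It splits into a mean term plus the two covariance contributions, and the two factors are independent. The rotation sandwich in Eq.~\eqref{eq:C6_axis_Bbar_update} comes from $\operatorname{tr}[\mathcal R(\theta)\mathcal W^{-1}\mathcal R(-\theta)\bm C]=\sum_u[\mathcal R(-\theta)\bm C\mathcal R(\theta)]_{uu}/\ddot l_u$. This is exactly the single-time computation of \cite[Eq.~(25)]{Tuncer-model-2021} applied at the terminal index. The terminal filtered factor is therefore conjugate and inverse-Gamma, with shape $a_{u,\omega_k}^{n}+M_k^{\mathcal P,\mathcal C}/2$ and scale $b_{u,\omega_k}^{n}+[\widetilde{\bm C}]_{uu}$. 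This yields the $\bm e_n$ entries of Eqs.~\eqref{eq:C6_axis_matrix_a}--\eqref{eq:C6_axis_matrix_eta}.

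Next propagate the terminal correction backward with the Beta--Bartlett smoother of \cite{Kartal-2022-VS-ETT}, as referenced around Eq.~\eqref{eq:C6_SSAL_Markov_joint}. Its scalar backward recursion is written in inverse-scale form as
\[
\breve b^{h}_{\pi}=(1-\mathtt g)\breve b^{h}_{\mathrm f}+\mathtt g\,\breve b^{h+1}_{\pi},
\]
and the shapes analogously as $a^h_\pi=(1-\mathtt g)a^h_{\mathrm f}+\mathtt g a^{h+1}_\pi$. The subscript $\mathrm f$ denotes filtered (not smoothed) parameters at index $h$. The smoothed marginal at each $h$ is projected onto inverse-Gamma form, as stated after Eq.~\eqref{eq:C6_SSAL_Markov_joint}, which gives Eq.~\eqref{eq:C6_axis_marginal_posterior}. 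Stacking the recursion over $h=1,\ldots,n$ gives $\bm T_n(\mathtt g)\bm x_\pi=(1-\mathtt g)\bm\varPi_n\bm x_{\mathrm f}+(\text{terminal})\bm e_n$. The shift $\bm J_n$ encodes the $h+1$ coupling, and $\bm T_n=\bm I_n-\mathtt g\bm J_n$ is unit upper-triangular and hence invertible. The shape equation is handled the same way, with $\bm W_n(\mathtt g)$ collecting the $(1-\mathtt g)$ weights on historical rows and weight one on the terminal row.

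The main obstacle is expressing the filtered historical parameters in terms of the stored predicted vector. The stored entries at $h<n$ are themselves previously smoothed values, not filtered ones. I would prove by induction on $n$ that the stored smoothed vector of length $n-1$ equals $\bm T_{n-1}^{-1}\bm W_{n-1}\times(\text{filtered})$, so the filtered values are recovered by $\bm W_{n-1}^{-1}\bm T_{n-1}$. This is exactly the block $\bm U_n(\mathtt g)$. The base case $n=1$ gives $\bm U_1=\bm I_1$.

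The prediction step (Eqs.~\eqref{eq:C6_a_pred_hist}--\eqref{eq:C6_b_terminal_pred}) only appends a terminal entry and leaves history untouched, so the induction hypothesis survives prediction. Care is needed at the terminal row, where the predicted value is used directly, hence the trailing $1$ in $\bm U_n$. Substituting these pieces yields Eqs.~\eqref{eq:C6_axis_matrix_a}--\eqref{eq:C6_axis_matrix_eta}. The shape vector does not depend on $\ell$, because $M_k^{\mathcal P,\mathcal C}$ is fixed.
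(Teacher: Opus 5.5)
Your proposal matches the paper's proof essentially step for step. The measurement-source terms yield the terminal conjugate inverse-Gamma update through $\bm C_k^{(j,\mathcal P,\mathcal C),m,[\ell+1]}$ and the rotation sandwich defining $\widetilde{\bm C}_k^{(j,\mathcal P,\mathcal C),[\ell+1]}$, and the Beta--Bartlett backward recursion is then stacked using $\bm T_n(\mathtt g)$, $\bm W_n(\mathtt g)$, $\bm\varPi_n$ and the recovery block $\bm U_n(\mathtt g)$. The only difference is how you justify the recovery block: you obtain $\bm W_{n-1}^{-1}(\mathtt g)\bm T_{n-1}(\mathtt g)$ by induction on trajectory length, since the previous step's update writes the stored smoothed vector as $\bm T_{n-1}^{-1}(\mathtt g)\bm W_{n-1}(\mathtt g)$ times the filtered vector. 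The paper instead attributes this recovery to Kartal's backward recursion, so your version is a slightly more self-contained treatment of the same step.
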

~\par

\begin{proposition}[Variational Update of the Orientation State Sequence]
	\label{prop:C6_theta_update}
	
	Given the $j$-th predicted trajectory component and measurement
	cell~$(\mathcal P,\mathcal C)$, suppose that the variational posteriors of
	the kinematic state sequence and the two SSAL state sequences at the
	$(\ell+1)$-th iteration and those of the measurement-source variables at
	the $\ell$-th iteration are available.
	The following update uses a first-order expansion of the
	rotation matrix about the current orientation estimate. If
	$\zeta_{\pi_k}^{(j,\mathcal P,\mathcal C),[\ell+1]}>0$, the variational
	posterior of the orientation state sequence obtained at the
	$(\ell+1)$-th iteration is approximated as Gaussian, i.e.,
	\begin{equation}
	\begin{aligned}
	&q_k^{\theta,(j,\mathcal P,\mathcal C),[\ell+1]}
	\!\left(\bm\theta^{1:n_{\pi_k}^{(j)}}\right)\\
	&=
	\mathcal N\!\left(
	\bm\theta^{1:n_{\pi_k}^{(j)}};
	\hat{\bm\varTheta}_{\pi_k}^{(j,\mathcal P,\mathcal C),[\ell+1]},
	\bm\varXi_{\pi_k}^{\bm\varTheta,(j,\mathcal P,\mathcal C),[\ell+1]}
	\right)
	\end{aligned}
	\label{eq:C6_theta_traj_posterior}
	\end{equation}
	
	The posterior mean vector and covariance matrix are respectively given by
	\begin{equation}
	\begin{aligned}
	&\hat{\bm\varTheta}_{\pi_k}^{(j,\mathcal P,\mathcal C),[\ell+1]}\\
	&=
	\hat{\bm\varTheta}_{\omega_k}^{(j)}
	+\bm\varPsi_{\pi_k}^{(j,\mathcal P,\mathcal C),[\ell+1]}
	\left[
	\widetilde\theta_{\pi_k}^{(j,\mathcal P,\mathcal C),[\ell+1]}
	-\bm e_{n_{\omega_k}^{(j)}}^{\mathrm T}
	\hat{\bm\varTheta}_{\omega_k}^{(j)}
	\right]
	\end{aligned}
	\label{eq:C6_theta_traj_mean_update}
	\end{equation}
	\begin{equation}
	\begin{aligned}
	\bm\varXi_{\pi_k}^{\bm\varTheta,(j,\mathcal P,\mathcal C),[\ell+1]}=
	\bm\varXi_{\omega_k}^{\bm\varTheta,(j)}
	-\bm\varPsi_{\pi_k}^{(j,\mathcal P,\mathcal C),[\ell+1]}
	\bm e_{n_{\omega_k}^{(j)}}^{\mathrm T}
	\bm\varXi_{\omega_k}^{\bm\varTheta,(j)}
	\end{aligned}
	\label{eq:C6_theta_traj_cov_update}
	\end{equation}
	\noindent where
	\begin{equation}
	\begin{aligned}
	\bm\varPsi_{\pi_k}^{(j,\mathcal P,\mathcal C),[\ell+1]}=
	\bm\varXi_{\omega_k}^{\bm\varTheta,(j)}
	\bm e_{n_{\omega_k}^{(j)}}
	\left(
	\varXi_{\pi_k}^{\widetilde\theta,
		(j,\mathcal P,\mathcal C),[\ell+1]}
	\right)^{-1}
	\end{aligned}
	\label{eq:C6_theta_traj_gain}
	\end{equation}
	\begin{equation}
	\begin{aligned}
	\varXi_{\pi_k}^{\widetilde\theta,
		(j,\mathcal P,\mathcal C),[\ell+1]}=
	\bm e_{n_{\omega_k}^{(j)}}^{\mathrm T}
	\bm\varXi_{\omega_k}^{\bm\varTheta,(j)}
	\bm e_{n_{\omega_k}^{(j)}}
	+\left(
	\zeta_{\pi_k}^{(j,\mathcal P,\mathcal C),[\ell+1]}
	\right)^{-1}
	\end{aligned}
	\label{eq:C6_theta_innovation_variance}
	\end{equation}
	\begin{equation}
	\widetilde\theta_{\pi_k}^{(j,\mathcal P,\mathcal C),[\ell+1]}
	=
	\frac{
		\eta_{\pi_k}^{(j,\mathcal P,\mathcal C),[\ell+1]}
	}{
	\zeta_{\pi_k}^{(j,\mathcal P,\mathcal C),[\ell+1]}
}
\label{eq:C6_theta_equivalent_measurement}
\end{equation}
\begin{equation}
\resizebox{0.98\columnwidth}{!}{$
	\begin{aligned}
	\zeta_{\pi_k}^{(j,\mathcal P,\mathcal C),[\ell+1]}
	&=
	\sum_{m=1}^{M_k^{\mathcal P,\mathcal C}}
	\operatorname{tr}\!\Bigg[
	\bm\varPhi_{\pi_k}^{(j,\mathcal P,\mathcal C),[\ell+1]}
	\left[
	-\dot{\mathcal R}\!\left(
	-\hat\theta_{\pi_k}^{(j,\mathcal P,\mathcal C),
		n_{\pi_k}^{(j)},[\ell]}
	\right)
	\right]\\
	&\quad\times
	\bm C_k^{(j,\mathcal P,\mathcal C),m,[\ell+1]}
	\dot{\mathcal R}\!\left(
	\hat\theta_{\pi_k}^{(j,\mathcal P,\mathcal C),
		n_{\pi_k}^{(j)},[\ell]}
	\right)
	\Bigg]
	\end{aligned}
	$}
\label{eq:C6_theta_precision}
\end{equation}
\begin{equation}
\begin{aligned}
\eta_{\pi_k}^{(j,\mathcal P,\mathcal C),[\ell+1]}
&=
\hat\theta_{\pi_k}^{(j,\mathcal P,\mathcal C),
	n_{\pi_k}^{(j)},[\ell]}
\zeta_{\pi_k}^{(j,\mathcal P,\mathcal C),[\ell+1]}\\
&\quad-
\sum_{m=1}^{M_k^{\mathcal P,\mathcal C}}
\operatorname{tr}\!\Bigg[
\bm\varPhi_{\pi_k}^{(j,\mathcal P,\mathcal C),[\ell+1]}
\mathcal R\!\left(
-\hat\theta_{\pi_k}^{(j,\mathcal P,\mathcal C),
	n_{\pi_k}^{(j)},[\ell]}
\right)\\
&\quad\times
\bm C_k^{(j,\mathcal P,\mathcal C),m,[\ell+1]}
\dot{\mathcal R}\!\left(
\hat\theta_{\pi_k}^{(j,\mathcal P,\mathcal C),
	n_{\pi_k}^{(j)},[\ell]}
\right)
\Bigg]
\end{aligned}
\label{eq:C6_theta_linear_parameter}
\end{equation}
\begin{equation}
\begin{aligned}
&\bm\varPhi_{\pi_k}^{(j,\mathcal P,\mathcal C),[\ell+1]}=\operatorname{diag}\!\left(
\iota_{1,\pi_k}^{(j,\mathcal P,\mathcal C),
	n_{\pi_k}^{(j)},[\ell+1]},
\iota_{2,\pi_k}^{(j,\mathcal P,\mathcal C),
	n_{\pi_k}^{(j)},[\ell+1]}
\right)
\end{aligned}
\label{eq:C6_theta_axis_precision}
\end{equation}
\begin{equation}
\dot{\mathcal R}(\theta)
=
\frac{\partial\mathcal R(\theta)}{\partial\theta}
=
\begin{bmatrix}
-\sin\theta&-\cos\theta\\
\cos\theta&-\sin\theta
\end{bmatrix}
\label{eq:C6_rotation_derivative}
\end{equation}

If~$\zeta_{\pi_k}^{(j,\mathcal P,\mathcal C),[\ell+1]}\leq0$,
the variational posterior parameters of the orientation state sequence
from the preceding iteration remain unchanged.

The proof of \textbf{Proposition \ref{prop:C6_theta_update}} is provided in Appendix~\ref{app:C6_theta_update_proof}.

\end{proposition}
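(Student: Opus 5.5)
The proof is a CAVI derivation. The optimal orientation factor satisfies $\ln q^{\theta,[\ell+1]}(\bm\theta^{1:n})=\mathbb E_{-\theta}[\ln p(\text{joint})]+\text{const}$, with the expectation taken over the kinematic and SSAL factors at iteration $\ell+1$ and the measurement-source factors at iteration $\ell$. In the joint density, $\bm\theta^{1:n}$ enters only through the predicted Gaussian prior $\mathcal N(\bm\theta^{1:n};\hat{\bm\varTheta}_{\omega_k},\bm\varXi^{\bm\varTheta}_{\omega_k})$, which comes from Eqs.~\eqref{eq:C6_theta_mean_pred}--\eqref{eq:C6_theta_cov_pred}, and through the terminal-time source model $\bm y^m\sim\mathcal N(\dot{\bm H}_k\bm r^{1:n},\mathtt s\bm S^{n})$ of \cite{Tuncer-model-2021}. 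The measurement likelihood depends on $\theta^{n}$ only, so only $\bm e_n^{\mathrm T}\bm\theta^{1:n}$ is observed. Taking expectations of the source term gives, up to constants,
\begin{equation*}
-\tfrac12\sum_{m}\operatorname{tr}\!\big[\mathcal R(\theta^n)\bm\varPhi\mathcal R(-\theta^n)\bm C_k^{m,[\ell+1]}\big],
\end{equation*}
where the $\ln|\bm S^n|$ term is $\theta$-free because $|\mathcal R|=1$. Here $\bm\varPhi$ is given by Eq.~\eqref{eq:C6_theta_axis_precision}, and $\bm C_k^{m,[\ell+1]}$ from Eq.~\eqref{eq:C6_axis_Cm_update} collects the second moment of $\bm y^m-\dot{\bm H}_k\bm r^{1:n}$. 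This expression follows the same computation used for the SSAL factor in Proposition~\ref{prop:C6_axis_update}.

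The exponent is not quadratic in $\theta^n$, so I will linearize. Set $\hat\theta=\hat\theta^{n,[\ell]}$ and write $\mathcal R(\theta^n)\approx\mathcal R(\hat\theta)+\dot{\mathcal R}(\hat\theta)(\theta^n-\hat\theta)$, and similarly for $\mathcal R(-\theta^n)$. Substituting and keeping terms up to second order in $\delta=\theta^n-\hat\theta$ gives $-\tfrac12\zeta\delta^2-c\,\delta+\text{const}$. The quadratic coefficient $\zeta$ is built from $\dot{\mathcal R}\bm\varPhi\dot{\mathcal R}^{\mathrm T}$ against $\bm C^m$, and using $\dot{\mathcal R}(-\theta)=-\dot{\mathcal R}(\theta)^{\mathrm T}$ together with cyclicity of the trace puts it in the form of Eq.~\eqref{eq:C6_theta_precision}. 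The linear coefficient $c$ comes from the cross terms between $\mathcal R(\hat\theta)$ and $\dot{\mathcal R}(\hat\theta)$, and the symmetry of $\bm C^m$ merges the two cross terms into the single trace in Eq.~\eqref{eq:C6_theta_linear_parameter}. Completing the square gives $-\tfrac{\zeta}{2}(\theta^n-\eta/\zeta)^2$ with $\eta=\hat\theta\zeta-c$. This acts as a scalar pseudo-measurement $\widetilde\theta=\eta/\zeta$ of $\bm e_n^{\mathrm T}\bm\theta^{1:n}$ with noise variance $\zeta^{-1}$, which is a valid Gaussian likelihood only when $\zeta>0$. This explains the case split: if $\zeta\le0$ the surrogate is not log-concave, and the parameters from the previous iteration are kept.

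Multiplying the Gaussian prior on $\bm\theta^{1:n}$ by this scalar linear-Gaussian pseudo-likelihood with observation row $\bm e_n^{\mathrm T}$ yields a Gaussian posterior over the whole sequence. The standard Kalman/Schur-complement update then gives Eqs.~\eqref{eq:C6_theta_traj_mean_update}--\eqref{eq:C6_theta_innovation_variance} with gain $\bm\varPsi$. Because the gain is the full column $\bm\varXi^{\bm\varTheta}_{\omega_k}\bm e_n$, the historical orientations are corrected through their prior cross-covariances.

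I expect the main obstacle to be the second-order bookkeeping of the linearized trace. Specifically, I will need to decide which products of first-order expansions are kept and check that the signs and transposes match Eq.~\eqref{eq:C6_theta_precision} exactly. I will handle this by writing $\mathcal R(\theta)$ as $\cos\theta\,\bm I+\sin\theta\,\bm E$ with $\bm E$ skew-symmetric, which makes $\mathcal R$ and $\dot{\mathcal R}$ commute and keeps the trace identities explicit.
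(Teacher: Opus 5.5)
Your proposal matches the paper's proof essentially step for step. The CAVI log-update is reduced, using $\det\mathcal R(\theta)=1$, to $-\tfrac12\sum_m\operatorname{tr}[\bm\varPhi\,\mathcal R(-\theta^{n})\bm C_k^{m}\mathcal R(\theta^{n})]$. Each rotation is then expanded separately to first order about $\hat\theta^{n,[\ell]}$, and the two linear cross terms are merged via $-\dot{\mathcal R}(-\theta)=\dot{\mathcal R}(\theta)^{\mathrm T}$ and the symmetry of $\bm C_k^{m}$ and $\bm\varPhi$. Completing the square gives the scalar pseudo-measurement $\widetilde\theta=\eta/\zeta$ of $\bm e_n^{\mathrm T}\bm\theta^{1:n}$, with your $\eta=\hat\theta\zeta-c$ agreeing with the paper's sign conventions. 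Finally, the Gaussian sequence prior is updated in information form and converted to the gain form, exactly as in the paper; the $\cos\theta\,\bm I+\sin\theta\,\bm E$ representation is only a bookkeeping convenience.
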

~\par

\begin{proposition}[Variational Update of the Measurement-Source Variables]
	\label{prop:C6_y_update}
	
	Given the $j$-th predicted trajectory component and measurement
	cell~$(\mathcal P,\mathcal C)$, suppose that the variational posteriors of
	the kinematic, orientation, and two SSAL state sequences at the
	$(\ell+1)$-th iteration are available. The variational posteriors of the
	measurement-source variables in the cell are mutually independent and
	remain Gaussian, i.e.,
	
	\begin{equation}
	\begin{aligned}
	&q_k^{\bm y^m,(j,\mathcal P,\mathcal C),[\ell+1]}
	\!\left(\bm y_k^{\mathcal P,\mathcal C,m}\right)\\
	&=
	\mathcal N\!\left(
	\bm y_k^{\mathcal P,\mathcal C,m};
	\hat{\bm y}_{\pi_k}^{(j,\mathcal P,\mathcal C),m,[\ell+1]},
	\bm\varXi_{\pi_k}^{\bm y,(j,\mathcal P,\mathcal C),m,[\ell+1]}
	\right)\\
	&\hspace{45mm}
	m=1,\ldots,M_k^{\mathcal P,\mathcal C}
	\end{aligned}
	\label{eq:C6_y_posterior}
	\end{equation}
	
	Their posterior mean vectors and covariance matrices are respectively
	given by
	\begin{equation}
	\begin{aligned}
	&\hat{\bm y}_{\pi_k}^{(j,\mathcal P,\mathcal C),m,[\ell+1]}\\
	&=
	\bm\varXi_{\pi_k}^{\bm y,(j,\mathcal P,\mathcal C),m,[\ell+1]}
	\left[
	\bm R_k^{-1}\bm z_k^{\mathcal P,\mathcal C,m}
	\right.\\
	&\hspace{15mm}\left.
	+\bar{\bm G}_{\pi_k}^{(j,\mathcal P,\mathcal C),[\ell+1]}
	\dot{\bm H}_k^{(j)}
	\hat{\bm\varGamma}_{\pi_k}^{(j,\mathcal P,\mathcal C),[\ell+1]}
	\right]
	\end{aligned}
	\label{eq:C6_y_mean_update}
	\end{equation}
	\begin{equation}
	\begin{aligned}
	\bm\varXi_{\pi_k}^{\bm y,
		(j,\mathcal P,\mathcal C),m,[\ell+1]}=
	\left[
	\bm R_k^{-1}
	+\bar{\bm G}_{\pi_k}^{(j,\mathcal P,\mathcal C),[\ell+1]}
	\right]^{-1}
	\end{aligned}
	\label{eq:C6_y_cov_update}
	\end{equation}
	
	The proof of \textbf{Proposition \ref{prop:C6_y_update}} is provided in Appendix~\ref{app:C6_y_update_proof}.
	
\end{proposition}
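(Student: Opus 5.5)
The plan is to apply the standard CAVI optimality condition to the factor $q_k^{\bm y^m,(j,\mathcal P,\mathcal C)}$ in the structured factorization Eq.~\eqref{eq:C6_traj_vb_factor}. The condition is
\[
\ln q_k^{\bm y^m,(j,\mathcal P,\mathcal C),[\ell+1]}\!\left(\bm y_k^{\mathcal P,\mathcal C,m}\right)
=\mathbb E_{-\bm y^m}\!\left[\ln p\!\left(\mathcal Z_k^{(\mathcal P,\mathcal C)},\mathcal Y_k^{(\mathcal P,\mathcal C)},\mathcal X\right)\right]+\mathrm{const},
\]
where the expectation is taken with respect to the $(\ell+1)$-th factors of the kinematic, orientation and SSAL sequences. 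The other source variables, if present, contribute only through separate additive terms, as explained next.

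\textbf{Step 1: joint log-density.} I will first write the joint log-density with the hierarchical decomposition behind Eq.~\eqref{eq:z_measurement_model}, following \cite[Eqs.~(8)--(12)]{Tuncer-model-2021}. In that decomposition,
\[
\bm z_k^{\mathcal P,\mathcal C,m}\mid\bm y_k^{\mathcal P,\mathcal C,m}\sim\mathcal N(\bm y_k^{\mathcal P,\mathcal C,m},\bm R_k),
\qquad
\bm y_k^{\mathcal P,\mathcal C,m}\mid\mathcal X\sim\mathcal N(\dot{\bm H}_k^{(j)}\bm r^{1:n},\,\mathtt s\bm S^{n}).
\]
Here $\bm S^{n}=\mathcal S(\theta^{n},\ddot l_1^{n},\ddot l_2^{n})$ involves only the terminal states, and the selection by $\dot{\bm H}_k^{(j)}$ follows Eq.~\eqref{eq:C6_terminal_measurement_matrix}. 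Since the measurements are conditionally independent, the log-joint is a sum over $m$ of terms each involving a single $\bm y_k^{\mathcal P,\mathcal C,m}$, plus terms free of every $\mathcal Y$. This gives the mutual independence of the source posteriors, and I only need to treat a fixed $m$. The historical states and the Beta--Bartlett transition terms do not involve $\bm y$, so they are absorbed into the constant.

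\textbf{Step 2: expectation under the factorized posterior.} The relevant terms are
\[
-\tfrac12(\bm z-\bm y)^{\mathrm T}\bm R_k^{-1}(\bm z-\bm y)
-\tfrac12(\bm y-\dot{\bm H}_k^{(j)}\bm r)^{\mathrm T}(\mathtt s\bm S^n)^{-1}(\bm y-\dot{\bm H}_k^{(j)}\bm r)
-\tfrac12\ln|\mathtt s\bm S^n|.
\]
The log-determinant does not depend on $\bm y$ and is dropped. Because $\mathcal R$ is orthogonal,
\[
(\mathtt s\bm S^n)^{-1}=\mathcal R(\theta^n)\operatorname{diag}\!\left((\mathtt s\ddot l_1^n)^{-1},(\mathtt s\ddot l_2^n)^{-1}\right)\mathcal R(-\theta^n).
\]
The kinematic, orientation and SSAL factors are mutually independent under $q$, so the expectation of the quadratic form factorizes in three stages:
\begin{enumerate}
\item Averaging over the SSAL factors turns the diagonal entries into $\iota_{u,\pi_k}^{(j,\mathcal P,\mathcal C),n,[\ell+1]}$, as in Eq.~\eqref{eq:C6_axis_inv_expectation_l}.
\item Averaging over the orientation factor turns the rotated matrix into $\bar{\bm G}_{\pi_k}^{(j,\mathcal P,\mathcal C),[\ell+1]}$, as in Eq.~\eqref{eq:C6_r_expected_precision}, with terminal orientation statistics taken from Eqs.~\eqref{eq:C6_terminal_theta_mean}--\eqref{eq:C6_terminal_theta_statistics}.
\item Averaging over the Gaussian kinematic factor gives $(\bm y-\dot{\bm H}_k^{(j)}\hat{\bm\varGamma}_{\pi_k}^{[\ell+1]})^{\mathrm T}\bar{\bm G}(\bm y-\dot{\bm H}_k^{(j)}\hat{\bm\varGamma}_{\pi_k}^{[\ell+1]})$ plus the term $\operatorname{tr}(\bar{\bm G}\dot{\bm H}_k^{(j)}\bm\varXi_{\pi_k}^{\bm\varGamma,[\ell+1]}(\dot{\bm H}_k^{(j)})^{\mathrm T})$, which is free of $\bm y$.
\end{enumerate}

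\textbf{Step 3: complete the square.} The expected log-density is quadratic in $\bm y$ with precision $\bm R_k^{-1}+\bar{\bm G}$ and linear coefficient $\bm R_k^{-1}\bm z_k^{\mathcal P,\mathcal C,m}+\bar{\bm G}\dot{\bm H}_k^{(j)}\hat{\bm\varGamma}_{\pi_k}^{[\ell+1]}$. Completing the square therefore yields the Gaussian in Eq.~\eqref{eq:C6_y_posterior} with moments Eqs.~\eqref{eq:C6_y_mean_update}--\eqref{eq:C6_y_cov_update}. The precision is positive definite because $\bm R_k^{-1}\succ0$ and $\bar{\bm G}\succeq0$, as $\bar{\bm G}$ is an expectation of positive-definite matrices.

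\textbf{Main obstacle.} The delicate step is justifying the stagewise expectation in Step 2. Because the factorization separates $\bm r$, $\bm\theta$ and $\ddot l_u$, the expectation of the product of a $\theta$-dependent matrix, an $\ddot l$-dependent diagonal and an $\bm r$-dependent outer product splits exactly. I also need to check that only the terminal marginals of the trajectory factors are required, so that the trajectory-level factors reduce to the single-time computation of \cite{Tuncer-model-2021}.
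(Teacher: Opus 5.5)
Your proposal is correct and follows essentially the same route as the paper's proof in Appendix~\ref{app:C6_y_update_proof}. Both apply the CAVI rule to the $m$-th source factor, split the joint into $p(\bm z\mid\bm y)$ and the terminal-state source density, replace the expected inverse extent-spread covariance by $\bar{\bm G}_{\pi_k}^{(j,\mathcal P,\mathcal C),[\ell+1]}$ and $\bm r$ by $\hat{\bm\varGamma}_{\pi_k}^{(j,\mathcal P,\mathcal C),[\ell+1]}$, and read the Gaussian moments off the information form; the only cosmetic difference is that you average the quadratic form through a trace term instead of expanding it into three separate terms.
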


~\par

\textbf{Propositions \ref{prop:C6_r_update}--\ref{prop:C6_y_update}} jointly
constitute one complete CAVI iteration for the $j$-th predicted
trajectory component under measurement cell~$(\mathcal P,\mathcal C)$.
To determine whether the variational iteration has converged, the
maximum relative change in the variational posterior statistical
parameters between two consecutive iterations is adopted as the
variational Bayes (VB) convergence criterion. Let~$\{\bm\chi_s^{[\ell]}\}_{s=1}^{N_q}$ denote
the parameter blocks formed by vectorizing the variational posterior
statistical parameters at the $\ell$-th iteration, and define
\begin{equation}
E_{\mathsf{VB}}^{[\ell]}
=
\max_{s=1,\ldots,N_q}
\frac{
	\left\|
	\bm\chi_s^{[\ell]}-\bm\chi_s^{[\ell-1]}
	\right\|_2
}{
\max\!\left(
\left\|\bm\chi_s^{[\ell-1]}\right\|_2,
\mathtt c_{\slashed{0}}
\right)
}
\label{eq:C6_VB_convergence}
\end{equation}
\noindent where~$N_q$ is the number of statistical-parameter blocks, and
$\mathtt c_{\slashed{0}}$ is a positive constant introduced to prevent
a zero denominator. The CAVI recursion terminates when the number of
iterations reaches the prescribed maximum~$\mathtt N_{\max}^{\mathsf{VB}}$
or when~$E_{\mathsf{VB}}^{[\ell]}\leq\mathtt T_{\mathsf{VB}}$, where
$\mathtt T_{\mathsf{VB}}$ is the convergence threshold. Substituting the
statistical parameters obtained at termination into
Eq.~\eqref{eq:C6_DpiD_expand} determines the DGDIG state-density
parameters of the posterior trajectory component associated with
measurement cell~$(\mathcal P,\mathcal C)$. The CAVI recursion for a
predicted-trajectory-component--measurement-cell pair is summarized in
Algorithm~\ref{alg:C6_DOAM_CAVI}.

\begin{algorithm}[!t]
	\caption{CAVI Recursion for a Predicted Trajectory
		Component--Measurement-Cell Pair}
	\label{alg:C6_DOAM_CAVI}
	\renewcommand{\algorithmicrequire}{\textbf{Input:}}
	\renewcommand{\algorithmicensure}{\textbf{Output:}}
	\begin{algorithmic}[1]
		\REQUIRE DGDIG state-density parameters of the $j$-th predicted
		trajectory component; measurement cell~$(\mathcal P,\mathcal C)$;
		maximum number of variational iterations~$\mathtt N_{\max}^{\mathsf{VB}}$;
		convergence threshold~$\mathtt T_{\mathsf{VB}}$
		\ENSURE DGDIG state-density parameters of the corresponding posterior
		trajectory component
		\STATE Set~$\ell\gets0$
		\STATE $\hat{\bm\varGamma}_{\pi_k}^{(j,\mathcal P,\mathcal C),[0]}
		\gets\hat{\bm\varGamma}_{\omega_k}^{(j)}$
		\STATE $\bm\varXi_{\pi_k}^{\bm\varGamma,
			(j,\mathcal P,\mathcal C),[0]}
		\gets\bm\varXi_{\omega_k}^{\bm\varGamma,(j)}$
		\STATE $\hat{\bm\varTheta}_{\pi_k}^{(j,\mathcal P,\mathcal C),[0]}
		\gets\hat{\bm\varTheta}_{\omega_k}^{(j)}$
		\STATE $\bm\varXi_{\pi_k}^{\bm\varTheta,
			(j,\mathcal P,\mathcal C),[0]}
		\gets\bm\varXi_{\omega_k}^{\bm\varTheta,(j)}$
		\STATE $\bm B_{u,\pi_k}^{(j,\mathcal P,\mathcal C),[0]}
		\gets\bm B_{u,\omega_k}^{(j)}$, $u\in\{1,2\}$
		\FOR{$m=1,\ldots,M_k^{\mathcal P,\mathcal C}$}
		\STATE $\hat{\bm y}_{\pi_k}^{(j,\mathcal P,\mathcal C),m,[0]}
		\gets\bm z_k^{\mathcal P,\mathcal C,m}$
		\STATE $\bm\varXi_{\pi_k}^{\bm y,
			(j,\mathcal P,\mathcal C),m,[0]}
		\gets\bm R_k$
		\ENDFOR
		\STATE Compute~$\bm A_{u,\pi_k}^{(j,\mathcal P,\mathcal C)}$ using
		Proposition~\ref{prop:C6_axis_update} and
		Eq.~\eqref{eq:C6_axis_matrix_a}, $u\in\{1,2\}$
		\REPEAT
		\STATE Update the kinematic state sequence using
		Proposition~\ref{prop:C6_r_update} and
		Eqs.~\eqref{eq:C6_r_traj_mean_update} and
		\eqref{eq:C6_r_traj_cov_update}
		\STATE Update~$\bm B_{u,\pi_k}^{(j,\mathcal P,\mathcal C),[\ell+1]}$
		using Proposition~\ref{prop:C6_axis_update} and
		Eq.~\eqref{eq:C6_axis_matrix_eta}, $u\in\{1,2\}$
		\STATE Update the orientation state sequence using
		Proposition~\ref{prop:C6_theta_update} and
		Eqs.~\eqref{eq:C6_theta_traj_mean_update} and
		\eqref{eq:C6_theta_traj_cov_update}
		\FOR{$m=1,\ldots,M_k^{\mathcal P,\mathcal C}$}
		\STATE Update the $m$-th measurement-source variable using
		Proposition~\ref{prop:C6_y_update} and
		Eqs.~\eqref{eq:C6_y_mean_update} and~\eqref{eq:C6_y_cov_update}
		\ENDFOR
		\STATE Compute~$E_{\mathsf{VB}}^{[\ell+1]}$ using
		Eq.~\eqref{eq:C6_VB_convergence}
		\STATE Set~$\ell\gets\ell+1$
		\UNTIL{$\ell=\mathtt N_{\max}^{\mathsf{VB}}$ or
			$E_{\mathsf{VB}}^{[\ell]}\leq\mathtt T_{\mathsf{VB}}$}
	\end{algorithmic}
\end{algorithm}

Furthermore, the weight~$w_{\pi_k}^{(j,\mathcal P,\mathcal C)}$ of each
posterior trajectory component is given by
Proposition~\ref{prop:C6_component_weight}.

\begin{proposition}[Trajectory-Component Weight Update]
	\label{prop:C6_component_weight}
	
	Given the $j$-th predicted trajectory component and measurement
	cell~$(\mathcal P,\mathcal C)$, the predicted DGDIG state density at the
	terminal trajectory time is
	\begin{equation}
	\begin{aligned}
	&p_{\omega_k}^{(j),n_{\omega_k}^{(j)}}
	\!\left(
	\bm r_k,\theta_k,\ddot l_{1,k},\ddot l_{2,k}
	\right)\\
	&=
	\mathcal N\!\left(
	\bm r_k;
	\hat{\bm r}_{\omega_k}^{(j),n_{\omega_k}^{(j)}},
	\bm\varXi_{\omega_k}^{\bm r,(j),n_{\omega_k}^{(j)}}
	\right)
	\mathcal N\!\left(
	\theta_k;
	\hat\theta_{\omega_k}^{(j),n_{\omega_k}^{(j)}},
	\varXi_{\omega_k}^{\theta,(j),n_{\omega_k}^{(j)}}
	\right)\\
	&\quad\times
	\prod_{u=1}^{2}
	\mathcal{IG}\!\left(
	\ddot l_{u,k};
	a_{u,\omega_k}^{(j),n_{\omega_k}^{(j)}},
	b_{u,\omega_k}^{(j),n_{\omega_k}^{(j)}}
	\right)
	\end{aligned}
	\label{eq:C6_terminal_pred_density}
	\end{equation}
	
	The terminal kinematic and
	orientation statistics are obtained using
	Eqs.~\eqref{eq:C6_terminal_r_mean}--\eqref{eq:C6_terminal_theta_statistics}.
	The quantities~$a_{u,\omega_k}^{(j),n_{\omega_k}^{(j)}}$ and
	$b_{u,\omega_k}^{(j),n_{\omega_k}^{(j)}}$ are respectively the terminal
	shape and scale parameters of the $u$-th predicted SSAL state sequence.
	
	The measurement likelihood of this predicted trajectory component with respect to measurement cell~$(\mathcal P,\mathcal C)$ is approximated by
	\begin{equation}
	\begin{aligned}
	&\mathcal L_{\omega_k}^{(j,\mathcal P,\mathcal C)}\approx
	\prod_{m=1}^{M_k^{\mathcal P,\mathcal C}}
	\mathcal N\!\left(
	\bm z_k^{\mathcal P,\mathcal C,m};
	\bm H_k\hat{\bm r}_{\omega_k}^{(j),n_{\omega_k}^{(j)}},
	\bm\varXi_{\omega_k}^{\bm z,(j),n_{\omega_k}^{(j)}}
	\right)
	\end{aligned}
	\label{eq:C6_DOAM_component_likelihood}
	\end{equation}
	\noindent where
	\begin{equation}
	\begin{aligned}
	\bm\varXi_{\omega_k}^{\bm z,(j),n_{\omega_k}^{(j)}}
	=\bm H_k
	\bm\varXi_{\omega_k}^{\bm r,(j),n_{\omega_k}^{(j)}}
	\bm H_k^{\mathrm T}+
	\bar{\bm S}_{\omega_k}^{(j),n_{\omega_k}^{(j)}}
	+\bm R_k
	\end{aligned}
	\label{eq:C6_DOAM_measurement_cov}
	\end{equation}
	\begin{equation}
	\begin{aligned}
	\overline{\ddot l}_{u,\omega_k}^{(j),n_{\omega_k}^{(j)}}
	&=
	\mathbb E_{p_{\omega_k}^{(j),n_{\omega_k}^{(j)}}}
	\!\left[\ddot l_{u,k}\right]=
	\frac{
		b_{u,\omega_k}^{(j),n_{\omega_k}^{(j)}}
	}{
	a_{u,\omega_k}^{(j),n_{\omega_k}^{(j)}}-1
},
\qquad u\in\{1,2\}
\end{aligned}
\label{eq:C6_DOAM_axis_prediction_mean}
\end{equation}
\begin{equation}
	\begin{aligned}
		\bar{\bm S}_{\omega_k}^{(j),n_{\omega_k}^{(j)}}
		&=
		\mathtt s\,
		\mathbb E_{p_{\omega_k}^{(j),n_{\omega_k}^{(j)}}}
		\!\left[
		\mathcal R\!\left(
		\theta^{n_{\omega_k}^{(j)}}
		\right)
		\mathcal W\!\left(
		\ddot l_1^{n_{\omega_k}^{(j)}},
		\ddot l_2^{n_{\omega_k}^{(j)}}
		\right)
		\mathcal R\!\left(
		-\theta^{n_{\omega_k}^{(j)}}
		\right)
		\right]
	\end{aligned}
	\label{eq:C6_DOAM_expected_extent_def}
\end{equation}

For compactness, define
\begin{equation}
\overline{\ddot l}_{\Sigma,\omega_k}^{(j),n_{\omega_k}^{(j)}}
=
\overline{\ddot l}_{1,\omega_k}^{(j),n_{\omega_k}^{(j)}}
+
\overline{\ddot l}_{2,\omega_k}^{(j),n_{\omega_k}^{(j)}}
\label{eq:C6_DOAM_axis_sum}
\end{equation}
\begin{equation}
\overline{\ddot l}_{\Delta,\omega_k}^{(j),n_{\omega_k}^{(j)}}
=
\overline{\ddot l}_{1,\omega_k}^{(j),n_{\omega_k}^{(j)}}
-
\overline{\ddot l}_{2,\omega_k}^{(j),n_{\omega_k}^{(j)}}
\label{eq:C6_DOAM_axis_difference}
\end{equation}
\begin{equation}
\varrho_{c,\omega_k}^{(j),n_{\omega_k}^{(j)}}
=
\cos\!\left(
2\hat\theta_{\omega_k}^{(j),n_{\omega_k}^{(j)}}
\right)
\mathrm e^{-2\varXi_{\omega_k}^{\theta,(j),n_{\omega_k}^{(j)}}}
\label{eq:C6_DOAM_orientation_cosine_moment}
\end{equation}
\begin{equation}
\varrho_{s,\omega_k}^{(j),n_{\omega_k}^{(j)}}
=
\sin\!\left(
2\hat\theta_{\omega_k}^{(j),n_{\omega_k}^{(j)}}
\right)
\mathrm e^{-2\varXi_{\omega_k}^{\theta,(j),n_{\omega_k}^{(j)}}}
\label{eq:C6_DOAM_orientation_sine_moment}
\end{equation}

Then the expected extent-spread covariance matrix is
\begin{equation}
\begin{aligned}
&\bar{\bm S}_{\omega_k}^{(j),n_{\omega_k}^{(j)}}\\
&=
\frac{\mathtt s}{2}
\begin{bmatrix}
\overline{\ddot l}_{\Sigma,\omega_k}^{(j),n_{\omega_k}^{(j)}}
+\overline{\ddot l}_{\Delta,\omega_k}^{(j),n_{\omega_k}^{(j)}}
\varrho_{c,\omega_k}^{(j),n_{\omega_k}^{(j)}}
&
\overline{\ddot l}_{\Delta,\omega_k}^{(j),n_{\omega_k}^{(j)}}
\varrho_{s,\omega_k}^{(j),n_{\omega_k}^{(j)}}\\
\overline{\ddot l}_{\Delta,\omega_k}^{(j),n_{\omega_k}^{(j)}}
\varrho_{s,\omega_k}^{(j),n_{\omega_k}^{(j)}}
&
\overline{\ddot l}_{\Sigma,\omega_k}^{(j),n_{\omega_k}^{(j)}}
-\overline{\ddot l}_{\Delta,\omega_k}^{(j),n_{\omega_k}^{(j)}}
\varrho_{c,\omega_k}^{(j),n_{\omega_k}^{(j)}}
\end{bmatrix}
\end{aligned}
\label{eq:C6_DOAM_expected_extent}
\end{equation}

Define the normalized weight of the predicted trajectory component as
\begin{equation}
\bar w_{\omega_k}^{(j)}
=
\frac{
	w_{\omega_k}^{(j)}
}{
\displaystyle
\sum_{s=1}^{J_{\omega_k}}w_{\omega_k}^{(s)}
}
\label{eq:C6_normalized_pred_weight}
\end{equation}

Then the posterior trajectory-component weight in
Eq.~\eqref{eq:C6_DpiD_expand} is 
\begin{equation}
\begin{aligned}
&w_{\pi_k}^{(j,\mathcal P,\mathcal C)}=
\frac{
	p^{\mathsf D}
	\bar w_{\omega_k}^{(j)}
	\nu_{\mathcal P,\mathcal C}
	\mathcal L_{\omega_k}^{(j,\mathcal P,\mathcal C)}
	\rho^{-M_k^{\mathcal P,\mathcal C}}
}{
\displaystyle
\sum_{\mathcal P'\angle\mathcal Z_k}
\sum_{\mathcal C'\in\mathcal P'}
\vartheta_{\mathcal P',\mathcal C'}
\epsilon_{\mathcal P',\mathcal C'}
}
\end{aligned}
\label{eq:C6_post_component_weight}
\end{equation}
\noindent where Eq.~\eqref{eq:C6_DOAM_axis_prediction_mean} requires
$a_{u,\omega_k}^{(j),n_{\omega_k}^{(j)}}>1$, $\rho$ is the clutter
spatial density, and~$\nu_{\mathcal P,\mathcal C}$,
$\vartheta_{\mathcal P,\mathcal C}$, and
$\epsilon_{\mathcal P,\mathcal C}$ are given by
\cite[Eqs.~(46), (42), and~(44)]{Cheng2026ExplicitExtent}, respectively.

The likelihood approximation used above is established in
Appendix~\ref{app:C6_component_weight_proof}.

\end{proposition}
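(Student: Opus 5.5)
The plan is to reduce the trajectory-level likelihood to a single-time one, then approximate that single-time integral by moment matching, and finally insert the result into the existing TCPHD weight recursion.

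\emph{Marginalizing the history.} By Eq.~\eqref{eq:Zk_likelihood}, the measurements in cell~$(\mathcal P,\mathcal C)$ depend on the trajectory only through its terminal state $(\bm r^{n},\theta^{n},\ddot l_1^{n},\ddot l_2^{n})$ with $n=n_{\omega_k}^{(j)}$. So the exact component likelihood is
\[
\mathcal L_{\omega_k}^{(j,\mathcal P,\mathcal C)}=\int \prod_{m}\mathcal N\!\left(\bm z_k^{\mathcal P,\mathcal C,m};\bm H_k\bm r^{n},\bm R_k+\mathtt s\mathcal S(\theta^{n},\ddot l_1^{n},\ddot l_2^{n})\right)\mathcal{DGDIG}\!\left(\mathcal X;t_{\omega_k}^{(j)},\bm\varOmega_{\omega_k}^{(j)}\right)\mathrm d\mathcal X .
\]
Integrating out the historical states marginalizes each Gaussian block to its terminal block, which is exactly the selection in Eqs.~\eqref{eq:C6_terminal_r_mean}--\eqref{eq:C6_terminal_theta_statistics}. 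The inverse-Gamma product reduces to its $n$-th factors, Eqs.~\eqref{eq:C6_terminal_axis_shape}--\eqref{eq:C6_terminal_axis_statistics}. This yields the terminal density in Eq.~\eqref{eq:C6_terminal_pred_density}, because the factorization of Eq.~\eqref{eq:C6_single_component} is preserved under marginalization.

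\emph{Moment-matching approximation.} The remaining integral has no closed form, since $\theta$ and $\ddot l_u$ enter the covariance nonlinearly. I would write each measurement as $\bm z^m=\bm H_k\bm r_k+\bm v^m+\bm\epsilon^m$, with $\bm v^m\sim\mathcal N(\bm 0,\bm R_k)$ and $\bm\epsilon^m\mid\theta,\ddot l\sim\mathcal N(\bm 0,\mathtt s\bm S_k)$. Each $\bm z^m$ is then replaced by a Gaussian with its exact first two moments. The mean is $\bm H_k\hat{\bm r}^{n}$. By the law of total covariance, the covariance is $\bm H_k\bm\varXi^{\bm r,n}\bm H_k^{\mathrm T}+\mathtt s\,\mathbb E[\bm S_k]+\bm R_k$, which gives Eq.~\eqref{eq:C6_DOAM_measurement_cov}.

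The product form in Eq.~\eqref{eq:C6_DOAM_component_likelihood} additionally neglects the cross-covariance $\bm H_k\bm\varXi^{\bm r,n}\bm H_k^{\mathrm T}$ between distinct measurements, which is induced by the shared $\bm r_k$, together with the dependence induced by the shared extent. This is the step I expect to be the main obstacle to state rigorously, since it is genuinely an approximation rather than an identity. I would justify it as the standard per-measurement predictive approximation used in \cite{Cheng2026ExplicitExtent}, exact when the kinematic and extent uncertainties are small relative to the spread, and I would make no claim of exactness.

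\emph{Closed form of $\bar{\bm S}$.} Expanding the rotation gives
\[
\mathcal R(\theta)\mathcal W(\ddot l_1,\ddot l_2)\mathcal R(-\theta)=\tfrac12(\ddot l_1+\ddot l_2)\bm I_2+\tfrac12(\ddot l_1-\ddot l_2)\begin{bmatrix}\cos2\theta&\sin2\theta\\ \sin2\theta&-\cos2\theta\end{bmatrix}.
\]
Under the terminal density, $\theta$ is independent of $\ddot l_1,\ddot l_2$, so the expectation factorizes into three pieces:
\begin{itemize}
\item the inverse-Gamma means $b/(a-1)$, which require $a>1$ and give Eq.~\eqref{eq:C6_DOAM_axis_prediction_mean};
\item the Gaussian characteristic function $\mathbb E[\mathrm e^{\mathrm i2\theta}]=\mathrm e^{\mathrm i2\hat\theta-2\varXi^{\theta}}$;
\item the real and imaginary parts of that characteristic function, which give Eqs.~\eqref{eq:C6_DOAM_orientation_cosine_moment}--\eqref{eq:C6_DOAM_orientation_sine_moment}.
\end{itemize}
Combining these pieces produces Eq.~\eqref{eq:C6_DOAM_expected_extent}.

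\emph{Weight update.} The weight in Eq.~\eqref{eq:C6_post_component_weight} follows by substituting this likelihood into the standard extended-target TCPHD detection term of \cite{Cheng2026ExplicitExtent}. That recursion depends on the single-trajectory density only through the component likelihood, so the cardinality coefficients $\nu,\vartheta,\epsilon$ and the normalized predicted weights are inherited unchanged.
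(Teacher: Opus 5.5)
Your proposal is correct and follows essentially the same route as the paper's proof. The paper uses a law-of-total-covariance moment-matching lemma applied per measurement, obtains the expected extent from the inverse-Gamma means and the Gaussian characteristic function of $2\theta$ (using independence of the orientation and SSAL states under the terminal density), and substitutes the resulting likelihood into the detected-component weight of \cite{Cheng2026ExplicitExtent}. Your explicit history-marginalization step and your remark about neglected cross-measurement correlations only spell out what the paper does implicitly: it ``extracts the terminal marginals'' and adopts an ``independent Gaussian moment-matching approximation.''
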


\subsection{TCPHD-\DOAM Cardinality Distribution Recursion}
\label{C6_7}

\begingroup
TCPHD-\DOAM modifies only the single-trajectory state density and its
measurement likelihood; it does not change the TCPHD cardinality
recursion. The predicted cardinality distribution therefore follows
\cite[Eq.~(103)]{Cheng2026ExplicitExtent}, and the posterior cardinality
distributions for nonempty and empty measurement sets follow
\cite[Eqs.~(47) and~(48)]{Cheng2026ExplicitExtent}, respectively. These
standard recursions are not repeated here.

All component, partition, and probability-generating-function terms are
calculated as in \cite[Eqs.~(40), (42), (44)--(46), and
(110)--(114)]{Cheng2026ExplicitExtent}, except that the single-trajectory
likelihood in the posterior recursion is evaluated using the proposed
\DOAM likelihood $\mathcal L_{\omega_k}^{(j,\mathcal P,\mathcal C)}$ in
Eq.~\eqref{eq:C6_DOAM_component_likelihood}.
\endgroup

\subsection{Pruning, Absorption, Capping, and \texorpdfstring{$L$}{L}-Scan Implementation}
\label{C6_8}

After the measurement update, the number of posterior trajectory
components increases rapidly with the numbers of candidate measurement
partitions and cells. The pruning, anchor-based absorption, capping, and
state-extraction procedures in \cite[Secs.~4.5 and~4.6]
{Cheng2026ExplicitExtent} are adopted. Components with weights below
\(\mathtt T_{\mathsf P}\) are discarded, and at most the
\(\mathtt J_{\max}\) components with the largest weights are retained.
The parameters~$\mathtt T_{\mathsf P}$ and~$\mathtt J_{\max}$ denote
the pruning threshold and the maximum number of retained components,
respectively.
Unlike the model in \cite{Cheng2026ExplicitExtent}, the \DOAM separately
represents the orientation and two SSAL states. Therefore, absorption
jointly evaluates the terminal kinematic, orientation, and two SSAL
states, whose statistics are extracted using
Eqs.~\eqref{eq:C6_terminal_r_mean}--\eqref{eq:C6_terminal_axis_statistics}.

Let \(i\) and \(\ddot j\) denote a candidate component and the
largest-weight anchor component, respectively. The terminal kinematic
distance \(d_{\bm r,\pi_k}^{(i,\ddot j)}\) is the Mahalanobis distance
defined in \cite[Sec.~4.5]{Cheng2026ExplicitExtent}. Since elliptical
orientations are equivalent modulo~\(\pi\), the periodic orientation
difference and the corresponding distance are
\begin{equation}
\begin{aligned}
\Delta_\pi(\delta)&=\operatorname{mod}(\delta+\pi/2,\pi)-\pi/2\\
d_{\theta,\pi_k}^{(i,\ddot j)}
&=\Delta_\pi^2(\hat\theta_{\pi_k}^{(i)}-\hat\theta_{\pi_k}^{(\ddot j)})
 /\varXi_{\pi_k}^{\theta,(\ddot j)}
\end{aligned}
\label{eq:C6_absorption_theta_distance}
\end{equation}

Let \(q_{u,\pi_k}^{(i)}\) denote the inverse-Gamma density of the
terminal \(u\)-th SSAL state of component~\(i\). Its symmetric KL
distance from the corresponding state of the anchor component is
\begin{equation}
\begin{aligned}
d_{\ddot l_u,\pi_k}^{(i,\ddot j)}
&=\mathrm{KL}(q_{u,\pi_k}^{(i)}\|q_{u,\pi_k}^{(\ddot j)})\\
&\quad+\mathrm{KL}(q_{u,\pi_k}^{(\ddot j)}\|q_{u,\pi_k}^{(i)})
\quad u\in\{1,2\}
\end{aligned}
\label{eq:C6_absorption_axis_distance}
\end{equation}

Component~\(i\) is absorbed into component~\(\ddot j\) when
\begin{equation}
\begin{aligned}
d_{\bm r,\pi_k}^{(i,\ddot j)}&\leq\mathtt T_{\mathsf{Mr}},\quad
d_{\theta,\pi_k}^{(i,\ddot j)}\leq\mathtt T_{\mathsf{M}\theta}\\
d_{\ddot l_u,\pi_k}^{(i,\ddot j)}&\leq\mathtt T_{\mathsf M\ddot l_u}
\quad u\in\{1,2\}
\end{aligned}
\label{eq:C6_absorption_criterion}
\end{equation}
\noindent where~$\mathtt T_{\mathsf{Mr}}$, $\mathtt T_{\mathsf M\theta}$,
and~$\mathtt T_{\mathsf M\ddot l_u}$ are the absorption thresholds for
the terminal kinematic state, orientation state, and $u$-th SSAL state,
respectively.

The absorbed weights are summed, while the start time and DGDIG
parameters are inherited from the anchor component. The number of
trajectories and the kinematic and orientation state sequences are extracted
as in \cite[Sec.~4.6 and Eq.~(116)]{Cheng2026ExplicitExtent}. For a
selected component, the SSAL estimate is
\(\widehat{\ddot l}_{u,\pi_k}^{(i),h}
=b_{u,\pi_k}^{(i),h}/(a_{u,\pi_k}^{(i),h}-1)\), assuming
\(a_{u,\pi_k}^{(i),h}>1\).

To prevent the computational and storage costs from growing with the
trajectory length, the fixed-lag \(L\)-scan implementation in
\cite{Cheng2026ExplicitExtent} is adopted. Only the most recent \(L\)
kinematic, orientation, and SSAL states are corrected using the current
measurements, while earlier states remain fixed. The cases \(L=1\) and
\(L\geq n\) correspond to terminal-state-only and full-trajectory
updates, respectively; further details are given in
\cite{GarciaFernandez2019TPHDTCPHD}.
\subsection{Computational Complexity Analysis}
\label{C6_9}

\begingroup
The random-matrix CPHD implementation in
\cite[Sec.~III-B]{Lundquist2013GGIWCPHD} requires no subpartitioning
and has complexity comparable to the extended-target PHD filter.
At $L=1$, trajectory filtering retains the terminal-state computations
of its nontrajectory counterpart~\cite[Sec.~VI-E]{GarciaFernandez2019TPHDTCPHD}.
The \DOAM preserves the partition structure. For fixed state dimensions,
the CAVI update for one predicted component and measurement cell has cost
\begin{equation}
T_k^{(j,\mathcal P,\mathcal C)}
=\mathcal O\!\left(
\mathtt N_{\max}^{\mathsf{VB}}M_k^{\mathcal P,\mathcal C}
\right)
\label{eq:C6_complexity_cell}
\end{equation}

Let~$N_k^{\mathcal P}$ denote the number of candidate partitions.
Since~$\sum_{\mathcal C\in\mathcal P}M_k^{\mathcal P,\mathcal C}=M_k$,
summing over all predicted components and candidate partitions gives
\begin{equation}
T_k^{\mathsf{CAVI}}
=\mathcal O\!\left(
\mathtt N_{\max}^{\mathsf{VB}}J_{\omega_k}N_k^{\mathcal P}M_k
\right)
\label{eq:C6_complexity_cavi}
\end{equation}

With a prescribed iteration limit, CAVI contributes a constant iteration
factor, without additional partition combinations or update matrices
whose dimensions grow with trajectory length at~$L=1$.
\endgroup

\section{Experimental Validation}
\label{C6_10}

\begingroup
This section evaluates the proposed algorithm in two road-vehicle
tracking scenarios. The first uses simulated spatial
measurements from vehicle trajectories at a signalized intersection, whereas the
second uses real onboard LiDAR measurements.
The performance statistics for Experiment~1 are computed
from 150 Monte Carlo (MC) runs, whereas Experiment~2 reports results
from a single run in the real-world scenario.
\endgroup

\subsection{Compared Methods}
\label{C6_10_1}

\begingroup
\renewcommand{\DOAM}{DOAM\xspace}

\begingroup
Five methods are compared in Experiment~1.

\textit{1) TCPHD-RMM:} Combines TCPHD filtering with an RMM extent
representation~\cite{Sjudin2021ExtendedTPHD}, using particle-based kinematic and extent
updates.

\textit{2) TCPHD-MEM:} Combines TCPHD filtering with an MEM that explicitly
estimates orientation and semi-axis lengths through sequential
measurement updates~\cite{Cheng2026ExplicitExtent}.

\textit{3) BP-GEM:} Combines online particle-based belief propagation
(BP) with a geometric extent model (GEM), jointly estimating object
existence, kinematic states, and extents under unknown measurement
associations~\cite{Meyer2021ScalableEOT}.

\textit{4) TPMB-BP-RMM:} Combines trajectory Poisson multi-Bernoulli
(TPMB) filtering with particle-based BP and a
random-matrix extent model~\cite{Xia2023TPMBBP}.

\textit{5) TCPHD-\DOAM:} The current implementation combines TCPHD filtering
with the \DOAM, separately estimating orientation and SSAL states
through CAVI as derived in Section~III.
\endgroup

\subsection{State-Space Model Setup}
\label{C6_10_2}

Consider a two-dimensional Cartesian coordinate system.
All five methods use the kinematic state in
Eq.~\eqref{eq:rk_def} and the following state-space model.
\begin{equation}
\bm F_k^{\bm r}
=
\begin{bmatrix}
1&\mathtt T\\
0&1
\end{bmatrix}
\otimes\bm I_2
\label{eq:C6_simulation_kinematic_transition}
\end{equation}
\begin{equation}
\bm Q_k^{\bm r}
=
\mathtt q^{\bm r}
\left(
\begin{bmatrix}
\mathtt T^4/4&\mathtt T^3/2\\
\mathtt T^3/2&\mathtt T^2
\end{bmatrix}
\otimes\bm I_2
\right)
\label{eq:C6_simulation_kinematic_noise}
\end{equation}
\begin{equation}
\bm H_k
=
\begin{bmatrix}
1&0
\end{bmatrix}
\otimes\bm I_2
\label{eq:C6_simulation_measurement_matrix}
\end{equation}
\begin{equation}
\bm R_k
=
\left(\mathtt q^{\bm e}\right)^2\bm I_2
\label{eq:C6_simulation_measurement_noise}
\end{equation}
\noindent where~$\mathtt q^{\bm r}$ is the kinematic-state process-noise
intensity parameter and~$\mathtt q^{\bm e}$ is the additive
measurement-noise standard deviation.

For TCPHD-\DOAM, the orientation state follows a random-walk model. Its
state-transition coefficient and process-noise variance are respectively
set as
\begin{equation}
F_k^\theta=1,
\qquad
Q_k^\theta=(\mathtt q^\theta)^2
\label{eq:C6_simulation_orientation_prediction}
\end{equation}
\noindent where~$\mathtt q^\theta$ is the process-noise standard deviation of the
orientation state. The two SSAL states evolve according to the scalar
Beta--Bartlett model, with their evolution controlled by the forgetting
factor~$\mathtt g$, as given in
Eqs.~\eqref{eq:l_shape_predict} and~\eqref{eq:l_scale_predict}.

\subsection{Performance Metrics}
\label{C6_10_3}

The following experiments evaluate the methods using orientation
root-mean-square error (RMSE), Gaussian Wasserstein distance (GWD),
root-mean-square generalized optimal sub-pattern assignment (RMS-GOSPA),
and cycle time (CYC). Orientation RMSE measures the angular error of
matched estimates and GWD measures their combined centroid and extent
error. RMS-GOSPA additionally accounts for missed targets and redundant
estimates. CYC characterizes computational efficiency.

\begingroup
\setlength{\baselineskip}{11pt}
\setlength{\abovedisplayskip}{4pt}
\setlength{\belowdisplayskip}{4pt}
\setlength{\abovedisplayshortskip}{3pt}
\setlength{\belowdisplayshortskip}{4pt}
Orientation RMSE and GWD use centroid-based one-to-one
matching~\cite[Sec.~5.3]{Cheng2026ExplicitExtent} with a common distance gate.
Let $N_{\mathsf{MC}}$ denote the number of MC runs, $\mathcal A_k^{[c]}$ the
matched-pair set at frame~$k$ in run~$c$, and
$N_k^{\mathcal A}=\sum_{c=1}^{N_{\mathsf{MC}}}|\mathcal A_k^{[c]}|$
the pooled pair count. The orientation RMSE in degrees is
\begin{equation}
\mathsf{RMSE}_{\theta,k}
=
\frac{180}{\pi}
\sqrt{
\frac{1}{N_k^{\mathcal A}}
\sum_{c=1}^{N_{\mathsf{MC}}}
\sum_{(i,j)\in\mathcal A_k^{[c]}}
\left(e_{\theta,k,ij}^{[c]}\right)^2
}
\label{eq:C6_metric_angle_RMSE}
\end{equation}
\noindent where
$e_{\theta,k,ij}^{[c]}=
\operatorname{mod}(\theta_{k,j}^{\mathsf{Est},[c]}
-\theta_{k,i}^{\mathsf{Gt},[c]}+\pi/2,\pi)-\pi/2$
is the wrapped angular error in radians. Both angles are
obtained from the major axes of the physical extent matrices. Their
definitions are independent of the native semiaxis labels and extent
representation.

We use~$\bm p$ for the centroid and~$\bm S$ for the physical
extent matrix. The superscripts~$\mathsf{Gt}$ and~$\mathsf{Est}$ identify
ground truth and estimates. The GWD is defined as~\cite{Yang2016GWDMetric}
\begingroup
\begin{align}
&d_{\mathsf{GWD},k,ij}^{[c]}=\Bigg[
\underbrace{
\left\|
\bm p_{k,i}^{\mathsf{Gt},[c]}
-\bm p_{k,j}^{\mathsf{Est},[c]}
\right\|_2^2
}_{\left(d_{\bm p,k,ij}^{[c]}\right)^2}
\label{eq:C6_metric_GWD}\\
&\,+
\underbrace{
\operatorname{tr}\!\Bigg[
\bm S_{k,i}^{\mathsf{Gt},[c]}
+\bm S_{k,j}^{\mathsf{Est},[c]}
-2\sqrt{
\sqrt{\bm S_{k,i}^{\mathsf{Gt},[c]}}\,
\bm S_{k,j}^{\mathsf{Est},[c]}
\sqrt{\bm S_{k,i}^{\mathsf{Gt},[c]}}
}
\Bigg]
}_{\left(d_{\bm S,k,ij}^{[c]}\right)^2}
\Bigg]^{1/2}
\notag
\end{align}
\endgroup

All matrix square roots are principal. The framewise mean GWD is
\begin{equation}
\bar d_{\mathsf{GWD},k}
=
\frac{1}{N_k^{\mathcal A}}
\sum_{c=1}^{N_{\mathsf{MC}}}
\sum_{(i,j)\in\mathcal A_k^{[c]}}
d_{\mathsf{GWD},k,ij}^{[c]}
\label{eq:C6_metric_mean_GWD}
\end{equation}

We also report centroid RMSE and mean centered-extent GWD
(CEGWD) over the same matched pairs. Centroid RMSE is the square root
of the pooled mean of~$\left(d_{\bm p,k,ij}^{[c]}\right)^2$.
CEGWD is the pooled mean of~$d_{\bm S,k,ij}^{[c]}$.

The GOSPA metric~\cite{Rahmathullah2017GOSPA} evaluates
the complete target and estimate sets and penalizes missed targets
and redundant estimates. Its assignment is independent of the
preceding centroid matching. Let~$d_{\mathsf{GOSPA},k}^{[c]}$ denote
its value with GWD as the base distance. We set the order to
$\mathsf{p}=2$, the cutoff to~$\mathsf{c}=2~\mathrm m$, and
$\mathsf{\alpha}=2$. The RMS-GOSPA over MC runs is
\begin{equation}
\mathsf{RMS}_{\mathsf{GOSPA},k}
=
\sqrt{
\frac{1}{N_{\mathsf{MC}}}
\sum_{c=1}^{N_{\mathsf{MC}}}
\left(d_{\mathsf{GOSPA},k}^{[c]}\right)^2
}
\label{eq:C6_metric_RMS_GOSPA}
\end{equation}

CYC is the mean online processing time per frame.
The simulations are implemented in MATLAB R2025a on a 64-bit Windows
computer with an Intel Core Ultra~7~265K processor and 64~GB of memory.
\endgroup

\endgroup

\subsection{Experiment 1: Signalized-Intersection Simulation}
\label{C6_10_4}

\begingroup
\renewcommand{\DOAM}{DOAM\xspace}
The scenario is adapted from the vehicle trajectories and extent
information in the Signalized Intersection Dataset
(SIND)~\cite{Xu2022SIND}, as used in
\cite[Sec.~5.5]{Cheng2026ExplicitExtent}. The surveillance region is
$[-40,40]\times[-30,60]~(\mathrm{m}^2)$. The sampling interval is
$\mathtt T=1~(\mathrm{s})$, and tracking lasts for $\mathtt K=90$
frames. Targets~1--5 enter at frames~5, 48, 21, 1, and~48,
respectively, and remain until frame~90.
Fig.~\ref{fig:C6_exp1_ground_truth} shows the resulting trajectories,
elliptical extents, and entrance markers.

\setlength{\baselineskip}{11pt}
For all methods, $p^{\mathsf D}=0.99$, $p^{\mathsf S}=0.99$,
$\gamma^{\mathsf O}=10$, $\mathtt q^{\bm r}=0.2~\mathrm{m}^2/\mathrm{s}^4$,
and $\mathtt q^{\bm e}=0.1~\mathrm m$.
To ensure a fair comparison in smoothing lag, the three
TCPHD-based filters use $L=1$ for the single-MC illustration and the
five-method curves. All methods share the same clutter rate
$\lambda^{\mathsf C}=5$.

\setlength{\intextsep}{3pt}
\begin{figure}[H]
	\centering
	\setlength{\abovecaptionskip}{3pt}
	\includegraphics[width=\columnwidth]{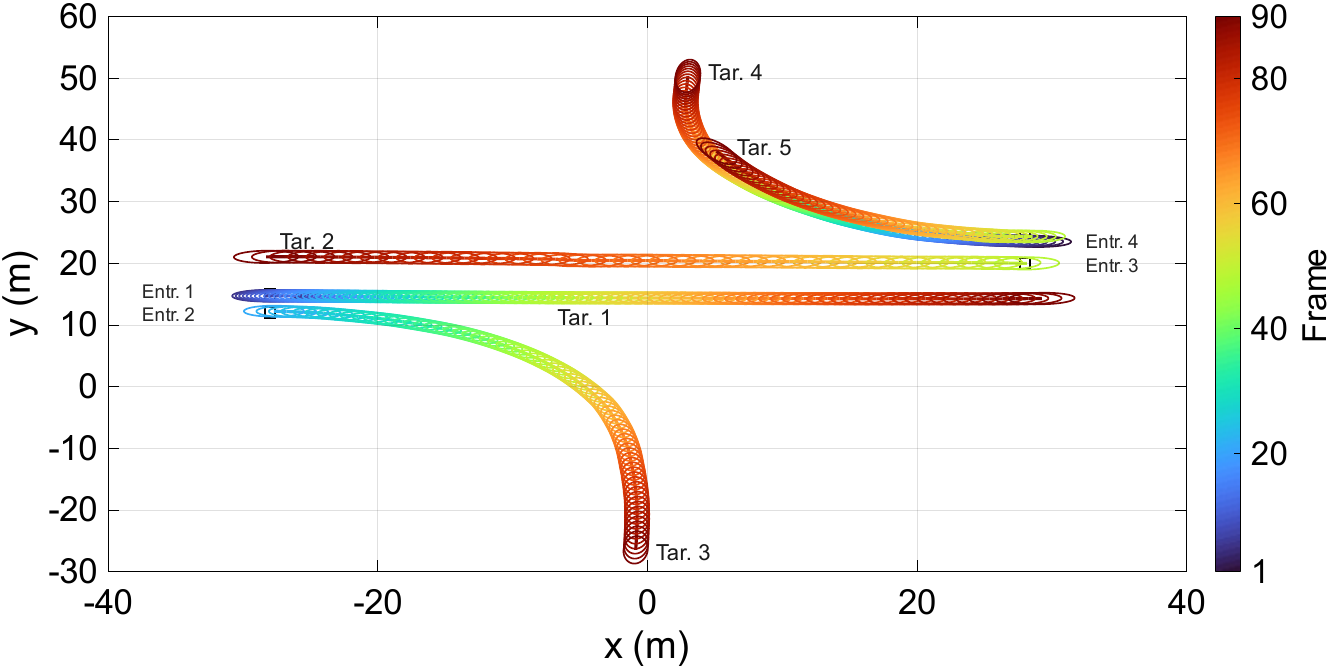}
	\caption{Ground-truth in Experiment~1, where color denotes the frame
		index and squares indicate the entrance positions}
	\label{fig:C6_exp1_ground_truth}
\end{figure}

\begingroup
Let $l_{u,\beta_k}$ denote birth semiaxis $u\in\{1,2\}$.
The quantitative comparisons in Experiment~1 use common baseline
birth-prior settings. All methods start with no targets and use zero
birth-velocity means and kinematic covariance
$\operatorname{diag}(0.5~\mathrm{m}^2,0.5~\mathrm{m}^2,
0.1~\mathrm{m}^2/\mathrm{s}^2,0.1~\mathrm{m}^2/\mathrm{s}^2)$.
For each entrance, a position offset is drawn once from
$\mathcal N(\bm 0,\bm I_2~\mathrm{m}^2)$ and fixed across methods and MC runs.
The shared orientation prior has $\hat\theta_{\beta_k}=\pi/6~\mathrm{rad}$
and $\varXi_{\beta_k}^{\theta}=0.01~\mathrm{rad}^2$.
The semiaxis means are $\mathbb E[l_{1,\beta_k}]=1.6~\mathrm m$ and
$\mathbb E[l_{2,\beta_k}]=1.4~\mathrm m$, with variances approximately
$0.09~\mathrm{m}^2$ and $0.07~\mathrm{m}^2$, respectively.
Each of the three TCPHD filters uses four entrance birth components
of weight~0.1; the two BP-based methods use the same entrances,
a total birth intensity of~0.4, and matched physical extent moments. TCPHD-\DOAM additionally uses
$\mathtt q^\theta=\sqrt{0.05}~\mathrm{rad}$, $\mathtt g=0.99$,
$\mathtt N_{\max}^{\mathsf{VB}}=15$, and $\mathtt T_{\mathsf{VB}}=10^{-4}$.
\endgroup

\Needspace{22\baselineskip}
\setlength{\baselineskip}{11pt}
\setlength{\intextsep}{1pt}
Fig.~\ref{fig:C6_exp1_singleMC_five} shows the matched trajectories and
current extents at frames~49, 55, 70, and~82. The insets focus on
Target~3 in the first two panels and Target~5 in the last two.

\begin{figure}[H]
	\centering
	\setlength{\abovecaptionskip}{3pt}
	\includegraphics[width=\columnwidth]{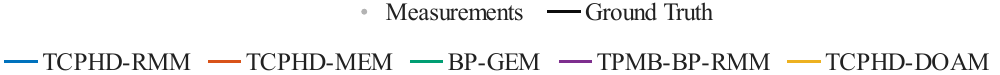}
	\par
	\includegraphics[width=0.93\columnwidth]{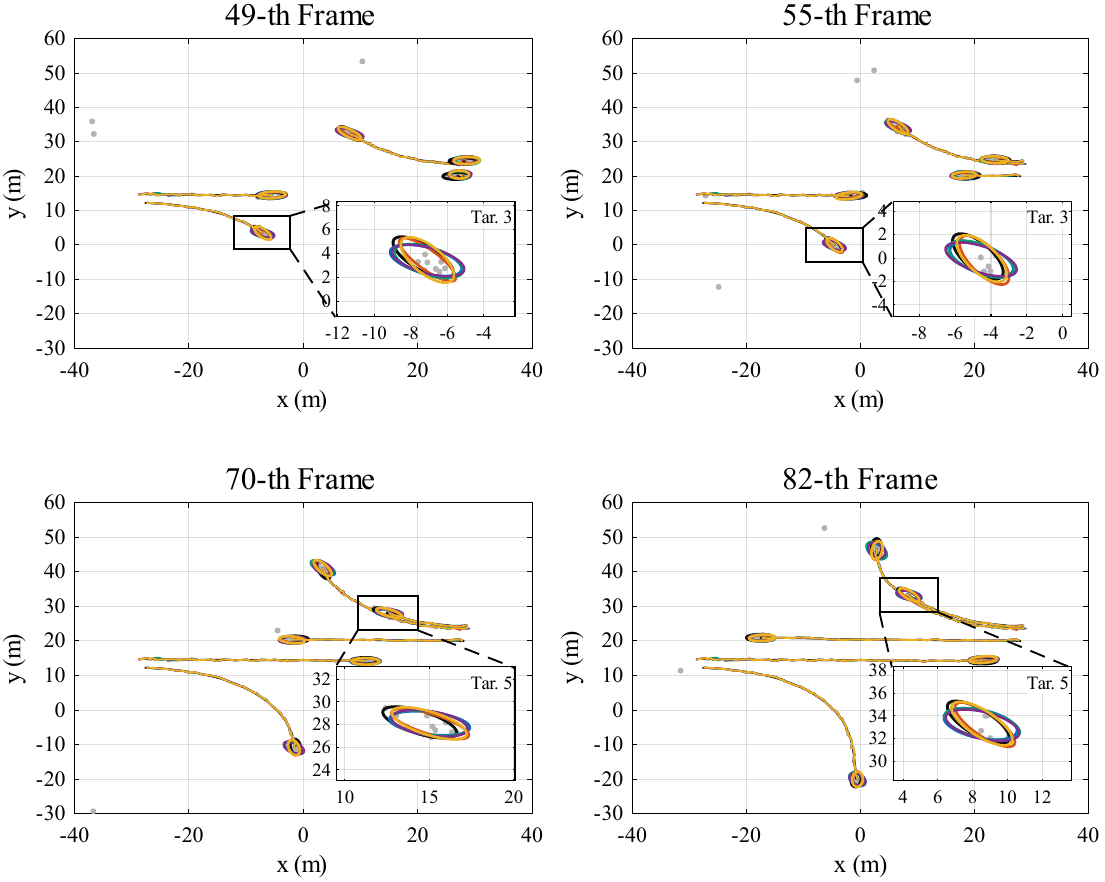}
	\caption{Single-MC matched estimates in Experiment~1}
	\label{fig:C6_exp1_singleMC_five}
\end{figure}

\newpage
\setlength{\baselineskip}{12.5pt}
It can be seen from Fig.~\ref{fig:C6_exp1_singleMC_five}
that all five methods follow the target trajectories in the displayed frames.
The enlarged views reveal orientation deviations for TCPHD-RMM,
BP-GEM, and TPMB-BP-RMM for Target~3, while TCPHD-MEM gives a noticeably
narrow extent for this target at frame~49. The proposed TCPHD-\DOAM
closely matches the ground-truth orientations and extent contours in
these snapshots.

\begin{figure*}[!b]
    \centering
    \includegraphics[width=\textwidth]{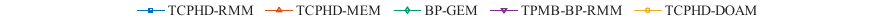}
    \par\vspace{2pt}
    \subfloat[Orientation RMSE\label{fig:C6_exp1_MC150_orientation}]{%
        \includegraphics[width=0.325\textwidth]
        {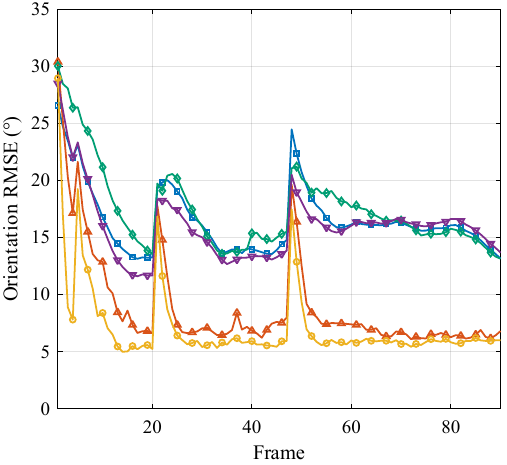}%
    }\hfill
    \subfloat[GWD\label{fig:C6_exp1_MC150_gwd}]{%
        \includegraphics[width=0.325\textwidth]
        {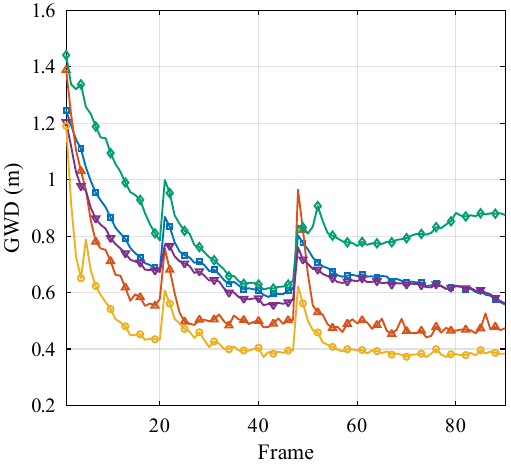}%
    }\hfill
    \subfloat[RMS-GOSPA\label{fig:C6_exp1_MC150_gospa}]{%
        \includegraphics[width=0.325\textwidth]
        {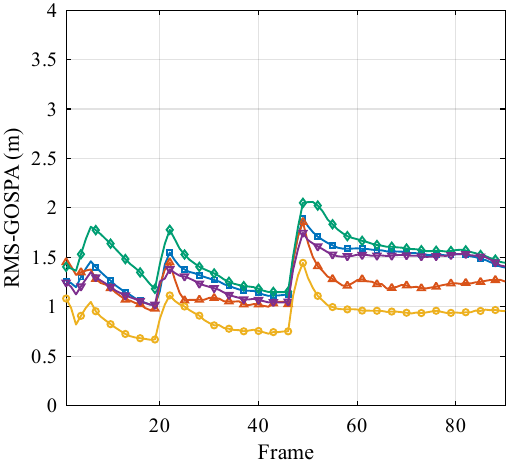}%
    }
    \caption{Performance comparison in Experiment~1}
    \label{fig:C6_exp1_MC150_metrics}
\end{figure*}

\begin{table*}[!b]
\centering
\caption{Time-averaged estimation errors and CYC in Experiment~1}
\label{tab:C6_exp1_MC150_performance}
\footnotesize
\setlength{\tabcolsep}{2.4pt}
\papertablespacing
\begin{tabular*}{\textwidth}{@{\extracolsep{\fill}}ccccccccccccc@{}}
\toprule
\multicolumn{2}{c}{\multirow{2}{*}{\textbf{Metric}}} & \multicolumn{3}{c}{\textbf{TCPHD-RMM}} & \multicolumn{3}{c}{\textbf{TCPHD-MEM}} & \multirow{2}{*}{\textbf{BP-GEM}} & \multirow{2}{*}{\makecell{\textbf{TPMB-BP-}\\\textbf{RMM}}} & \multicolumn{3}{c}{\textbf{TCPHD-\DOAM}} \\
\cmidrule(lr){3-5}\cmidrule(lr){6-8}\cmidrule(lr){11-13}
 & & $L=1$ & $L=3$ & $L=5$ & $L=1$ & $L=3$ & $L=5$ & & & $L=1$ & $L=3$ & $L=5$ \\
\midrule
\multicolumn{2}{c}{Orientation RMSE ($^\circ$) $\downarrow$} & 16.42 & 14.25 & 13.32 & 7.78 & 7.02 & 6.71 & 17.05 & 16.03 & 7.39 & 6.01 & \textbf{5.97} \\
\cmidrule(lr){1-13}
\multicolumn{1}{c|}{} & Overall (m) $\downarrow$ & 0.69 & 0.60 & 0.59 & 0.47 & 0.44 & 0.43 & 0.75 & 0.63 & 0.46 & 0.34 & \textbf{0.33} \\
\multicolumn{1}{c|}{GWD} & Centroid RMSE (m) $\downarrow$ & 0.36 & 0.27 & 0.28 & 0.37 & 0.47 & 0.52 & 0.46 & 0.34 & 0.36 & 0.26 & \textbf{0.25} \\
\multicolumn{1}{c|}{} & CEGWD (m) $\downarrow$ & 0.57 & 0.53 & 0.51 & 0.31 & 0.31 & 0.30 & 0.61 & 0.52 & 0.28 & 0.24 & \textbf{0.23} \\
\cmidrule(lr){1-13}
\multicolumn{2}{c}{RMS-GOSPA (m) $\downarrow$} & 1.39 & 1.19 & 1.17 & 0.95 & 0.98 & 0.96 & 1.47 & 1.33 & 0.94 & 0.70 & \textbf{0.68} \\
\multicolumn{2}{c}{CYC (s) $\downarrow$} & \textbf{0.04} & 0.06 & 0.07 & 0.22 & 0.27 & 0.28 & 0.06 & 0.08 & 0.25 & 0.43 & 0.59 \\
\bottomrule
\end{tabular*}
\end{table*}
Fig.~\ref{fig:C6_exp1_MC150_metrics} summarizes the three metrics.

In Fig.~\ref{fig:C6_exp1_MC150_orientation}, birth-related
transients occur around frames~5, 21, and~48, with different
magnitudes across methods. TCPHD-\DOAM recovers rapidly and
achieves the lowest overall orientation RMSE, followed by
TCPHD-MEM. After frame~60, TCPHD-MEM approaches the error
level of TCPHD-\DOAM, whereas TCPHD-RMM, BP-GEM, and
TPMB-BP-RMM exhibit similar, higher errors.

In Fig.~\ref{fig:C6_exp1_MC150_gwd}, TCPHD-\DOAM maintains the lowest
GWD throughout the sequence, indicating the best combined centroid
and extent accuracy. The TCPHD-MEM peak at frame~48
coincides with the entry of Targets~2 and~5. The reason is that transient
extent errors for the newborn targets and large centroid errors in a small
number of matched pairs increase both GWD components.
BP-GEM's GWD increases after frame~65, mainly because centroid errors grow
in a small subset of MC runs, while its centered-extent error remains
comparatively stable.

\pagebreak[4]
Fig.~\ref{fig:C6_exp1_MC150_gospa} shows the lowest overall RMS-GOSPA
for TCPHD-\DOAM, followed by TCPHD-MEM. TCPHD-\DOAM quickly recovers
from birth-related transients near frames~21 and~48, whereas
TCPHD-RMM, BP-GEM, and TPMB-BP-RMM
retain higher errors at later frames. By accounting
for both matched-state discrepancies and unmatched targets or estimates,
GOSPA complements the pairwise orientation RMSE and GWD. 

Across the 150 MC runs, TCPHD-\DOAM yields the lowest
reported mean orientation RMSE, GWD, and RMS-GOSPA among the compared
methods in this scenario.
\endgroup

\begingroup
\renewcommand{\DOAM}{DOAM\xspace}

\setlength{\baselineskip}{12.5pt}
Table~\ref{tab:C6_exp1_MC150_performance} compares orientation RMSE,
GWD and its components, RMS-GOSPA, and CYC.
TCPHD-RMM, TCPHD-MEM, and TCPHD-\DOAM are evaluated at $L\in\{1,3,5\}$.
At each common $L$, TCPHD-\DOAM matches or improves on TCPHD-RMM
and TCPHD-MEM across all error measures in the table. At $L=1$, it gives
the best overall accuracy among all five methods. Increasing $L$
generally improves accuracy, but the gains diminish from $L=3$ to $L=5$.
Over this step, centroid RMSE increases for TCPHD-RMM and TCPHD-MEM.
The increase for TCPHD-RMM reflects particle-path degeneracy over
longer histories, while that for TCPHD-MEM is driven mainly by large
centroid errors in a few MC runs.
Although TCPHD-\DOAM has a higher CYC, its mean processing times
remain below the sampling interval of
$1~\mathrm s$, indicating an acceptable computational cost in this setting.
\endgroup

\Needspace{30\baselineskip}
\begingroup
Furthermore, Fig.~\ref{fig:C6_exp1_Lscan_comparison} compares
TCPHD-\DOAM at $L\in\{1,2,3,5,10,20\}$.
Changing $L$ has little effect on target birth and death estimation.
We therefore omit RMS-GOSPA and focus on orientation RMSE and GWD
to assess state-estimation accuracy for existing targets.
It can be seen that increasing $L$ reduces both errors relative to $L=1$,
as additional measurements refine the historical-state estimates.
The orientation insets at frames~20--22 and~60--62 show improvements
both near target entry and during subsequent tracking.
The curves become closely spaced at larger $L$, indicating diminishing
accuracy gains. To more clearly illustrate the effect of $L$ on algorithm
performance,
Table~\ref{tab:C6_exp1_Lscan_performance} reports the time-averaged
results. It can be seen from the table that increasing $L$ initially
improves estimation accuracy, but the improvement saturates.
For $L>20$, orientation RMSE changes very little, whereas GWD and
RMS-GOSPA increase. This deterioration is mainly associated with larger
centered-extent errors, while centroid accuracy remains nearly unchanged.
Moreover, CYC already exceeds the sampling interval of $1~\mathrm s$
at $L=10$, making the computational cost unacceptable for processing
at the measurement rate in this setup. For industrial applications with
similar processing constraints, $L=3$ or~$5$ is therefore recommended
as a practical accuracy--cost compromise.

\setlength{\intextsep}{3pt}
\begin{table}[H]
\centering
\renewcommand{\DOAM}{DOAM\xspace}
\caption{Statistics of TCPHD-\DOAM for different L-scan lengths}
\label{tab:C6_exp1_Lscan_performance}
\footnotesize
\setlength{\tabcolsep}{0.6pt}
\papertablespacing
\begin{tabular*}{\columnwidth}{@{\extracolsep{\fill}}ccccccccc@{}}
\toprule
\multirow{2}{*}{\textbf{Metric}} & \multicolumn{8}{c}{\textbf{L-scan length} $L$} \\
\cmidrule(lr){2-9}
 & 1 & 2 & 3 & 5 & 10 & 20 & 40 & 80 \\
\midrule
Orientation RMSE ($^\circ$) $\downarrow$ & 7.39 & 6.20 & 6.01 & 5.97 & 5.96 & 5.96 & 5.96 &   \textbf{5.96} \\
GWD (m) $\downarrow$ & 0.46 & 0.36 & 0.34 & 0.33 & 0.32 & \textbf{0.32} & 0.33 &  0.34 \\
RMS-GOSPA (m) $\downarrow$ & 0.94 & 0.71 & 0.70 & 0.68 & 0.67 & \textbf{0.66} & 0.68 & 0.69 \\
CYC (s) $\downarrow$ & \textbf{0.25} & 0.36 & 0.43 & 0.59 & 1.18 & 1.80 & 4.66 & 11.01 \\
\bottomrule
\end{tabular*}
\end{table}

\begin{table*}[!b]
\centering
\caption{Robustness comparison under shared parameter perturbations}
\label{tab:C6_exp1_robustness}
\begingroup
\papertablespacing
\begin{minipage}{\textwidth}
\setlength{\tabcolsep}{0.55pt}
\begin{tabular*}{\textwidth}{@{\extracolsep{\fill}}>{\raggedright\arraybackslash}p{22mm}ccc@{\hspace{2.5pt}}ccc@{\hspace{2.5pt}}ccc@{\hspace{2.5pt}}ccc@{\hspace{2.5pt}}ccc@{}}
\toprule
\multirow{2}{22mm}{\textbf{Parameter condition}} & \multicolumn{3}{c}{\textbf{TCPHD-RMM}} & \multicolumn{3}{c}{\textbf{TCPHD-MEM}} & \multicolumn{3}{c}{\textbf{BP-GEM}} & \multicolumn{3}{c}{\textbf{TPMB-BP-RMM}} & \multicolumn{3}{c}{\textbf{TCPHD-DOAM}} \\
\cmidrule(lr){2-4}\cmidrule(lr){5-7}\cmidrule(lr){8-10}\cmidrule(lr){11-13}\cmidrule(lr){14-16}
 & {\fontsize{7}{8.4}\selectfont \shortstack{Orientation\\RMSE ($^\circ$)}} & {\fontsize{7}{8.4}\selectfont \shortstack{GWD\\(m)}} & {\fontsize{7}{8.4}\selectfont \shortstack{RMS-\\GOSPA (m)}} & {\fontsize{7}{8.4}\selectfont \shortstack{Orientation\\RMSE ($^\circ$)}} & {\fontsize{7}{8.4}\selectfont \shortstack{GWD\\(m)}} & {\fontsize{7}{8.4}\selectfont \shortstack{RMS-\\GOSPA (m)}} & {\fontsize{7}{8.4}\selectfont \shortstack{Orientation\\RMSE ($^\circ$)}} & {\fontsize{7}{8.4}\selectfont \shortstack{GWD\\(m)}} & {\fontsize{7}{8.4}\selectfont \shortstack{RMS-\\GOSPA (m)}} & {\fontsize{7}{8.4}\selectfont \shortstack{Orientation\\RMSE ($^\circ$)}} & {\fontsize{7}{8.4}\selectfont \shortstack{GWD\\(m)}} & {\fontsize{7}{8.4}\selectfont \shortstack{RMS-\\GOSPA (m)}} & {\fontsize{7}{8.4}\selectfont \shortstack{Orientation\\RMSE ($^\circ$)}} & {\fontsize{7}{8.4}\selectfont \shortstack{GWD\\(m)}} & {\fontsize{7}{8.4}\selectfont \shortstack{RMS-\\GOSPA (m)}} \\
\midrule
Baseline & 16.42 & 0.69 & 1.39 & 7.78 & 0.47 & 0.95 & 17.05 & 0.75 & 1.47 & 16.03 & 0.63 & 1.33 & \textbf{7.39} & \textbf{0.46} & \textbf{0.94} \\
\addlinespace[0.5pt]
$p^{\mathsf D}=0.95$ & 17.37 & 0.74 & 2.17 & 8.09 & 0.48 & 1.14 & 17.81 & 0.80 & 1.57 & 16.50 & 0.71 & 1.49 & \textbf{7.76} & \textbf{0.47} & \textbf{1.13} \\
\addlinespace[0.5pt]
$p^{\mathsf D}=0.80$ & 20.29 & 0.94 & 4.52 & 12.05 & 0.85 & 1.62 & 20.26 & 0.93 & 1.80 & 18.54 & 0.84 & 1.85 & \textbf{10.00} & \textbf{0.81} & \textbf{1.59} \\
\addlinespace[0.5pt]
$p^{\mathsf S}=0.70$ & 21.68 & 0.97 & 3.86 & 7.81 & 0.47 & 0.95 & 18.06 & 0.79 & 1.56 & 16.22 & 0.70 & 1.37 & \textbf{7.42} & \textbf{0.46} & \textbf{0.94} \\
\addlinespace[0.5pt]
$\mathtt q^{\bm e}=2\mathtt q^{\bm e,\star}$ & 16.77 & 0.70 & 1.39 & 8.27 & 0.49 & 0.99 & 17.17 & 0.77 & 1.49 & 16.62 & 0.71 & 1.40 & \textbf{7.74} & \textbf{0.48} & \textbf{0.98} \\
\addlinespace[0.5pt]
$l_{u,\beta_k}=0.5l_{u,\beta_k}^{\star}$ & 19.64 & 0.98 & 2.25 & 7.90 & 0.51 & 1.01 & 19.54 & 1.07 & 2.02 & 19.61 & 1.00 & 2.01 & \textbf{7.44} & \textbf{0.47} & \textbf{0.96} \\
\addlinespace[0.5pt]
$l_{u,\beta_k}=2l_{u,\beta_k}^{\star}$ & 17.51 & 0.87 & 1.68 & 8.24 & \textbf{0.51} & \textbf{1.03} & 24.06 & 1.33 & 2.23 & 17.11 & 0.84 & 1.63 & \textbf{7.82} & 0.62 & 1.20 \\
\addlinespace[0.5pt]
$\lambda^{\mathsf C}=10$ & 16.30 & 0.68 & 1.39 & 8.01 & 0.48 & \textbf{0.99} & 16.96 & 0.75 & 1.47 & 16.18 & 0.70 & 1.40 & \textbf{7.63} & \textbf{0.48} & 1.01 \\
\addlinespace[0.5pt]
$\mathtt q^{\bm r}=0.25\mathtt q^{\bm r,\star}$ & 15.92 & 0.66 & 1.34 & 7.52 & 0.44 & 0.89 & 16.46 & 0.72 & 1.41 & 15.56 & 0.67 & 1.31 & \textbf{7.24} & \textbf{0.43} & \textbf{0.87} \\
\addlinespace[0.5pt]
$\mathtt q^{\bm r}=4\mathtt q^{\bm r,\star}$ & 17.15 & 0.72 & 1.46 & 8.55 & 0.51 & 1.03 & 19.10 & 0.84 & 1.63 & 17.21 & 0.74 & 1.46 & \textbf{7.93} & \textbf{0.49} & \textbf{1.02} \\
\addlinespace[0.5pt]
$\hat\theta_{\beta_k}=45^\circ$ & 15.56 & 0.72 & 1.46 & 11.15 & \textbf{0.53} & \textbf{1.03} & 17.37 & 0.79 & 1.51 & 15.66 & 0.71 & 1.38 & \textbf{10.16} & 0.53 & 1.04 \\
\bottomrule
\end{tabular*}
\par\vspace{2pt}\noindent
\textit{Note:} $\star$ denotes the baseline: $L=1$, $p^{\mathsf D,\star}=0.99$, $p^{\mathsf S,\star}=0.99$, $\gamma^{\mathsf O,\star}=10$, $\lambda^{\mathsf C,\star}=5$, $\mathtt q^{\bm r,\star}=0.2~\mathrm{m}^2/\mathrm{s}^4$, $\mathtt q^{\bm e,\star}=0.1~\mathrm m$ and $\hat\theta_{\beta_k}^{\star}=30^\circ$.
\end{minipage}
\endgroup

\end{table*}

\endgroup

\begingroup
\renewcommand{\DOAM}{DOAM\xspace}
To compare the methods' robustness to parameter perturbations,
Table~\ref{tab:C6_exp1_robustness} reports time-averaged errors under
one-factor changes to detection and survival probabilities, noise levels,
and birth-extent priors. TCPHD-\DOAM achieves the lowest orientation RMSE in all 11 settings
and the lowest GWD and RMS-GOSPA in 9 and 8 settings, respectively.
Most nonbest entries are close to TCPHD-MEM, with gaps below
$0.02~\mathrm m$ except for the doubled birth semiaxes.
At $\hat\theta_{\beta_k}=45^\circ$, a slightly larger centroid error
offsets the centered-extent gain. At $\lambda^{\mathsf C}=10$, increased unmatched-state penalties offset
part of the pairwise accuracy gain.
The clearest exception is $l_{u,\beta_k}=2l_{u,\beta_k}^{\star}$:
the enlarged prior produces more pronounced minor-axis overestimation
during birth transients. This mainly raises the centered-extent error,
giving GWD and RMS-GOSPA gaps of $0.11~\mathrm m$ and $0.17~\mathrm m$,
while late-sequence GWD becomes comparable to TCPHD-MEM.
TCPHD-\DOAM therefore provides the strongest overall robustness in
estimation accuracy among the five methods over the tested parameter range.

\setlength{\intextsep}{3pt}
\begin{figure}[H]
    \centering
    \setlength{\abovecaptionskip}{3pt}
    \subfloat[Orientation RMSE\label{fig:C6_exp1_Lscan_orientation}]{%
        \includegraphics[width=\columnwidth]{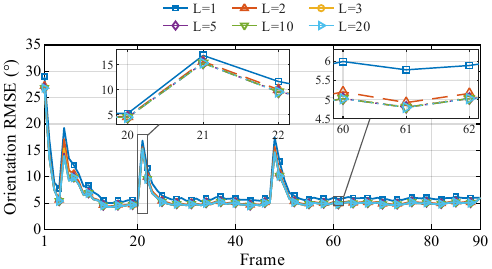}%
}
\par\smallskip
\subfloat[GWD\label{fig:C6_exp1_Lscan_gwd}]{%
        \includegraphics[width=\columnwidth]{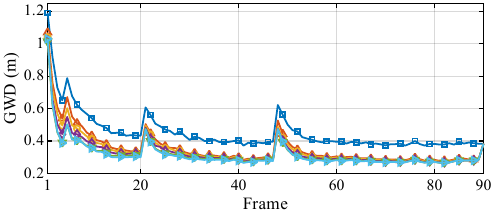}%
}
\caption{Effect of the L-scan length on TCPHD-\DOAM in Experiment~1}
\label{fig:C6_exp1_Lscan_comparison}
\end{figure}
\endgroup

\newpage
\addtocounter{figure}{2}
\begin{figure*}[!b]
    \centering
    \setlength{\abovecaptionskip}{3pt}
    \includegraphics[width=\textwidth]{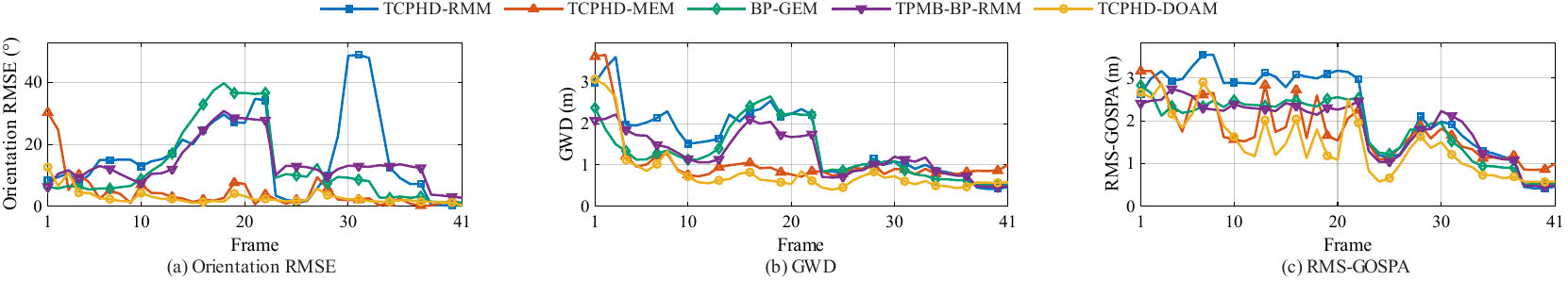}
    \caption{Performance comparison in Experiment~2}
    \label{fig:C6_exp2_metrics}
\end{figure*}
\addtocounter{figure}{-3}
\subsection{Experiment 2: Onboard-LiDAR Road Scenario}
\label{C6_10_5}
\begingroup
\setlength{\baselineskip}{10.7pt}
\setlength{\intextsep}{2pt}

\begingroup
Experiment~2 evaluates the method using the ``scene-0757'' onboard LiDAR
sequence from the public nuScenes dataset~\cite{Caesar2020nuScenes}.
The measurements are expressed in a two-dimensional Cartesian frame.
\endgroup

\looseness=-1
To obtain the tracking measurements in
Fig.~\ref{fig:C6_exp2_scene0757_b}, invalid, duplicate, and ground returns
are removed from the point cloud in Fig.~\ref{fig:C6_exp2_scene0757_a}
before coordinate transformation. Annotation-box gating and clustering
select the vehicle returns, which are registered across frames,
azimuth-balanced, and sparsified. Spatial thresholding and clustering
are common point-cloud preprocessing operations. The constructed
measurement sets retain clutter. Annotations are used only for offline
preparation; all trackers receive the same unlabeled measurement sets.

\begingroup
\begin{figure}[!h]
	\centering
	\captionsetup[subfloat]{farskip=2pt,captionskip=2pt,nearskip=0pt}
	\setlength{\abovecaptionskip}{3pt}
	\subfloat[Total 2-D LiDAR point cloud%
	\label{fig:C6_exp2_scene0757_a}]{%
		\includegraphics[width=0.47\columnwidth]
		{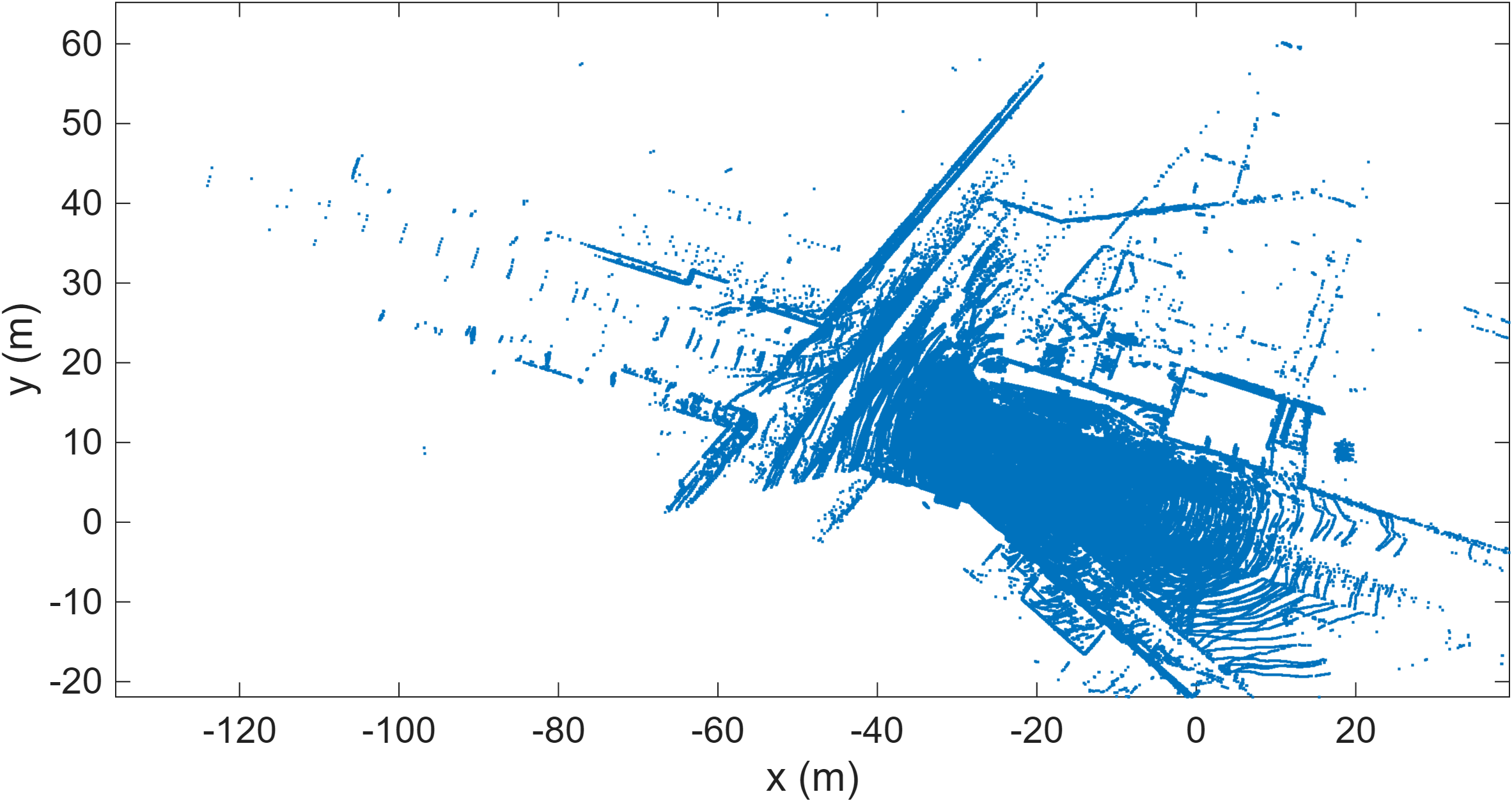}%
	}\hfill
	\subfloat[Tracking measurements with clutter%
	\label{fig:C6_exp2_scene0757_b}]{%
		\includegraphics[width=0.47\columnwidth]
		{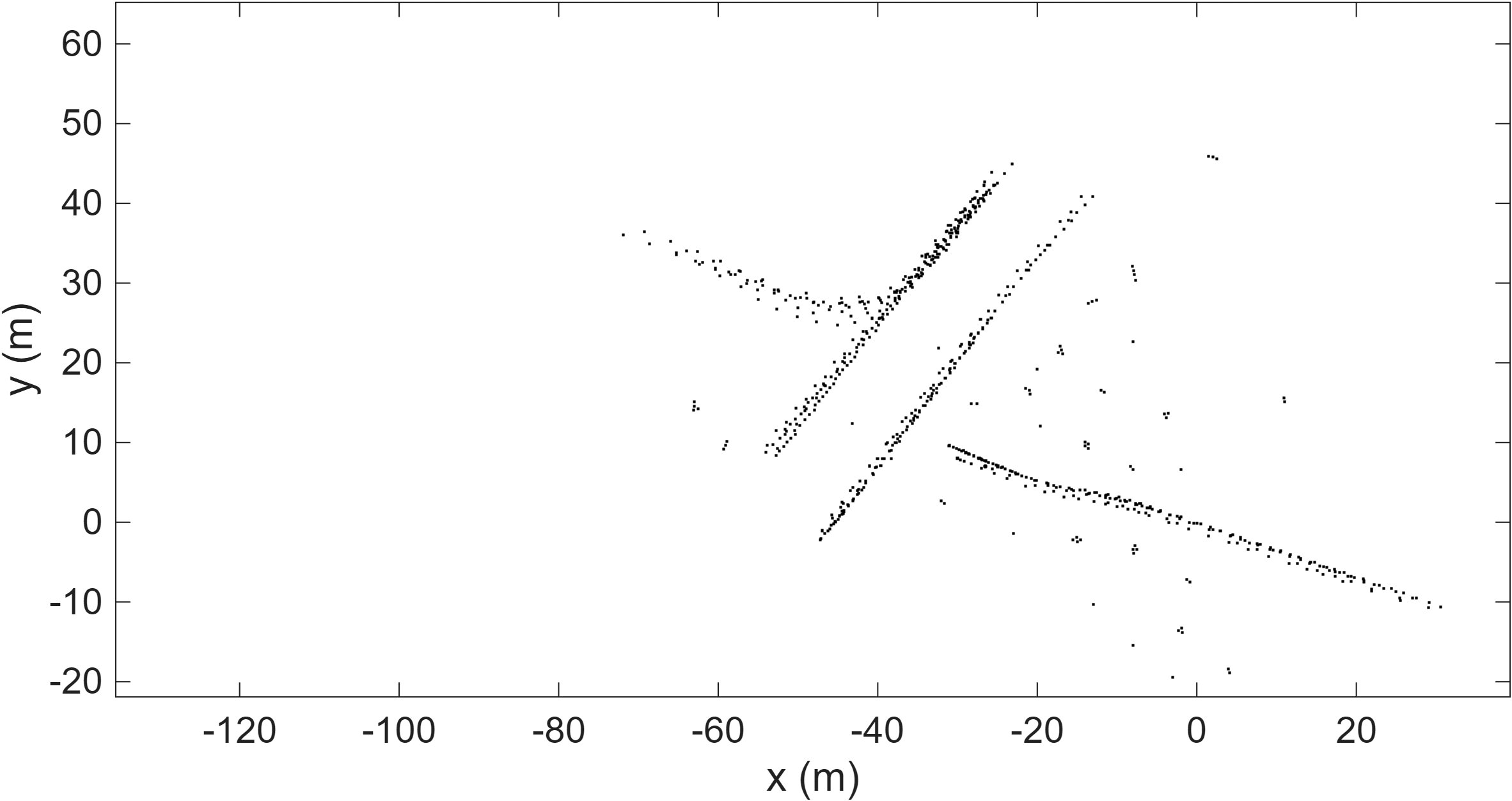}%
	}
	\par
	\subfloat[Ground truth by target identity%
	\label{fig:C6_exp2_gt_target}]{%
			\includegraphics[width=0.47\columnwidth]
			{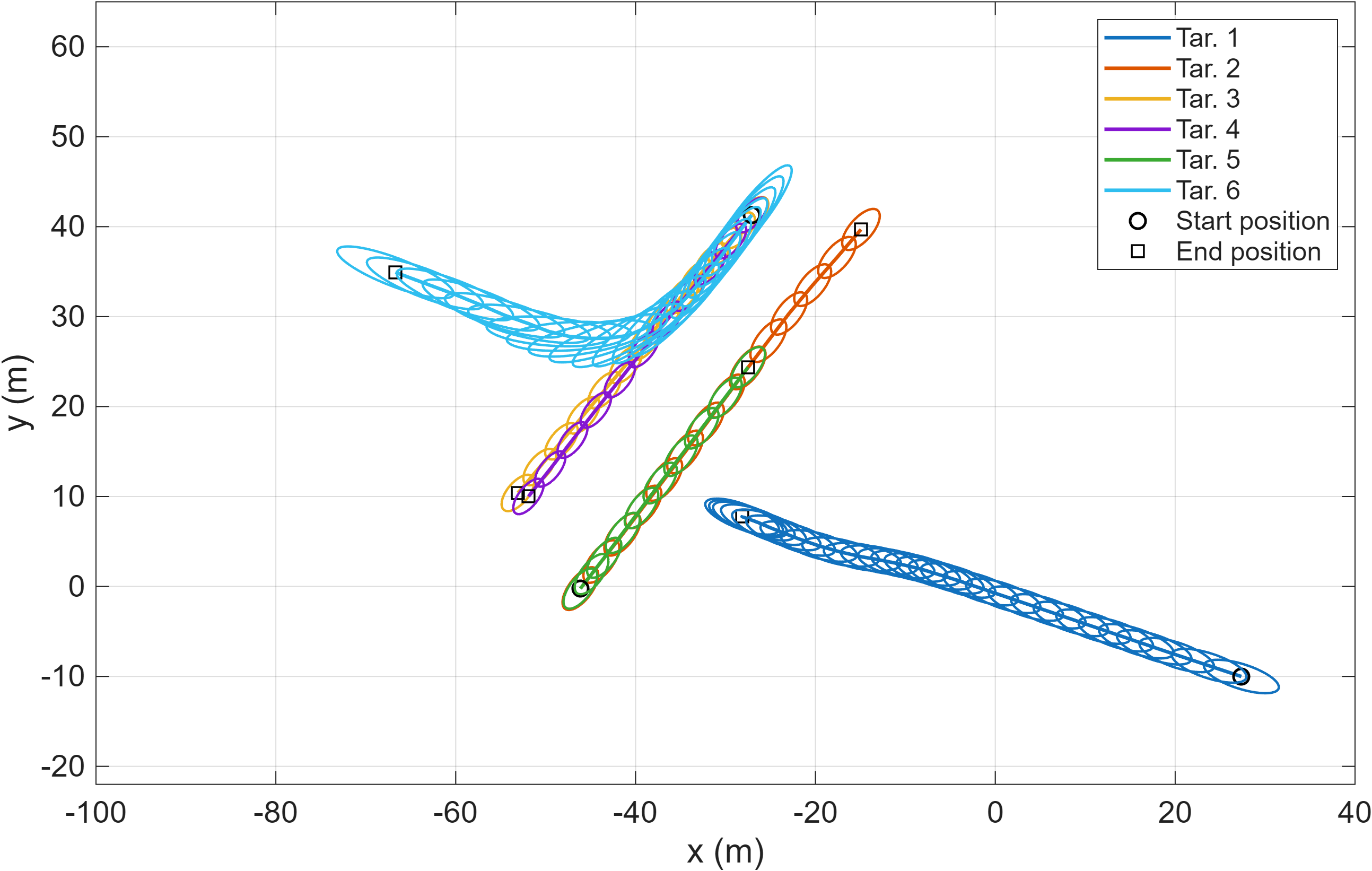}%
		}
		\hfill
	\subfloat[Ground truth by frame index%
	\label{fig:C6_exp2_gt_frame}]{%
			\includegraphics[width=0.47\columnwidth]
			{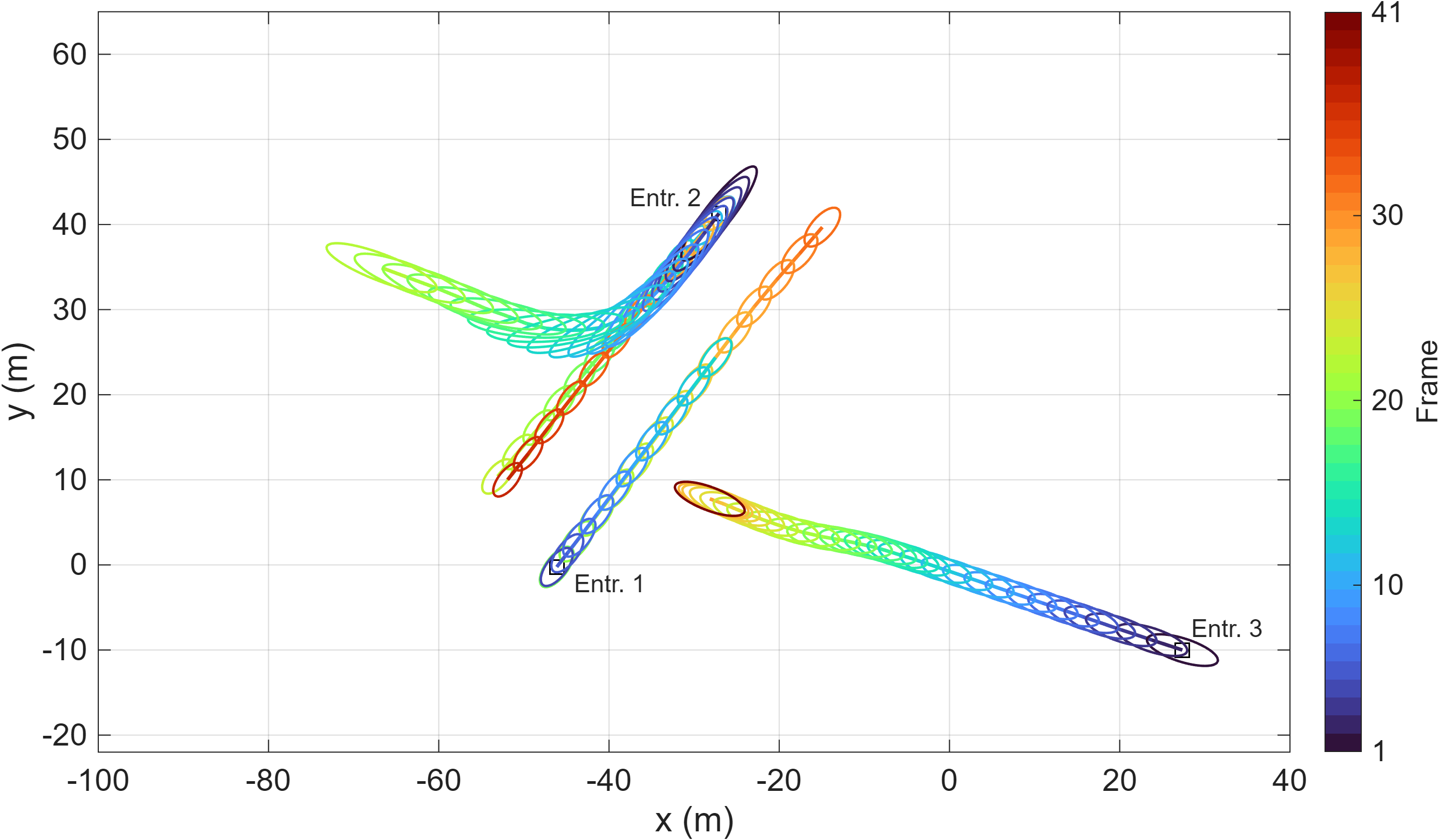}%
		}
	\caption{LiDAR measurements and ground truth in Experiment~2}
	\label{fig:C6_exp2_scene0757_measurement_process}
	\label{fig:C6_exp2_ground_truth}
\end{figure}
\endgroup

\looseness=-1
The ground-truth trajectories in the same
coordinate frame are shown in the lower row of
Fig.~\ref{fig:C6_exp2_ground_truth}, colored by target
identity in panel~(c) and frame index in panel~(d).
The resulting scenario contains six vehicle targets and 41 frames with
$\mathtt T=0.5~(\mathrm{s})$, corresponding to $20~(\mathrm{s})$ of
tracking, and at most four simultaneously present targets.
The surveillance region is
$[-100,40]\times[-22,65]~(\mathrm{m}^2)$. The entrances are Entr.~1 at
$[-46.138,-0.240]^{\mathrm T}$, Entr.~2 at
$[-27.078,41.298]^{\mathrm T}$, and Entr.~3 at
$[27.364,-10.008]^{\mathrm T}$, all in meters. Target~1 enters
through Entr.~3 at frame~1, moves leftward with a slight turn, and
remains until frame~41. Target~2 moves northeastward from Entr.~1 during
frames~19--32; Targets~3 and~4 move southwestward from Entr.~2 during
frames~10--23 and~27--37, respectively; and Target~5 moves northeastward
from Entr.~1 during frames~4--13. Target~6 enters through Entr.~2 at
frame~1, moves southwestward, turns northwest, and last
appears at frame~22.

\begingroup
All five methods process the same fixed 41-frame sequence containing
815 measurements, including 67 clutter points. The common
settings are $p^{\mathsf D}=1$, $p^{\mathsf S}=0.99$,
$\bm R=0.01\bm I_2~\mathrm{m}^2$, $\gamma^{\mathsf O}=6$,
$\lambda^{\mathsf C}=3$, and
$\mathtt q^{\bm r}=0.8~\mathrm{m}^2/\mathrm{s}^4$;
the TCPHD implementations use $L=1$. All methods start with no targets
and share a three-entrance birth prior with total intensity~0.3
and zero mean velocity; each TCPHD entrance component has weight~0.1.
One fixed $\mathcal N(\bm 0,\bm I_2~\mathrm{m}^2)$ position offset per entrance
is shared by all methods. The birth position and velocity covariances are
$4\bm I_2~\mathrm{m}^2$ and
$25\bm I_2~\mathrm{m}^2/\mathrm{s}^2$, respectively, with zero cross-covariance.

The shared extent initialization has semiaxis means
$(2.5,2.3)~\mathrm m$ and variances approximately
$(0.541,0.458)~\mathrm{m}^2$, with
$\hat\theta_{\beta_k}=\pi/6~\mathrm{rad}$ and
$\varXi^\theta_{\beta_k}=(\pi/6)^2~\mathrm{rad}^2$.
TCPHD-DOAM uses inverse-Gamma shapes $a_{u,\beta_k}=4$ and scales
$(b_{1,\beta_k},b_{2,\beta_k})=(20.372,17.243)~\mathrm{m}^2$.
The other methods use moment-matched extent priors to align the nominal
semiaxis lengths.
The remaining method-specific hyperparameters are unchanged from Experiment~1.
\endgroup

Fig.~\ref{fig:C6_exp2_single_run_five} shows estimates at
frames~5, 10, 15, and~25. The insets focus on Target~6 in the first three
panels and Target~1 in the last panel.

\begin{figure}[H]
	\centering
	\setlength{\abovecaptionskip}{4pt}
	\includegraphics[width=\columnwidth]
	{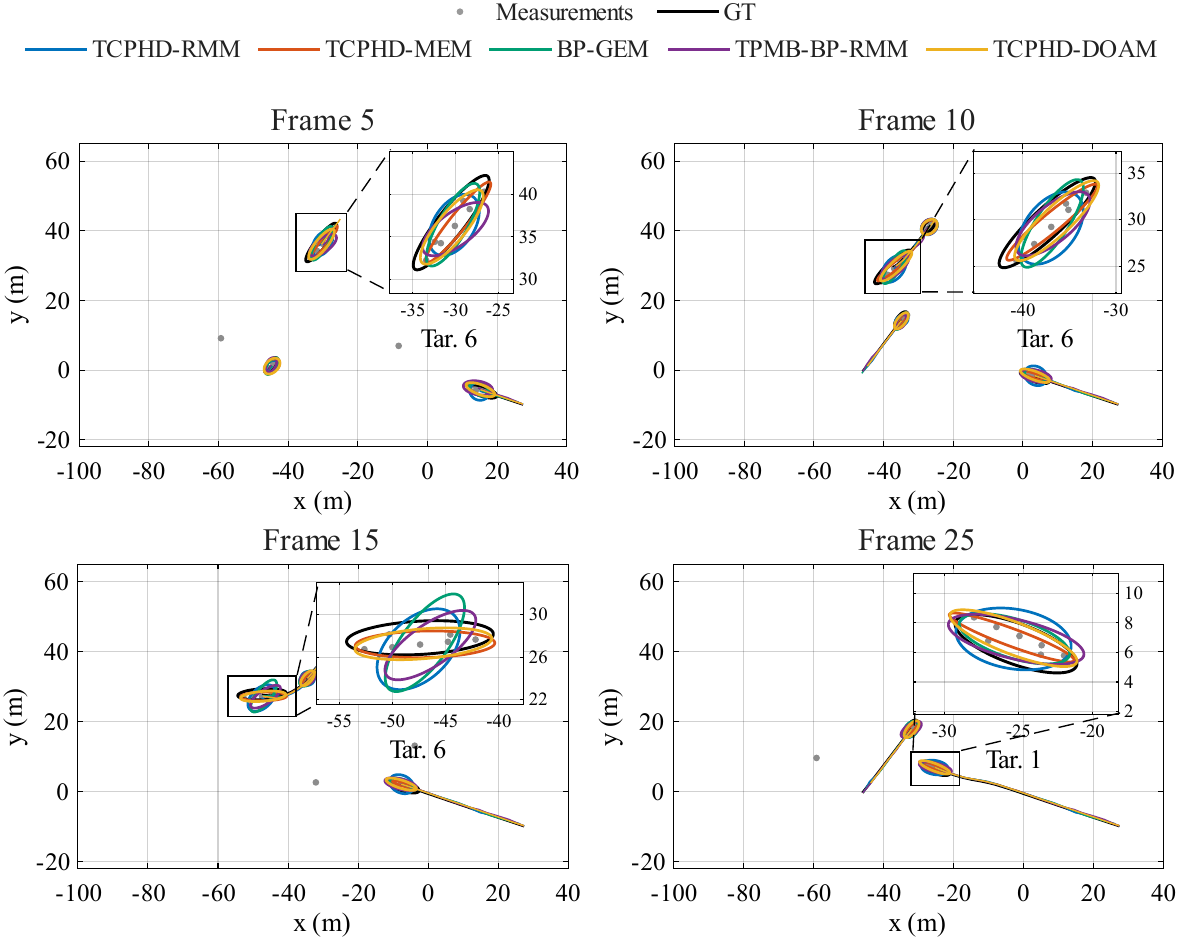}
	\caption{Representative tracking results in Experiment~2}
	\label{fig:C6_exp2_single_run_five}
\end{figure}

It can be seen that all five methods broadly recover the
displayed target positions. At frame~15, TCPHD-RMM, BP-GEM, and TPMB-BP-RMM show clear
orientation deviations for Target~6, whereas TCPHD-MEM and TCPHD-DOAM
follow its near-horizontal extent. At frame~25, TCPHD-MEM underestimates
Target~1's width more noticeably than TCPHD-DOAM.

\stepcounter{figure}

Fig.~\ref{fig:C6_exp2_metrics} presents the three
performance metrics for Experiment~2.
The results in Experiment~2 further corroborate the
findings of Experiment~1: TCPHD-DOAM attains the lowest time-averaged
orientation RMSE, GWD, and RMS-GOSPA on this sequence.
The increases around frames~10--20 in panels~(a) and~(b) are dominated
by Target~6's turn, where TCPHD-RMM, BP-GEM, and TPMB-BP-RMM exhibit
growing orientation lag and extent mismatch; centroid errors also
contribute to GWD. The drop at frame~23 coincides with this target's
exit. The TCPHD-RMM peak around frames~30--35 occurs when a target's
estimated extent becomes nearly circular. Its principal-axis direction
becomes unstable, producing a large angular error without a comparable
GWD peak.

\Needspace{18\baselineskip}
Further comparison of the time-averaged results in
Table~\ref{tab:C6_exp2_performance} shows that TCPHD-DOAM achieves the
lowest errors in all three metrics, followed by TCPHD-MEM. Although
TCPHD-DOAM has the highest CYC, its measured mean of $0.30~\mathrm{s}$
remains below the $0.5~\mathrm{s}$ sampling interval in this setup.
Therefore, the proposed method meets the required processing rate
on average.

\begingroup
\setlength{\tabcolsep}{0.6pt}
\setlength{\heavyrulewidth}{1.0pt}
\setlength{\lightrulewidth}{0.5pt}
\setlength{\arrayrulewidth}{0.5pt}
\setlength{\aboverulesep}{1.5pt}
\setlength{\belowrulesep}{1.5pt}
\begin{table}[!htbp]
\centering
\caption{Performance statistics in Experiment~2}
\label{tab:C6_exp2_performance}
\footnotesize
\papertablespacing
\begin{tabular*}{\columnwidth}{@{\extracolsep{\fill}}cccccc@{}}
\toprule
\textbf{Metric} & \makecell[c]{TCPHD-\\RMM} & \makecell[c]{TCPHD-\\MEM} & BP-GEM & \makecell[c]{TPMB-BP-\\RMM} & \makecell[c]{TCPHD-\\DOAM} \\
\midrule
Orientation RMSE ($^\circ$) $\downarrow$ & 16.39 & 4.38 & 12.89 & 13.82 & \textbf{2.84} \\
GWD (m) $\downarrow$ & 1.57 & 1.07 & 1.30 & 1.27 & \textbf{0.84} \\
RMS-GOSPA (m) $\downarrow$ & 2.22 & 1.77 & 1.84 & 1.86 & \textbf{1.44} \\
CYC (s) $\downarrow$ &\textbf{0.04} & 0.10 & 0.04 & 0.05 & 0.30 \\
\bottomrule
\end{tabular*}
\end{table}

\endgroup

Overall, under annotation-assisted measurement construction,
this experiment based on real LiDAR point clouds shows that TCPHD-DOAM
improves vehicle orientation and extent estimation during turns in the
presence of clutter, further supporting the findings of Experiment~1.
\par
\endgroup

\section{Conclusion}

\begingroup
This paper developed TCPHD-\DOAM for estimating the orientation, centroid,
and footprint of multiple road vehicles during vehicle turns. The \DOAM decouples
orientation from the axial scales, and structured variational updates were
derived for the corresponding trajectory-state sequences, likelihood, and
component weights. Fixed-lag smoothing further refines
historical vehicle states. Experiments with simulated intersection traffic
and real LiDAR point clouds show improved orientation and extent accuracy
during turns, while parameter studies demonstrate robustness across the
tested settings. These results support its use for vehicle tracking at
signalized intersections and for onboard LiDAR perception in road traffic.

Future work will accelerate the CAVI fixed-point iteration
with an acceptance criterion to reduce computational
cost~\cite{Saad2025FixedPointAcceleration}. We will also introduce a
Gamma-distributed measurement-rate state for joint variational estimation
with the trajectory kinematic, orientation, and SSAL states under unknown
and time-varying measurement
rates~\citepair{Wang2025GGIWPMBMSmoother}{Yin2026HeavyTailedGP}.
Furthermore, the TCPHD framework provides a compact trajectory
representation, but its independent and identically distributed cluster
approximation does not preserve dependencies induced by competing
association hypotheses~\cite{GarciaFernandez2019TPHDTCPHD}. To address this
limitation, we will investigate combining the DOAM with trajectory Poisson
multi-Bernoulli mixture
(TPMBM)~\cite{Granstrom2025PMBMTrajectories} or
TPMB~\cite{Xia2023TPMBBP} filtering for dense traffic with ambiguous
associations and temporary occlusions.
\par
\endgroup

\ifCLASSOPTIONcaptionsoff
  \newpage
\fi

\bibliographystyle{IEEEtran}
\bibliography{bibtex/references}

@IEEEtranBSTCTL{IEEEtran:BSTcontrol,
  CTLdash_repeated_names = "no"
}

@article{Cao2024RectangularObjectTITS,
  author  = {Cao, Xiaomeng and Lan, Jian and Liu, Yushuang and Tan, Boyi},
  title   = {Tracking of Rectangular Object Using Key Points with Regionally Concentrated Measurements},
  journal = {IEEE Transactions on Intelligent Transportation Systems},
  year    = {2024},
  volume  = {25},
  number  = {6},
  pages   = {5312--5327},
  doi     = {10.1109/TITS.2023.3332606}
}

@article{Kellner2016HighResolutionDoppler,
  author  = {Kellner, Dominik and Barjenbruch, Michael and Klappstein, Jens and Dickmann, J{\"u}rgen and Dietmayer, Klaus},
  title   = {Tracking of Extended Objects with High-Resolution {Doppler} Radar},
  journal = {IEEE Transactions on Intelligent Transportation Systems},
  year    = {2016},
  volume  = {17},
  number  = {5},
  pages   = {1341--1353},
  doi     = {10.1109/TITS.2015.2501759}
}

@article{2019YangShishanMEM,
  author  = {Yang, Shishan and Baum, Marcus},
  title   = {Tracking the Orientation and Axes Lengths of an Elliptical Extended Object},
  journal = {IEEE Transactions on Signal Processing},
  year    = {2019},
  volume  = {67},
  number  = {18},
  pages   = {4720--4729},
  doi     = {10.1109/TSP.2019.2929462}
}

@article{2025WeiShaoxiuTAES,
  author  = {Wei, Shaoxiu and Garc{\'i}a-Fern{\'a}ndez, {\'A}ngel F. and Yi, Wei},
  title   = {The Trajectory {PHD} Filter for Coexisting Point and Extended Target Tracking},
  journal = {IEEE Transactions on Aerospace and Electronic Systems},
  year    = {2025},
  volume  = {61},
  number  = {3},
  pages   = {5669--5685},
  doi     = {10.1109/TAES.2024.3521921}
}

@inproceedings{Caesar2020nuScenes,
  author    = {Caesar, Holger and Bankiti, Varun and Lang, Alex H. and Vora, Sourabh and Liong, Venice Erin and Xu, Qiang and Krishnan, Anush and Pan, Yu and Baldan, Giancarlo and Beijbom, Oscar},
  title     = {{nuScenes}: A Multimodal Dataset for Autonomous Driving},
  booktitle = {2020 IEEE/CVF Conference on Computer Vision and Pattern Recognition (CVPR)},
  year      = {2020},
  pages     = {11618--11628},
  doi       = {10.1109/CVPR42600.2020.01164}
}

@article{Cao2021EDA,
  author  = {Cao, Xiaomeng and Lan, Jian and Li, X. Rong},
  title   = {{Extension-Deformation Approach} to Extended Object Tracking},
  journal = {IEEE Transactions on Aerospace and Electronic Systems},
  year    = {2021},
  volume  = {57},
  number  = {2},
  pages   = {866--881},
  doi     = {10.1109/TAES.2020.3034028}
}

@article{Cheng2025VBMultipleExtended,
  author  = {Cheng, Yuanhao and Cao, Yunhe and Yeo, Tat-Soon and Zhang, Yulin and Fu, Jie},
  title   = {{Variational Bayesian Inference} for Multiple Extended Targets or Unresolved Group Targets Tracking},
  journal = {IET Radar, Sonar \& Navigation},
  year    = {2025},
  volume  = {19},
  number  = {1},
  pages   = {e70098},
  doi     = {10.1049/rsn2.70098}
}

@article{Cheng2025VGMCPHD,
  author  = {Cheng, Yuanhao and Cao, Yunhe and Yeo, Tat-Soon and Zhang, Yulin and Fu, Jie},
  title   = {{Variational Gaussian Mixture Model} for Tracking Multiple Extended Targets or Unresolvable Group Targets in Closely Spaced Scenarios},
  journal = {Remote Sensing},
  year    = {2025},
  volume  = {17},
  number  = {22},
  pages   = {3696},
  doi     = {10.3390/rs17223696}
}

@article{Cheng2026ExplicitExtent,
  author  = {Cheng, Yuanhao and Cao, Yunhe and Yeo, Tat-Soon and Fu, Jie and Zhang, Wei and Han, Mengmeng},
  title   = {Trajectory {PHD} and {CPHD} Filters for Tracking Multiple Extended Targets with Explicit Extent Estimation},
  journal = {Signal Processing},
  year    = {2026},
  volume  = {244},
  pages   = {110550},
  doi     = {10.1016/j.sigpro.2026.110550}
}

@article{GarciaFernandez2019TPHDTCPHD,
  author  = {Garc{\'i}a-Fern{\'a}ndez, {\'A}ngel F. and Svensson, Lennart},
  title   = {Trajectory {PHD} and {CPHD} Filters},
  journal = {IEEE Transactions on Signal Processing},
  year    = {2019},
  volume  = {67},
  number  = {22},
  pages   = {5702--5714},
  doi     = {10.1109/TSP.2019.2943234}
}

@article{Lundquist2013GGIWCPHD,
  author  = {Lundquist, Christian and Granstr{\"o}m, Karl and Orguner, Umut},
  title   = {An Extended Target {CPHD} Filter and a {Gamma Gaussian Inverse Wishart} Implementation},
  journal = {IEEE Journal of Selected Topics in Signal Processing},
  year    = {2013},
  volume  = {7},
  number  = {3},
  pages   = {472--483},
  doi     = {10.1109/JSTSP.2013.2245632}
}

@article{GarciaFernandez2022TreeTrajectories,
  author  = {Garc{\'i}a-Fern{\'a}ndez, {\'A}ngel F. and Svensson, Lennart},
  title   = {Tracking Multiple Spawning Targets Using {Poisson Multi-Bernoulli Mixtures} on Sets of Tree Trajectories},
  journal = {IEEE Transactions on Signal Processing},
  year    = {2022},
  volume  = {70},
  pages   = {1987--1999},
  doi     = {10.1109/TSP.2022.3165947}
}

@article{Granstrom2014NewPrediction,
  author  = {Granstr{\"o}m, Karl and Orguner, Umut},
  title   = {New Prediction for Extended Targets with {Random Matrices}},
  journal = {IEEE Transactions on Aerospace and Electronic Systems},
  year    = {2014},
  volume  = {50},
  number  = {2},
  pages   = {1577--1589},
  doi     = {10.1109/TAES.2014.120211}
}

@article{Granstrom2017Overview,
  author  = {Granstr{\"o}m, Karl and Baum, Marcus},
  title   = {Extended Object Tracking: Introduction, Overview, and Applications},
  journal = {Journal of Advances in Information Fusion},
  year    = {2017},
  volume  = {12},
  number  = {2},
  pages   = {139--174}
}

@article{Granstrom2025PMBMTrajectories,
  author  = {Granstr{\"o}m, Karl and Svensson, Lennart and Xia, Yuxuan and Williams, Jason L. and Garc{\'i}a-Fern{\'a}ndez, {\'A}ngel F.},
  title   = {{Poisson Multi-Bernoulli Mixtures} for Sets of Trajectories},
  journal = {IEEE Transactions on Aerospace and Electronic Systems},
  year    = {2025},
  volume  = {61},
  number  = {2},
  pages   = {5178--5194},
  doi     = {10.1109/TAES.2024.3517576}
}

@article{IMM-SOEKF,
  author  = {Wang, Shenghua and Men, Chenkai and Li, Renxian and Yeo, Tat-Soon},
  title   = {A Maneuvering Extended Target Tracking {IMM} Algorithm Based on Second-Order {EKF}},
  journal = {IEEE Transactions on Instrumentation and Measurement},
  year    = {2024},
  volume  = {73},
  pages   = {1--11},
  doi     = {10.1109/TIM.2024.3418076}
}

@article{Jiao2024DistributedRM,
  author  = {Jiao, Qinqin and Yang, Xiaojun},
  title   = {Distributed Variational Measurement Update for Extended Target Tracking With {Random Matrix}},
  journal = {IEEE Transactions on Aerospace and Electronic Systems},
  year    = {2024},
  volume  = {60},
  number  = {4},
  pages   = {3792--3806},
  doi     = {10.1109/TAES.2024.3368405}
}

@mastersthesis{Kartal-2022-VS-ETT,
  author = {Kartal, Sava{\c{s}} Erdem},
  title  = {Variational Smoothing for Extended Target Tracking with {Random Matrices}},
  school = {Middle East Technical University},
  year   = {2022},
  url    = {https://hdl.handle.net/11511/96808}
}

@article{Koch2008RM,
  author  = {Koch, Johann Wolfgang},
  title   = {{Bayesian} Approach to Extended Object and Cluster Tracking Using {Random Matrices}},
  journal = {IEEE Transactions on Aerospace and Electronic Systems},
  year    = {2008},
  volume  = {44},
  number  = {3},
  pages   = {1042--1059},
  doi     = {10.1109/TAES.2008.4655362}
}

@article{Lan-model1-2016,
  author  = {Lan, Jian and Li, X. Rong},
  title   = {Tracking of Extended Object or Target Group Using {Random Matrix}: New Model and Approach},
  journal = {IEEE Transactions on Aerospace and Electronic Systems},
  year    = {2016},
  volume  = {52},
  number  = {6},
  pages   = {2973--2989},
  doi     = {10.1109/TAES.2016.130346}
}

@article{Li2022VBEMCPHD,
  author  = {Li, Yawen and Wang, Bo},
  title   = {Multi-Extended Target Tracking Algorithm Based on {VBEM-CPHD}},
  journal = {International Journal of Pattern Recognition and Artificial Intelligence},
  year    = {2022},
  volume  = {36},
  number  = {6},
  pages   = {2250026},
  doi     = {10.1142/S0218001422500264}
}

@article{Li2023JDTCPMBMGGIW,
  author  = {Li, Yuansheng and Wei, Ping and You, Mingyi and Wei, Yifan and Zhang, Huaguo},
  title   = {Joint Detection, Tracking, and Classification of Multiple Extended Objects Based on the {JDTC-PMBM-GGIW} Filter},
  journal = {Remote Sensing},
  year    = {2023},
  volume  = {15},
  number  = {4},
  pages   = {887},
  doi     = {10.3390/rs15040887}
}

@article{Li-model2023,
  author  = {Li, Mingkai and Lan, Jian and Li, X. Rong},
  title   = {Tracking of Elliptical Object With Unknown but Fixed Lengths of Axes},
  journal = {IEEE Transactions on Aerospace and Electronic Systems},
  year    = {2023},
  volume  = {59},
  number  = {5},
  pages   = {6518--6533},
  doi     = {10.1109/TAES.2023.3276951}
}

@article{Liu2021EMExtendedObject,
  author  = {Liu, Shun and Liang, Yan and Xu, Linfeng and Li, Tiancheng and Hao, Xiaohui},
  title   = {{EM}-Based Extended Object Tracking Without a Priori Extension Evolution Model},
  journal = {Signal Processing},
  year    = {2021},
  volume  = {188},
  pages   = {108181},
  doi     = {10.1016/j.sigpro.2021.108181}
}

@article{Liu2022ConstrainedEM,
  author  = {Liu, Shun and Liang, Yan and Xu, Linfeng},
  title   = {Maneuvering Extended Object Tracking Based on Constrained {Expectation Maximization}},
  journal = {Signal Processing},
  year    = {2022},
  volume  = {201},
  pages   = {108729},
  doi     = {10.1016/j.sigpro.2022.108729}
}

@article{Liu2025GMMVB,
  author  = {Liu, Bao and Wu, Ziwei and Liu, Qiang},
  title   = {{Gaussian Mixture Model}-Based {Variational Bayesian} Approach for Extended Target Tracking},
  journal = {IEEE Transactions on Instrumentation and Measurement},
  year    = {2025},
  volume  = {74},
  pages   = {1--18},
  doi     = {10.1109/TIM.2025.3565347}
}

@article{Mihaylova2014SMCReviewGroupExtended,
  author  = {Mihaylova, Lyudmila and Carmi, Avishy Y. and Septier, Fran{\c{c}}ois and Gning, Amadou and Pang, Sze Kim and Godsill, Simon},
  title   = {Overview of {Bayesian} Sequential {Monte Carlo} Methods for Group and Extended Object Tracking},
  journal = {Digital Signal Processing},
  year    = {2014},
  volume  = {25},
  pages   = {1--16},
  doi     = {10.1016/j.dsp.2013.11.006}
}

@article{Orguner2012VB,
  author  = {Orguner, Umut},
  title   = {A Variational Measurement Update for Extended Target Tracking With {Random Matrices}},
  journal = {IEEE Transactions on Signal Processing},
  year    = {2012},
  volume  = {60},
  number  = {7},
  pages   = {3827--3834},
  doi     = {10.1109/TSP.2012.2192927}
}

@article{Pei2025ConstrainedRM,
  author  = {Pei, Shiqi and Wang, Zhen and Li, Ruiyuan and Liu, Jun and Li, Pin and Chen, Chang and Chen, Weidong},
  title   = {Constrained Extended Target Tracking With {Random Matrix} and Approximate Projection},
  journal = {IEEE Sensors Journal},
  year    = {2025},
  volume  = {25},
  number  = {20},
  pages   = {38594--38612},
  doi     = {10.1109/JSEN.2025.3609528}
}

@article{Sahin2024TrajectoryAligned,
  author  = {{\c{S}}ahin, Kurtulu{\c{s}} Kerem and Balc{\i}, Ali Emre and {\"O}zkan, Emre},
  title   = {{Random Matrix} Extended Target Tracking for Trajectory-Aligned and Drifting Targets},
  journal = {IET Radar, Sonar \& Navigation},
  year    = {2024},
  volume  = {18},
  number  = {11},
  pages   = {2247--2263},
  doi     = {10.1049/rsn2.12628}
}

@article{Sarkar2026IMM,
  author  = {Sarkar, Sanglap and Roy, Anirban},
  title   = {Extended Object Tracking Algorithms Within {Interacting Multiple Model} Framework for Automotive Radar Applications},
  journal = {IEEE Transactions on Radar Systems},
  year    = {2026},
  volume  = {4},
  pages   = {1415--1425},
  doi     = {10.1109/TRS.2026.3718282}
}

@inproceedings{Sjudin2021ExtendedTPHD,
  author    = {Sjudin, Jakob and Marcusson, Martin and Svensson, Lennart and Hammarstrand, Lars},
  title     = {Extended Object Tracking Using Sets of Trajectories with a {PHD} Filter},
  booktitle = {2021 IEEE 24th International Conference on Information Fusion (FUSION)},
  year      = {2021},
  pages     = {1--8},
  doi       = {10.23919/FUSION49465.2021.9626994}
}

@article{Steuernagel2025RBPF,
  author  = {Steuernagel, Simon and Baum, Marcus},
  title   = {Extended Object Tracking by {Rao--Blackwellized Particle Filtering} for Orientation Estimation},
  journal = {IEEE Transactions on Signal Processing},
  year    = {2025},
  volume  = {73},
  pages   = {2590--2602},
  doi     = {10.1109/TSP.2025.3574689}
}

@article{Tuncer-model-2021,
  author  = {Tuncer, Bark{\i}n and {\"O}zkan, Emre},
  title   = {{Random Matrix} Based Extended Target Tracking With Orientation: A New Model and Inference},
  journal = {IEEE Transactions on Signal Processing},
  year    = {2021},
  volume  = {69},
  pages   = {1910--1923},
  doi     = {10.1109/TSP.2021.3065136}
}

@article{Wang2022RobustTrajectoryRFS,
  author  = {Wang, Zhiwei and Lu, Zhejun and Liu, Yongxiang and Zhang, Chi},
  title   = {Robust Distributed Fusion with Trajectory Random Finite Sets},
  journal = {Signal Processing},
  year    = {2022},
  volume  = {200},
  pages   = {108675},
  doi     = {10.1016/j.sigpro.2022.108675}
}

@article{Wang2026IMMPF,
  author  = {Wang, Shenghua and Luo, Jiaxuan and Zhang, Shaohua and Cao, Yunhe and Yeo, Tat-Soon},
  title   = {Extended Target Tracking Algorithm Based on {Interactive Multiple Model Particle Filter}},
  journal = {IEEE Transactions on Instrumentation and Measurement},
  year    = {2026},
  volume  = {75},
  pages   = {8511414},
  doi     = {10.1109/TIM.2026.3711351}
}

@article{Wen2024VelocityOrientation,
  author  = {Wen, Zheng and Lan, Jian and Zheng, Le and Zeng, Tao},
  title   = {Velocity-Dependent Orientation Estimation Using Variance Adaptation for Extended Object Tracking},
  journal = {IEEE Signal Processing Letters},
  year    = {2024},
  volume  = {31},
  pages   = {3109--3113},
  doi     = {10.1109/LSP.2024.3492718}
}

@article{Xie2023MMPMBM,
  author  = {Xie, Xingxiang and Wang, Yang and Guo, Junqi and Zhou, Rundong},
  title   = {The {Multiple Model Poisson Multi-Bernoulli Mixture Filter} for Extended Target Tracking},
  journal = {IEEE Sensors Journal},
  year    = {2023},
  volume  = {23},
  number  = {13},
  pages   = {14304--14314},
  doi     = {10.1109/JSEN.2023.3270272}
}

@inproceedings{Xu2022SIND,
  author    = {Xu, Yanchao and Shao, Wenbo and Li, Jun and Yang, Kai and Wang, Weida and Huang, Hua and Lv, Chen and Wang, Hong},
  title     = {{SIND}: A Drone Dataset at Signalized Intersection in {China}},
  booktitle = {2022 IEEE 25th International Conference on Intelligent Transportation Systems (ITSC)},
  year      = {2022},
  pages     = {2471--2478},
  doi       = {10.1109/ITSC55140.2022.9921959}
}

@inproceedings{Yang2016GWDMetric,
  author    = {Yang, Shishan and Baum, Marcus and Granstr{\"o}m, Karl},
  title     = {Metrics for Performance Evaluation of Elliptic Extended Object Tracking Methods},
  booktitle = {2016 IEEE International Conference on Multisensor Fusion and Integration for Intelligent Systems (MFI)},
  year      = {2016},
  pages     = {523--528},
  doi       = {10.1109/MFI.2016.7849541}
}

@article{Yang2019NetworkFlow,
  author  = {Yang, Shishan and Teich, Florian and Baum, Marcus},
  title   = {Network Flow Labeling for Extended Target Tracking {PHD} Filters},
  journal = {IEEE Transactions on Industrial Informatics},
  year    = {2019},
  volume  = {15},
  number  = {7},
  pages   = {4164--4171},
  doi     = {10.1109/TII.2019.2898992}
}

@article{Yang2023AdaptiveVB,
  author  = {Yang, Xiaojun and Jiao, Qinqin},
  title   = {Variational Approximation for Adaptive Extended Target Tracking in Clutter With {Random Matrix}},
  journal = {IEEE Transactions on Vehicular Technology},
  year    = {2023},
  volume  = {72},
  number  = {10},
  pages   = {12639--12652},
  doi     = {10.1109/TVT.2023.3275633}
}

@article{Zhang-model-2025,
  author  = {Zhang, Le and Lan, Jian},
  title   = {Extended Object Tracking Using Aspect Ratio},
  journal = {IEEE Transactions on Signal Processing},
  year    = {2025},
  volume  = {73},
  pages   = {4193--4207},
  doi     = {10.1109/TSP.2024.3400870}
}

@article{Zheng2025HeavyTailed,
  author  = {Zheng, Xiangfei and Zhang, Yujie and Wu, Sunyong and Li, Hongwei},
  title   = {Extended Object Tracking With Inaccurate Heavy-Tailed Noises},
  journal = {IEEE Transactions on Instrumentation and Measurement},
  year    = {2025},
  volume  = {74},
  pages   = {1--10},
  doi     = {10.1109/TIM.2025.3614909}
}

@article{Zheng2025OrientationVector,
  author  = {Wen, Zheng and Zheng, Le and Zeng, Tao},
  title   = {Extended Object Tracking Using an Orientation Vector Based on Constrained Filtering},
  journal = {Remote Sensing},
  year    = {2025},
  volume  = {17},
  number  = {8},
  pages   = {1419},
  doi     = {10.3390/rs17081419}
}

@article{Tuncer2022MultiEllipsoidalVB,
  author  = {Tuncer, Bark{\i}n and Orguner, Umut and {\"O}zkan, Emre},
  title   = {Multi-Ellipsoidal Extended Target Tracking With {Variational Bayes} Inference},
  journal = {IEEE Transactions on Signal Processing},
  year    = {2022},
  volume  = {70},
  pages   = {3921--3934},
  doi     = {10.1109/TSP.2022.3192617}
}

@article{Saad2025FixedPointAcceleration,
  author  = {Saad, Yousef},
  title   = {Acceleration Methods for Fixed-Point Iterations},
  journal = {Acta Numerica},
  year    = {2025},
  volume  = {34},
  pages   = {805--890},
  doi     = {10.1017/S0962492924000096}
}

@article{Wang2025GGIWPMBMSmoother,
  author  = {Wang, Wenhui and Xu, Ye and Zhang, Kejie and Sun, Youpeng and Li, Peng},
  title   = {A {GGIW PMBM} Smoother for Multiple Extended Object Tracking},
  journal = {Electronics Letters},
  year    = {2025},
  volume  = {61},
  number  = {1},
  pages   = {e70188},
  doi     = {10.1049/ELL2.70188}
}

@article{Yin2026HeavyTailedGP,
  author  = {Yin, Xiaoqian and Yang, Xiaojun and Ning, Hang},
  title   = {Variational Extended Target Tracking With Heavy-Tailed Measurement Noise Using {Gaussian Processes}},
  journal = {IEEE Transactions on Automation Science and Engineering},
  year    = {2026},
  volume  = {23},
  pages   = {13874--13888},
  doi     = {10.1109/TASE.2026.3717112}
}

@article{Meyer2021ScalableEOT,
  author  = {Meyer, Florian and Williams, Jason L.},
  title   = {Scalable Detection and Tracking of Geometric Extended Objects},
  journal = {IEEE Transactions on Signal Processing},
  year    = {2021},
  volume  = {69},
  pages   = {6283--6298},
  doi     = {10.1109/TSP.2021.3121631}
}

@article{Xia2023TPMBBP,
  author  = {Xia, Yuxuan and Garc{\'i}a-Fern{\'a}ndez, {\'A}ngel F. and Meyer, Florian and Williams, Jason L. and Granstr{\"o}m, Karl and Svensson, Lennart},
  title   = {Trajectory {PMB} Filters for Extended Object Tracking Using {Belief Propagation}},
  journal = {IEEE Transactions on Aerospace and Electronic Systems},
  year    = {2023},
  volume  = {59},
  number  = {6},
  pages   = {9312--9331},
  doi     = {10.1109/TAES.2023.3317233}
}

@article{Liang2024NeuralBP,
  author  = {Liang, Mingchao and Meyer, Florian},
  title   = {{Neural Enhanced Belief Propagation} for Multiobject Tracking},
  journal = {IEEE Transactions on Signal Processing},
  year    = {2024},
  volume  = {72},
  pages   = {15--30},
  doi     = {10.1109/TSP.2023.3314275}
}

@article{Lyu2026PLMBBP,
  author  = {Lyu, Runyan and Hao, Liang and Zheng, Litao and Cai, Yunze},
  title   = {A Fast {Poisson Labeled Multi-Bernoulli} Filter for Extended Object Tracking Using {Belief Propagation}},
  journal = {Signal Processing},
  year    = {2026},
  volume  = {238},
  pages   = {110156},
  doi     = {10.1016/j.sigpro.2025.110156}
}

@inproceedings{Rahmathullah2017GOSPA,
  author    = {Rahmathullah, Abu Sajana and Garc{\'i}a-Fern{\'a}ndez, {\'A}ngel F. and Svensson, Lennart},
  title     = {Generalized Optimal Sub-Pattern Assignment Metric},
  booktitle = {2017 20th International Conference on Information Fusion (FUSION)},
  year      = {2017},
  pages     = {1--8},
  doi       = {10.23919/ICIF.2017.8009645}
}

\onecolumn
\raggedbottom
\begin{bibunit}[IEEEtran]
\bstctlcite{IEEEtran:BSTcontrol}
\makeatletter
\def\@extra@b@citeb{.supp}
\def\bibcite#1#2{%
	\global\@namedef{b@#1}{%
		\hyper@@link[cite]{}{cite.#1.supp}{#2}}%
}
\makeatother
\section*{Supplementary Material}
\setcounter{section}{0}
\renewcommand{\thesection}{S\arabic{section}}
\renewcommand{\theHsection}{supplementary.\arabic{section}}
\setcounter{equation}{0}
\renewcommand{\theequation}{S\arabic{equation}}
\renewcommand{\theHequation}{supplementary.\arabic{equation}}

\section{Proof of Proposition~\ref{prop:C6_r_update}}
\label{app:C6_r_update_proof}

According to the structured variational factorization in
Eq.~\eqref{eq:C6_traj_vb_factor} and the CAVI coordinate-update rule,
the variational factor of the posterior trajectory kinematic state
sequence satisfies
\begin{equation}
\begin{aligned}
\ln q_k^{\bm r,(j,\mathcal P,\mathcal C),[\ell+1]}
\!\left(\bm r^{1:n_{\pi_k}^{(j)}}\right)
&=
\mathbb E_{\substack{
		q_k^{\theta,(j,\mathcal P,\mathcal C),[\ell]}
		\prod_{u=1}^{2}q_k^{\ddot l_u,(j,\mathcal P,\mathcal C),[\ell]}\\
		{}\times\prod_{m=1}^{M_k^{\mathcal P,\mathcal C}}
		q_k^{\bm y^m,(j,\mathcal P,\mathcal C),[\ell]}}}
\!\Bigg[
\ln p_k\!\left(
\begin{gathered}
\bm r^{1:n_{\omega_k}^{(j)}},
\bm\theta^{1:n_{\omega_k}^{(j)}},
\ddot l_1^{1:n_{\omega_k}^{(j)}},
\ddot l_2^{1:n_{\omega_k}^{(j)}},\\
\mathcal Y_k^{(\mathcal P,\mathcal C)},
\mathcal Z_k^{(\mathcal P,\mathcal C)}
\end{gathered}
\right)
\Bigg]
+\mathtt c_{\bm r}
\end{aligned}
\label{eq:C6_proof_r_CAVI}
\end{equation}
\noindent where~$p_k(\cdot)$ denotes the joint density of all state
variables, measurement-source variables, and actual measurements for the
$j$-th predicted trajectory component and measurement
cell~$(\mathcal P,\mathcal C)$, and~$\mathtt c_{\bm r}$ denotes a term
independent of~$\bm r^{1:n_{\pi_k}^{(j)}}$.

The trajectory kinematic-state density of the $j$-th predicted
trajectory component is
\begin{equation}
p_{\omega_k}^{\bm r,(j)}
\!\left(\bm r^{1:n_{\omega_k}^{(j)}}\right)
=
\mathcal N\!\left(
\bm r^{1:n_{\omega_k}^{(j)}};
\hat{\bm\varGamma}_{\omega_k}^{(j)},
\bm\varXi_{\omega_k}^{\bm\varGamma,(j)}
\right)
\label{eq:C6_proof_r_pred_density}
\end{equation}

For the $m$-th measurement-source variable in the measurement cell, its
conditional density given the terminal trajectory state is
\begin{equation}
\begin{aligned}
p\!\left(
\bm y_k^{\mathcal P,\mathcal C,m}\mid
\bm r^{1:n_{\omega_k}^{(j)}},\theta^{n_{\omega_k}^{(j)}},
\ddot l_1^{n_{\omega_k}^{(j)}},
\ddot l_2^{n_{\omega_k}^{(j)}}
\right) &= \mathcal N\!\left(
\bm y_k^{\mathcal P,\mathcal C,m};
\dot{\bm H}_k^{(j)}\bm r^{1:n_{\omega_k}^{(j)}},
\mathtt s \mathcal{S}\left(\theta^{n_{\omega_k}^{(j)}},\ddot l_1^{n_{\omega_k}^{(j)}},
\ddot l_2^{n_{\omega_k}^{(j)}}\right)
\right)\\
&\quad=
\mathcal N\!\left(
\bm y_k^{\mathcal P,\mathcal C,m};
\dot{\bm H}_k^{(j)}\bm r^{1:n_{\omega_k}^{(j)}},
\mathtt s\mathcal R\left(\theta^{n_{\omega_k}^{(j)}}\right)\mathcal W\left(\ddot l_1^{n_{\omega_k}^{(j)}},
\ddot l_2^{n_{\omega_k}^{(j)}}\right)\mathcal R\left(-\theta^{n_{\omega_k}^{(j)}}\right)
\right)
\end{aligned}
\label{eq:C6_proof_r_source_density2}
\end{equation}
\begin{equation}
\dot{\bm H}_k^{(j)}
=
\bm H_k
\left(
\bm e_{n_{\omega_k}^{(j)}}^{\mathrm T}
\otimes\bm I_{d_{\bm r}}
\right)
\label{eq:C6_proof_terminal_measurement_matrix}
\end{equation}

Given~$\bm y_k^{\mathcal P,\mathcal C,m}$, the actual
measurement conditional density
$p(\bm z_k^{\mathcal P,\mathcal C,m}\mid
\bm y_k^{\mathcal P,\mathcal C,m})$ does not contain the trajectory
kinematic state. Hence, the terms in the joint density that depend on
the trajectory kinematic state satisfy
\begin{equation}
\begin{aligned}
&\ln q_k^{\bm r,(j,\mathcal P,\mathcal C),[\ell+1]}
\!\left(\bm r^{1:n_{\pi_k}^{(j)}}\right)
=\underbrace{
	-\frac{1}{2}
	\left(\bm r^{1:n_{\omega_k}^{(j)}}-
	\hat{\bm\varGamma}_{\omega_k}^{(j)}\right)^{\mathrm T}
	\left(\bm\varXi_{\omega_k}^{\bm\varGamma,(j)}\right)^{-1}
	\left(\bm r^{1:n_{\omega_k}^{(j)}}-
	\hat{\bm\varGamma}_{\omega_k}^{(j)}\right)
}_{\mathlarger{\blacktriangle}}\\
&\quad-\frac{1}{2}
\sum_{m=1}^{M_k^{\mathcal P,\mathcal C}}
\mathbb E_{\substack{
		q_k^{\theta,(j,\mathcal P,\mathcal C),[\ell]}
		\prod_{u=1}^{2}q_k^{\ddot l_u,(j,\mathcal P,\mathcal C),[\ell]}\\
		{}\times q_k^{\bm y^m,(j,\mathcal P,\mathcal C),[\ell]}}}
\!\Bigg[
\left(\underbrace{\bm y_k^{\mathcal P,\mathcal C,m}-
	\dot{\bm H}_k^{(j)}\bm r^{1:n_{\omega_k}^{(j)}}}_{\blacksquare}\right)^{\mathrm T}
\left(\mathtt s\mathcal{S}\left(\theta^{n_{\omega_k}^{(j)}},\ddot l_1^{n_{\omega_k}^{(j)}},
\ddot l_2^{n_{\omega_k}^{(j)}}\right)\right)^{-1}\left(\bm y_k^{\mathcal P,\mathcal C,m}-
\dot{\bm H}_k^{(j)}\bm r^{1:n_{\omega_k}^{(j)}}\right)
\Bigg]+\mathtt c_{\bm r}\\
&\quad=\mathlarger{\blacktriangle} -
\frac{1}{2}
\sum_{m=1}^{M_k^{\mathcal P,\mathcal C}}
\mathbb E_{q_k^{\bm y^m,(j,\mathcal P,\mathcal C),[\ell]}}
\!\Bigg[
\left(\blacksquare\right)^{\mathrm T}
\mathbb E_{\substack{
		q_k^{\theta,(j,\mathcal P,\mathcal C),[\ell]}
		\prod_{u=1}^{2}
		q_k^{\ddot l_u,(j,\mathcal P,\mathcal C),[\ell]}}}
\!\left[
\left(\mathtt s\mathcal R\left(\theta^{n_{\omega_k}^{(j)}}\right)\mathcal W\left(\ddot l_1^{n_{\omega_k}^{(j)}},
\ddot l_2^{n_{\omega_k}^{(j)}}\right)\mathcal R\left(-\theta^{n_{\omega_k}^{(j)}}\right)\right)^{-1}
\right]
\left(\blacksquare\right)
\Bigg]
+\mathtt c_{\bm r}
\end{aligned}
\label{eq:C6_proof_r_dependent_terms}
\end{equation}

Since~$\mathcal R\!\left(-\theta^{n_{\omega_k}^{(j)}}\right)
\mathcal R\!\left(\theta^{n_{\omega_k}^{(j)}}\right)=\bm I_2$, the inverse
terminal extent scatter covariance is
\begin{equation}
\begin{aligned}
\left(
\mathtt s\mathcal R\left(\theta^{n_{\omega_k}^{(j)}}\right)\mathcal W\left(\ddot l_1^{n_{\omega_k}^{(j)}},
\ddot l_2^{n_{\omega_k}^{(j)}}\right)\mathcal R\left(-\theta^{n_{\omega_k}^{(j)}}\right)
\right)^{-1}
&=
\mathcal R\left(\theta^{n_{\omega_k}^{(j)}}\right)
\operatorname{diag}\!\left(
(\mathtt s\ddot l_1^{n_{\omega_k}^{(j)}})^{-1},
(\mathtt s\ddot l_2^{n_{\omega_k}^{(j)}})^{-1}
\right)
\mathcal R\left(-\theta^{n_{\omega_k}^{(j)}}\right)
\end{aligned}
\label{eq:C6_proof_r_extent_precision}
\end{equation}

The reciprocal moment of the inverse-Gamma distribution gives
\begin{equation}
\begin{aligned}
\iota_{u,\pi_k}^{(j,\mathcal P,\mathcal C),
	n_{\pi_k}^{(j)},[\ell]}
=
\mathbb E_{q_k^{\ddot l_u,(j,\mathcal P,\mathcal C),[\ell]}}
\!\left[
(\mathtt s\ddot l_u^{n_{\omega_k}^{(j)}})^{-1}
\right]=\mathbb E_{q_k^{\ddot l_u,(j,\mathcal P,\mathcal C),[\ell]}}
\!\left[
(\mathtt s\ddot l_u^{n_{\pi_k}^{(j)}})^{-1}
\right]
=
\frac{
	a_{u,\pi_k}^{(j,\mathcal P,\mathcal C),n_{\pi_k}^{(j)}}
}{
\mathtt s b_{u,\pi_k}^{(j,\mathcal P,\mathcal C),
	n_{\pi_k}^{(j)},[\ell]}
},
\qquad u\in\{1,2\}
\end{aligned}
\label{eq:C6_proof_axis_inv_expectation}
\end{equation}

\begingroup

Substituting Eq.~\eqref{eq:C6_proof_r_extent_precision} into the
expectation in Eq.~\eqref{eq:C6_proof_r_dependent_terms} and then
applying the structured variational factorization in
Eq.~\eqref{eq:C6_traj_vb_factor} together with
Eq.~\eqref{eq:C6_proof_axis_inv_expectation} gives
\endgroup
\begin{equation}
\begin{aligned}
&\mathbb E_{\substack{
		q_k^{\theta,(j,\mathcal P,\mathcal C),[\ell]}
		\prod_{u=1}^{2}
		q_k^{\ddot l_u,(j,\mathcal P,\mathcal C),[\ell]}}}
\!\left[
\left(
\mathtt s \mathcal R\left(\theta^{n_{\omega_k}^{(j)}}\right)\mathcal W\left(\ddot l_1^{n_{\omega_k}^{(j)}},
\ddot l_2^{n_{\omega_k}^{(j)}}\right)\mathcal R\left(-\theta^{n_{\omega_k}^{(j)}}\right)
\right)^{-1}
\right]\\
&\quad=
\mathbb E_{\substack{
		q_k^{\theta,(j,\mathcal P,\mathcal C),[\ell]}
		\prod_{u=1}^{2}
		q_k^{\ddot l_u,(j,\mathcal P,\mathcal C),[\ell]}}}
\!\left[
\mathcal R\left(\theta^{n_{\omega_k}^{(j)}}\right)
\operatorname{diag}\!\left(
(\mathtt s\ddot l_1^{n_{\omega_k}^{(j)}})^{-1},
(\mathtt s\ddot l_2^{n_{\omega_k}^{(j)}})^{-1}
\right)
\mathcal R\left(-\theta^{n_{\omega_k}^{(j)}}\right)
\right]\\
&\quad=
\mathbb E_{q_k^{\theta,(j,\mathcal P,\mathcal C),[\ell]}}
\!\Bigg[
\mathcal R\left(\theta^{n_{\omega_k}^{(j)}}\right)
\mathbb E_{\prod_{u=1}^{2}
	q_k^{\ddot l_u,(j,\mathcal P,\mathcal C),[\ell]}}
\!\Bigg[
\operatorname{diag}\!\left(
(\mathtt s\ddot l_1^{n_{\omega_k}^{(j)}})^{-1},
(\mathtt s\ddot l_2^{n_{\omega_k}^{(j)}})^{-1}
\right)
\Bigg]
\mathcal R\left(-\theta^{n_{\omega_k}^{(j)}}\right)
\Bigg]\\
&\quad=
\mathbb E_{q_k^{\theta,(j,\mathcal P,\mathcal C),[\ell]}}
\!\left[
\mathcal R\left(\theta^{n_{\omega_k}^{(j)}}\right)
\operatorname{diag}\!\left(
\iota_{1,\pi_k}^{(j,\mathcal P,\mathcal C),
	n_{\pi_k}^{(j)},[\ell]},
\iota_{2,\pi_k}^{(j,\mathcal P,\mathcal C),
	n_{\pi_k}^{(j)},[\ell]}
\right)
\mathcal R\left(-\theta^{n_{\omega_k}^{(j)}}\right)
\right]\\
&\quad=
\bar{\bm G}_{\pi_k}^{(j,\mathcal P,\mathcal C),[\ell]}
\end{aligned}
\label{eq:C6_proof_r_expected_precision}
\end{equation}

Thus, the expectation with respect to the $m$-th measurement-source
variable satisfies
\begin{equation}
\begin{aligned}
&\mathbb E_{q_k^{\bm y^m,(j,\mathcal P,\mathcal C),[\ell]}}
\!\Bigg[
\left(
\bm y_k^{\mathcal P,\mathcal C,m}-
\dot{\bm H}_k^{(j)}
\bm r^{1:n_{\omega_k}^{(j)}}
\right)^{\mathrm T}
\bar{\bm G}_{\pi_k}^{(j,\mathcal P,\mathcal C),[\ell]}
\left(
\bm y_k^{\mathcal P,\mathcal C,m}-
\dot{\bm H}_k^{(j)}
\bm r^{1:n_{\omega_k}^{(j)}}
\right)
\Bigg]\\
&\quad=
\left(
\hat{\bm y}_{\pi_k}^{(j,\mathcal P,\mathcal C),m,[\ell]}-
\dot{\bm H}_k^{(j)}
\bm r^{1:n_{\omega_k}^{(j)}}
\right)^{\mathrm T}
\bar{\bm G}_{\pi_k}^{(j,\mathcal P,\mathcal C),[\ell]}
\left(
\hat{\bm y}_{\pi_k}^{(j,\mathcal P,\mathcal C),m,[\ell]}-
\dot{\bm H}_k^{(j)}
\bm r^{1:n_{\omega_k}^{(j)}}
\right)+
\operatorname{tr}\!\left[
\bar{\bm G}_{\pi_k}^{(j,\mathcal P,\mathcal C),[\ell]}
\bm\varXi_{\pi_k}^{\bm y,(j,\mathcal P,\mathcal C),m,[\ell]}
\right]
\end{aligned}
\label{eq:C6_proof_r_source_expectation}
\end{equation}

Define the arithmetic mean of the posterior measurement-source means as
\begin{equation}
\bar{\bm y}_{\pi_k}^{(j,\mathcal P,\mathcal C),[\ell]}
=
\frac{1}{M_k^{\mathcal P,\mathcal C}}
\sum_{m=1}^{M_k^{\mathcal P,\mathcal C}}
\hat{\bm y}_{\pi_k}^{(j,\mathcal P,\mathcal C),m,[\ell]}
\label{eq:C6_proof_r_equivalent_mean}
\end{equation}

Equation~\eqref{eq:C6_proof_r_equivalent_mean} implies
\begin{equation}
\begin{aligned}
\sum_{m=1}^{M_k^{\mathcal P,\mathcal C}}
\left(
\hat{\bm y}_{\pi_k}^{(j,\mathcal P,\mathcal C),m,[\ell]}
-
\bar{\bm y}_{\pi_k}^{(j,\mathcal P,\mathcal C),[\ell]}
\right)
&=
\sum_{m=1}^{M_k^{\mathcal P,\mathcal C}}
\hat{\bm y}_{\pi_k}^{(j,\mathcal P,\mathcal C),m,[\ell]}-
M_k^{\mathcal P,\mathcal C}
\bar{\bm y}_{\pi_k}^{(j,\mathcal P,\mathcal C),[\ell]}
=
\bm 0
\end{aligned}
\label{eq:C6_proof_r_centered_mean_sum}
\end{equation}

By adding and subtracting
$\bar{\bm y}_{\pi_k}^{(j,\mathcal P,\mathcal C),[\ell]}$ in each
residual and expanding the quadratic form, it follows that
\begin{equation}
\begin{aligned}
&\sum_{m=1}^{M_k^{\mathcal P,\mathcal C}}
\left(
\hat{\bm y}_{\pi_k}^{(j,\mathcal P,\mathcal C),m,[\ell]}
-
\dot{\bm H}_k^{(j)}
\bm r^{1:n_{\omega_k}^{(j)}}
\right)^{\mathrm T}
\bar{\bm G}_{\pi_k}^{(j,\mathcal P,\mathcal C),[\ell]}
\left(
\hat{\bm y}_{\pi_k}^{(j,\mathcal P,\mathcal C),m,[\ell]}
-
\dot{\bm H}_k^{(j)}
\bm r^{1:n_{\omega_k}^{(j)}}
\right)\\
&\quad=
\sum_{m=1}^{M_k^{\mathcal P,\mathcal C}}
\Bigg[
\left(
\hat{\bm y}_{\pi_k}^{(j,\mathcal P,\mathcal C),m,[\ell]}
-
\bar{\bm y}_{\pi_k}^{(j,\mathcal P,\mathcal C),[\ell]}
+
\bar{\bm y}_{\pi_k}^{(j,\mathcal P,\mathcal C),[\ell]}
-
\dot{\bm H}_k^{(j)}
\bm r^{1:n_{\omega_k}^{(j)}}
\right)^{\mathrm T}\\
&\hspace{14mm}\times
\bar{\bm G}_{\pi_k}^{(j,\mathcal P,\mathcal C),[\ell]}
\left(
\hat{\bm y}_{\pi_k}^{(j,\mathcal P,\mathcal C),m,[\ell]}
-
\bar{\bm y}_{\pi_k}^{(j,\mathcal P,\mathcal C),[\ell]}
+
\bar{\bm y}_{\pi_k}^{(j,\mathcal P,\mathcal C),[\ell]}
-
\dot{\bm H}_k^{(j)}
\bm r^{1:n_{\omega_k}^{(j)}}
\right)
\Bigg]\\
&\quad=
\sum_{m=1}^{M_k^{\mathcal P,\mathcal C}}
\left(
\hat{\bm y}_{\pi_k}^{(j,\mathcal P,\mathcal C),m,[\ell]}
-
\bar{\bm y}_{\pi_k}^{(j,\mathcal P,\mathcal C),[\ell]}
\right)^{\mathrm T}
\bar{\bm G}_{\pi_k}^{(j,\mathcal P,\mathcal C),[\ell]}
\left(
\hat{\bm y}_{\pi_k}^{(j,\mathcal P,\mathcal C),m,[\ell]}
-
\bar{\bm y}_{\pi_k}^{(j,\mathcal P,\mathcal C),[\ell]}
\right)\\
&\qquad+
\underbrace{
	\sum_{m=1}^{M_k^{\mathcal P,\mathcal C}}
	\left(
	\hat{\bm y}_{\pi_k}^{(j,\mathcal P,\mathcal C),m,[\ell]}
	-
	\bar{\bm y}_{\pi_k}^{(j,\mathcal P,\mathcal C),[\ell]}
	\right)^{\mathrm T}
	\bar{\bm G}_{\pi_k}^{(j,\mathcal P,\mathcal C),[\ell]}
	\left(
	\bar{\bm y}_{\pi_k}^{(j,\mathcal P,\mathcal C),[\ell]}
	-
	\dot{\bm H}_k^{(j)}
	\bm r^{1:n_{\omega_k}^{(j)}}
	\right)
}_{0}\\
&\qquad+
\underbrace{
	\sum_{m=1}^{M_k^{\mathcal P,\mathcal C}}
	\left(
	\bar{\bm y}_{\pi_k}^{(j,\mathcal P,\mathcal C),[\ell]}
	-
	\dot{\bm H}_k^{(j)}
	\bm r^{1:n_{\omega_k}^{(j)}}
	\right)^{\mathrm T}
	\bar{\bm G}_{\pi_k}^{(j,\mathcal P,\mathcal C),[\ell]}
	\left(
	\hat{\bm y}_{\pi_k}^{(j,\mathcal P,\mathcal C),m,[\ell]}
	-
	\bar{\bm y}_{\pi_k}^{(j,\mathcal P,\mathcal C),[\ell]}
	\right)
}_{0}\\
&\qquad+
\sum_{m=1}^{M_k^{\mathcal P,\mathcal C}}
\left(
\bar{\bm y}_{\pi_k}^{(j,\mathcal P,\mathcal C),[\ell]}
-
\dot{\bm H}_k^{(j)}
\bm r^{1:n_{\omega_k}^{(j)}}
\right)^{\mathrm T}
\bar{\bm G}_{\pi_k}^{(j,\mathcal P,\mathcal C),[\ell]}
\left(
\bar{\bm y}_{\pi_k}^{(j,\mathcal P,\mathcal C),[\ell]}
-
\dot{\bm H}_k^{(j)}
\bm r^{1:n_{\omega_k}^{(j)}}
\right)\\
&\quad=
\sum_{m=1}^{M_k^{\mathcal P,\mathcal C}}
\left(
\hat{\bm y}_{\pi_k}^{(j,\mathcal P,\mathcal C),m,[\ell]}
-
\bar{\bm y}_{\pi_k}^{(j,\mathcal P,\mathcal C),[\ell]}
\right)^{\mathrm T}
\bar{\bm G}_{\pi_k}^{(j,\mathcal P,\mathcal C),[\ell]}
\left(
\hat{\bm y}_{\pi_k}^{(j,\mathcal P,\mathcal C),m,[\ell]}
-
\bar{\bm y}_{\pi_k}^{(j,\mathcal P,\mathcal C),[\ell]}
\right)\\
&\qquad+
M_k^{\mathcal P,\mathcal C}
\left(
\bar{\bm y}_{\pi_k}^{(j,\mathcal P,\mathcal C),[\ell]}
-
\dot{\bm H}_k^{(j)}
\bm r^{1:n_{\omega_k}^{(j)}}
\right)^{\mathrm T}
\bar{\bm G}_{\pi_k}^{(j,\mathcal P,\mathcal C),[\ell]}
\left(
\bar{\bm y}_{\pi_k}^{(j,\mathcal P,\mathcal C),[\ell]}
-
\dot{\bm H}_k^{(j)}
\bm r^{1:n_{\omega_k}^{(j)}}
\right)
\end{aligned}
\label{eq:C6_proof_r_mean_expansion}
\end{equation}

\begingroup

Substituting Eq.~\eqref{eq:C6_proof_r_source_expectation} into the
measurement-dependent term in
Eq.~\eqref{eq:C6_proof_r_dependent_terms} and then applying
Eq.~\eqref{eq:C6_proof_r_mean_expansion} gives
\endgroup
\begin{equation}
\begin{aligned}
&-\frac{1}{2}
\sum_{m=1}^{M_k^{\mathcal P,\mathcal C}}
\mathbb E_{q_k^{\bm y^m,(j,\mathcal P,\mathcal C),[\ell]}}
\!\Bigg[
\left(
\bm y_k^{\mathcal P,\mathcal C,m}
-
\dot{\bm H}_k^{(j)}
\bm r^{1:n_{\omega_k}^{(j)}}
\right)^{\mathrm T}
\bar{\bm G}_{\pi_k}^{(j,\mathcal P,\mathcal C),[\ell]}
\left(
\bm y_k^{\mathcal P,\mathcal C,m}
-
\dot{\bm H}_k^{(j)}
\bm r^{1:n_{\omega_k}^{(j)}}
\right)
\Bigg]\\
&\quad=
-\frac{1}{2}
\sum_{m=1}^{M_k^{\mathcal P,\mathcal C}}
\Bigg[
\left(
\hat{\bm y}_{\pi_k}^{(j,\mathcal P,\mathcal C),m,[\ell]}
-
\dot{\bm H}_k^{(j)}
\bm r^{1:n_{\omega_k}^{(j)}}
\right)^{\mathrm T}
\bar{\bm G}_{\pi_k}^{(j,\mathcal P,\mathcal C),[\ell]}
\left(
\hat{\bm y}_{\pi_k}^{(j,\mathcal P,\mathcal C),m,[\ell]}
-
\dot{\bm H}_k^{(j)}
\bm r^{1:n_{\omega_k}^{(j)}}
\right)+
\operatorname{tr}\!\left[
\bar{\bm G}_{\pi_k}^{(j,\mathcal P,\mathcal C),[\ell]}
\bm\varXi_{\pi_k}^{\bm y,(j,\mathcal P,\mathcal C),m,[\ell]}
\right]
\Bigg]\\
&\quad=
-\frac{1}{2}
\sum_{m=1}^{M_k^{\mathcal P,\mathcal C}}
\left(
\hat{\bm y}_{\pi_k}^{(j,\mathcal P,\mathcal C),m,[\ell]}
-
\bar{\bm y}_{\pi_k}^{(j,\mathcal P,\mathcal C),[\ell]}
\right)^{\mathrm T}
\bar{\bm G}_{\pi_k}^{(j,\mathcal P,\mathcal C),[\ell]}
\left(
\hat{\bm y}_{\pi_k}^{(j,\mathcal P,\mathcal C),m,[\ell]}
-
\bar{\bm y}_{\pi_k}^{(j,\mathcal P,\mathcal C),[\ell]}
\right)\\
&\qquad-
\frac{1}{2}
M_k^{\mathcal P,\mathcal C}
\left(
\bar{\bm y}_{\pi_k}^{(j,\mathcal P,\mathcal C),[\ell]}
-
\dot{\bm H}_k^{(j)}
\bm r^{1:n_{\omega_k}^{(j)}}
\right)^{\mathrm T}
\bar{\bm G}_{\pi_k}^{(j,\mathcal P,\mathcal C),[\ell]}
\left(
\bar{\bm y}_{\pi_k}^{(j,\mathcal P,\mathcal C),[\ell]}
-
\dot{\bm H}_k^{(j)}
\bm r^{1:n_{\omega_k}^{(j)}}
\right)\\
&\qquad-
\frac{1}{2}
\sum_{m=1}^{M_k^{\mathcal P,\mathcal C}}
\operatorname{tr}\!\left[
\bar{\bm G}_{\pi_k}^{(j,\mathcal P,\mathcal C),[\ell]}
\bm\varXi_{\pi_k}^{\bm y,(j,\mathcal P,\mathcal C),m,[\ell]}
\right]
\end{aligned}
\label{eq:C6_proof_r_measurement_contribution}
\end{equation}

\begingroup

The first and third terms on the right-hand side of
Eq.~\eqref{eq:C6_proof_r_measurement_contribution} are independent of
$\bm r^{1:n_{\omega_k}^{(j)}}$. Substituting
Eq.~\eqref{eq:C6_proof_r_measurement_contribution} into
Eq.~\eqref{eq:C6_proof_r_dependent_terms} and absorbing all terms
independent of~$\bm r^{1:n_{\omega_k}^{(j)}}$ into~$\mathtt c_{\bm r}$
gives the following equivalent quadratic form:
\endgroup
\begin{equation}
\begin{aligned}
\ln q_k^{\bm r,(j,\mathcal P,\mathcal C),[\ell+1]}
\!\left(\bm r^{1:n_{\pi_k}^{(j)}}\right)
&=-\frac{1}{2}
\left(\bm r^{1:n_{\omega_k}^{(j)}}-
\hat{\bm\varGamma}_{\omega_k}^{(j)}\right)^{\mathrm T}
\left(\bm\varXi_{\omega_k}^{\bm\varGamma,(j)}\right)^{-1}
\left(\bm r^{1:n_{\omega_k}^{(j)}}-
\hat{\bm\varGamma}_{\omega_k}^{(j)}\right)\\
&\quad-\frac{1}{2}M_k^{\mathcal P,\mathcal C}
\left(\bar{\bm y}_{\pi_k}^{(j,\mathcal P,\mathcal C),[\ell]}-
\dot{\bm H}_k^{(j)}\bm r^{1:n_{\omega_k}^{(j)}}\right)^{\mathrm T}
\bar{\bm G}_{\pi_k}^{(j,\mathcal P,\mathcal C),[\ell]}
\left(\bar{\bm y}_{\pi_k}^{(j,\mathcal P,\mathcal C),[\ell]}-
\dot{\bm H}_k^{(j)}\bm r^{1:n_{\omega_k}^{(j)}}\right)
+\mathtt c_{\bm r}
\end{aligned}
\label{eq:C6_proof_r_mean_quadratic_form}
\end{equation}

Expanding the two quadratic forms in
Eq.~\eqref{eq:C6_proof_r_mean_quadratic_form} and retaining the terms
that depend on~$\bm r^{1:n_{\pi_k}^{(j)}}$ gives
\begin{equation}
\begin{aligned}
\ln q_k^{\bm r,(j,\mathcal P,\mathcal C),[\ell+1]}
\!\left(\bm r^{1:n_{\pi_k}^{(j)}}\right)&=-\frac{1}{2}
\left(\bm r^{1:n_{\pi_k}^{(j)}}\right)^{\mathrm T}
\Bigg[
\left(\bm\varXi_{\omega_k}^{\bm\varGamma,(j)}\right)^{-1}
+M_k^{\mathcal P,\mathcal C}
(\dot{\bm H}_k^{(j)})^{\mathrm T}
\bar{\bm G}_{\pi_k}^{(j,\mathcal P,\mathcal C),[\ell]}
\dot{\bm H}_k^{(j)}
\Bigg]
\bm r^{1:n_{\pi_k}^{(j)}}\\
&\quad+
\left(\bm r^{1:n_{\pi_k}^{(j)}}\right)^{\mathrm T}
\Bigg[
\left(\bm\varXi_{\omega_k}^{\bm\varGamma,(j)}\right)^{-1}
\hat{\bm\varGamma}_{\omega_k}^{(j)}
+M_k^{\mathcal P,\mathcal C}
(\dot{\bm H}_k^{(j)})^{\mathrm T}
\bar{\bm G}_{\pi_k}^{(j,\mathcal P,\mathcal C),[\ell]}
\bar{\bm y}_{\pi_k}^{(j,\mathcal P,\mathcal C),[\ell]}
\Bigg]
+\mathtt c_{\bm r}
\end{aligned}
\label{eq:C6_proof_r_quadratic_form}
\end{equation}

Equation~\eqref{eq:C6_proof_r_quadratic_form} is quadratic in the
trajectory kinematic state, and its quadratic-term coefficient matrix is
positive definite. Therefore, its variational posterior is Gaussian. Since the measurement cell is nonempty,
$M_k^{\mathcal P,\mathcal C}>0$. The inverse-Gamma parameters are
positive, so
$\iota_{u,\pi_k}^{(j,\mathcal P,\mathcal C),
	n_{\pi_k}^{(j)},[\ell]}>0$ for~$u\in\{1,2\}$. Hence,
$\bar{\bm G}_{\pi_k}^{(j,\mathcal P,\mathcal C),[\ell]}$ is positive
definite. Therefore,
$M_k^{\mathcal P,\mathcal C}(\dot{\bm H}_k^{(j)})^{\mathrm T}
\bar{\bm G}_{\pi_k}^{(j,\mathcal P,\mathcal C),[\ell]}
\dot{\bm H}_k^{(j)}$ is positive semidefinite, while
$\left(\bm\varXi_{\omega_k}^{\bm\varGamma,(j)}\right)^{-1}$ is positive
definite.

The information form of the posterior Gaussian density is
\begin{equation}
\begin{aligned}
\ln q_k^{\bm r,(j,\mathcal P,\mathcal C),[\ell+1]}
\!\left(\bm r^{1:n_{\pi_k}^{(j)}}\right)&=
-\frac{1}{2}
\left(\bm r^{1:n_{\pi_k}^{(j)}}\right)^{\mathrm T}
\left(
\bm\varXi_{\pi_k}^{\bm\varGamma,
	(j,\mathcal P,\mathcal C),[\ell+1]}
\right)^{-1}
\bm r^{1:n_{\pi_k}^{(j)}}\\
&\quad+
\left(\bm r^{1:n_{\pi_k}^{(j)}}\right)^{\mathrm T}
\left(
\bm\varXi_{\pi_k}^{\bm\varGamma,
	(j,\mathcal P,\mathcal C),[\ell+1]}
\right)^{-1}
\hat{\bm\varGamma}_{\pi_k}^{(j,\mathcal P,\mathcal C),[\ell+1]}
+\mathtt c_{\bm r}
\end{aligned}
\label{eq:C6_proof_r_posterior_information_form}
\end{equation}

\begingroup

Comparing Eq.~\eqref{eq:C6_proof_r_quadratic_form} with the Gaussian
information form in Eq.~\eqref{eq:C6_proof_r_posterior_information_form}
gives
\endgroup
\begin{equation}
\begin{aligned}
\left(
\bm\varXi_{\pi_k}^{\bm\varGamma,
	(j,\mathcal P,\mathcal C),[\ell+1]}
\right)^{-1}=\left(\bm\varXi_{\omega_k}^{\bm\varGamma,(j)}\right)^{-1}
+M_k^{\mathcal P,\mathcal C}
(\dot{\bm H}_k^{(j)})^{\mathrm T}
\bar{\bm G}_{\pi_k}^{(j,\mathcal P,\mathcal C),[\ell]}
\dot{\bm H}_k^{(j)}
\end{aligned}
\label{eq:C6_proof_r_information_matrix}
\end{equation}
\begin{equation}
\begin{aligned}
\left(
\bm\varXi_{\pi_k}^{\bm\varGamma,
	(j,\mathcal P,\mathcal C),[\ell+1]}
\right)^{-1}
\hat{\bm\varGamma}_{\pi_k}^{(j,\mathcal P,\mathcal C),[\ell+1]}
&=
\left(\bm\varXi_{\omega_k}^{\bm\varGamma,(j)}\right)^{-1}
\hat{\bm\varGamma}_{\omega_k}^{(j)}+
M_k^{\mathcal P,\mathcal C}
(\dot{\bm H}_k^{(j)})^{\mathrm T}
\bar{\bm G}_{\pi_k}^{(j,\mathcal P,\mathcal C),[\ell]}
\bar{\bm y}_{\pi_k}^{(j,\mathcal P,\mathcal C),[\ell]}
\end{aligned}
\label{eq:C6_proof_r_information_vector}
\end{equation}

Applying the matrix inversion lemma to
Eq.~\eqref{eq:C6_proof_r_information_matrix} gives
\begin{equation}
\begin{aligned}
\bm\varXi_{\pi_k}^{\bm\varGamma,
	(j,\mathcal P,\mathcal C),[\ell+1]}
&=
\Bigg[
\left(
\bm\varXi_{\omega_k}^{\bm\varGamma,(j)}
\right)^{-1}
+
(\dot{\bm H}_k^{(j)})^{\mathrm T}
\left[
M_k^{\mathcal P,\mathcal C}
\bar{\bm G}_{\pi_k}^{(j,\mathcal P,\mathcal C),[\ell]}
\right]
\dot{\bm H}_k^{(j)}
\Bigg]^{-1}\\
&\quad=
\bm\varXi_{\omega_k}^{\bm\varGamma,(j)}
-
\bm\varXi_{\omega_k}^{\bm\varGamma,(j)}
(\dot{\bm H}_k^{(j)})^{\mathrm T}
\Bigg[
\left[
M_k^{\mathcal P,\mathcal C}
\bar{\bm G}_{\pi_k}^{(j,\mathcal P,\mathcal C),[\ell]}
\right]^{-1}+
\dot{\bm H}_k^{(j)}
\bm\varXi_{\omega_k}^{\bm\varGamma,(j)}
(\dot{\bm H}_k^{(j)})^{\mathrm T}
\Bigg]^{-1}
\dot{\bm H}_k^{(j)}
\bm\varXi_{\omega_k}^{\bm\varGamma,(j)}\\
&\quad=
\bm\varXi_{\omega_k}^{\bm\varGamma,(j)}
-
\bm\varXi_{\omega_k}^{\bm\varGamma,(j)}
(\dot{\bm H}_k^{(j)})^{\mathrm T}
\Bigg[
\dot{\bm H}_k^{(j)}
\bm\varXi_{\omega_k}^{\bm\varGamma,(j)}
(\dot{\bm H}_k^{(j)})^{\mathrm T}+
\left[
M_k^{\mathcal P,\mathcal C}
\bar{\bm G}_{\pi_k}^{(j,\mathcal P,\mathcal C),[\ell]}
\right]^{-1}
\Bigg]^{-1}
\dot{\bm H}_k^{(j)}
\bm\varXi_{\omega_k}^{\bm\varGamma,(j)}
\end{aligned}
\label{eq:C6_proof_r_covariance_result_expanded}
\end{equation}

The two terms in Eq.~\eqref{eq:C6_proof_r_covariance_result_expanded},
$\dot{\bm H}_k^{(j)}
\bm\varXi_{\omega_k}^{\bm\varGamma,(j)}
(\dot{\bm H}_k^{(j)})^{\mathrm T}$ and
$\left[
M_k^{\mathcal P,\mathcal C}
\bar{\bm G}_{\pi_k}^{(j,\mathcal P,\mathcal C),[\ell]}
\right]^{-1}$, are respectively the covariance matrices of the
predicted measurement and the equivalent measurement error. Therefore, their sum is the
covariance matrix of the equivalent measurement innovation, defined as

\begin{equation}
\begin{aligned}
\bm\varXi_{\pi_k}^{\bar{\bm y},
	(j,\mathcal P,\mathcal C),[\ell]}
=
\dot{\bm H}_k^{(j)}
\bm\varXi_{\omega_k}^{\bm\varGamma,(j)}
(\dot{\bm H}_k^{(j)})^{\mathrm T}+
\left[
M_k^{\mathcal P,\mathcal C}
\bar{\bm G}_{\pi_k}^{(j,\mathcal P,\mathcal C),[\ell]}
\right]^{-1}
\end{aligned}
\label{eq:C6_proof_r_innovation_cov}
\end{equation}

Substituting Eq.~\eqref{eq:C6_proof_r_innovation_cov} into
Eq.~\eqref{eq:C6_proof_r_covariance_result_expanded} gives
\begin{equation}
\begin{aligned}
\bm\varXi_{\pi_k}^{\bm\varGamma,
	(j,\mathcal P,\mathcal C),[\ell+1]}=
\bm\varXi_{\omega_k}^{\bm\varGamma,(j)}
-
\bm\varXi_{\omega_k}^{\bm\varGamma,(j)}
(\dot{\bm H}_k^{(j)})^{\mathrm T}
\left(
\bm\varXi_{\pi_k}^{\bar{\bm y},
	(j,\mathcal P,\mathcal C),[\ell]}
\right)^{-1}
\dot{\bm H}_k^{(j)}
\bm\varXi_{\omega_k}^{\bm\varGamma,(j)}
\end{aligned}
\label{eq:C6_proof_r_covariance_result}
\end{equation}

Define the trajectory kinematic-state gain matrix as
\begin{equation}
\bm K_{\pi_k}^{(j,\mathcal P,\mathcal C),[\ell+1]}
=
\bm\varXi_{\omega_k}^{\bm\varGamma,(j)}
(\dot{\bm H}_k^{(j)})^{\mathrm T}
\left(
\bm\varXi_{\pi_k}^{\bar{\bm y},
	(j,\mathcal P,\mathcal C),[\ell]}
\right)^{-1}
\label{eq:C6_proof_r_gain_result}
\end{equation}

Right-multiplying Eq.~\eqref{eq:C6_proof_r_covariance_result} by
$\left(\bm\varXi_{\omega_k}^{\bm\varGamma,(j)}\right)^{-1}$ and using
Eq.~\eqref{eq:C6_proof_r_gain_result} gives

\begin{equation}
\begin{aligned}
\bm\varXi_{\pi_k}^{\bm\varGamma,
	(j,\mathcal P,\mathcal C),[\ell+1]}
\left(
\bm\varXi_{\omega_k}^{\bm\varGamma,(j)}
\right)^{-1}=
\bm I
-\bm K_{\pi_k}^{(j,\mathcal P,\mathcal C),[\ell+1]}
\dot{\bm H}_k^{(j)}
\end{aligned}
\label{eq:C6_proof_r_information_identity_1}
\end{equation}

The matrix inversion lemma applied to
Eq.~\eqref{eq:C6_proof_r_information_matrix} also gives

\begin{equation}
\begin{aligned}
\bm\varXi_{\pi_k}^{\bm\varGamma,
	(j,\mathcal P,\mathcal C),[\ell+1]}
(\dot{\bm H}_k^{(j)})^{\mathrm T}
M_k^{\mathcal P,\mathcal C}
\bar{\bm G}_{\pi_k}^{(j,\mathcal P,\mathcal C),[\ell]}&=
\bm\varXi_{\omega_k}^{\bm\varGamma,(j)}
(\dot{\bm H}_k^{(j)})^{\mathrm T}
\left(
\bm\varXi_{\pi_k}^{\bar{\bm y},
	(j,\mathcal P,\mathcal C),[\ell]}
\right)^{-1}\\
&\quad=
\bm K_{\pi_k}^{(j,\mathcal P,\mathcal C),[\ell+1]}
\end{aligned}
\label{eq:C6_proof_r_information_identity_2}
\end{equation}

Left-multiplying Eq.~\eqref{eq:C6_proof_r_information_vector} by
$\bm\varXi_{\pi_k}^{\bm\varGamma,
	(j,\mathcal P,\mathcal C),[\ell+1]}$ gives
\begin{equation}
\begin{aligned}
\hat{\bm\varGamma}_{\pi_k}^{
	(j,\mathcal P,\mathcal C),[\ell+1]}
&=
\bm\varXi_{\pi_k}^{\bm\varGamma,
	(j,\mathcal P,\mathcal C),[\ell+1]}
\left(\bm\varXi_{\omega_k}^{\bm\varGamma,(j)}\right)^{-1}
\hat{\bm\varGamma}_{\omega_k}^{(j)}+
\bm\varXi_{\pi_k}^{\bm\varGamma,
	(j,\mathcal P,\mathcal C),[\ell+1]}
(\dot{\bm H}_k^{(j)})^{\mathrm T}
M_k^{\mathcal P,\mathcal C}
\bar{\bm G}_{\pi_k}^{(j,\mathcal P,\mathcal C),[\ell]}
\bar{\bm y}_{\pi_k}^{(j,\mathcal P,\mathcal C),[\ell]}\\
&\quad=
\left[
\bm I-
\bm K_{\pi_k}^{(j,\mathcal P,\mathcal C),[\ell+1]}
\dot{\bm H}_k^{(j)}
\right]
\hat{\bm\varGamma}_{\omega_k}^{(j)}
+
\bm K_{\pi_k}^{(j,\mathcal P,\mathcal C),[\ell+1]}
\bar{\bm y}_{\pi_k}^{(j,\mathcal P,\mathcal C),[\ell]}\\
&\quad=
\hat{\bm\varGamma}_{\omega_k}^{(j)}
+
\bm K_{\pi_k}^{(j,\mathcal P,\mathcal C),[\ell+1]}
\left[
\bar{\bm y}_{\pi_k}^{(j,\mathcal P,\mathcal C),[\ell]}
-
\dot{\bm H}_k^{(j)}
\hat{\bm\varGamma}_{\omega_k}^{(j)}
\right]
\end{aligned}
\label{eq:C6_proof_r_mean_result}
\end{equation}

Substituting Eq.~\eqref{eq:C6_proof_r_gain_result} into
Eq.~\eqref{eq:C6_proof_r_covariance_result} gives
\begin{equation}
\bm\varXi_{\pi_k}^{\bm\varGamma,
	(j,\mathcal P,\mathcal C),[\ell+1]}
=
\bm\varXi_{\omega_k}^{\bm\varGamma,(j)}
-
\bm K_{\pi_k}^{(j,\mathcal P,\mathcal C),[\ell+1]}
\dot{\bm H}_k^{(j)}
\bm\varXi_{\omega_k}^{\bm\varGamma,(j)}
\label{eq:C6_proof_r_covariance_compact}
\end{equation}

Therefore, Eqs.~\eqref{eq:C6_proof_r_mean_result} and
\eqref{eq:C6_proof_r_covariance_compact} prove
Proposition~\ref{prop:C6_r_update}.

\newpage
\section{Proof of Proposition~\ref{prop:C6_axis_update}}
\label{app:C6_axis_update_proof}

The derivation is presented for the first posterior SSAL state sequence.
The derivation for the second posterior SSAL state sequence is identical.
According to the structured variational factorization in
Eq.~\eqref{eq:C6_traj_vb_factor} and the CAVI coordinate-update rule, the
variational factor of the first posterior SSAL state sequence satisfies
\begin{equation}
\begin{aligned}
&\ln q_k^{\ddot l_1,(j,\mathcal P,\mathcal C),[\ell+1]}
\!\left(\ddot l_1^{1:n_{\pi_k}^{(j)}}\right)=\mathbb E_{\substack{
q_k^{\bm r,(j,\mathcal P,\mathcal C),[\ell+1]}
q_k^{\theta,(j,\mathcal P,\mathcal C),[\ell]}
q_k^{\ddot l_2,(j,\mathcal P,\mathcal C),[\ell]}\\
{}\times\prod_{m=1}^{M_k^{\mathcal P,\mathcal C}}
q_k^{\bm y^m,(j,\mathcal P,\mathcal C),[\ell]}}}
\!\Bigg[
\ln p_k\!\left(
\begin{gathered}
\bm r^{1:n_{\omega_k}^{(j)}},
\bm\theta^{1:n_{\omega_k}^{(j)}},
\ddot l_1^{1:n_{\omega_k}^{(j)}},
\ddot l_2^{1:n_{\omega_k}^{(j)}},\\
\mathcal Y_k^{(\mathcal P,\mathcal C)},
\mathcal Z_k^{(\mathcal P,\mathcal C)}
\end{gathered}
\right)
\Bigg]
+\mathtt c_{l_1}
\end{aligned}
\label{eq:C6_proof_axis_CAVI}
\end{equation}
\noindent where the expectation is taken with respect to the variational
posteriors of the trajectory kinematic state sequence at the
$(\ell+1)$-th iteration and the trajectory orientation state sequence,
second trajectory SSAL state sequence, and all measurement-source
variables at the $\ell$-th iteration, excluding the variational factor of
the first trajectory SSAL state sequence being updated;
$\mathtt c_{l_1}$ denotes a constant term independent of
$\ddot l_1^{1:n_{\pi_k}^{(j)}}$. Since the current measurements affect only the terminal
trajectory state, Eq.~\eqref{eq:C6_proof_axis_CAVI} first updates the
terminal inverse-Gamma marginal. The historical marginals are then
obtained by the Beta--Bartlett backward recursion under the Markov joint
model in Eq.~\eqref{eq:C6_SSAL_Markov_joint} and projected back onto the
DGDIG form in Eq.~\eqref{eq:C6_single_component}.

The predicted terminal marginal density of the first SSAL state is
\begin{equation}
p_{\omega_k}^{\ddot l_1,(j)}
\!\left(\ddot l_1^{n_{\omega_k}^{(j)}}\right)
=\mathcal{IG}\!\left(
\ddot l_1^{n_{\omega_k}^{(j)}};
a_{1,\omega_k}^{(j),n_{\omega_k}^{(j)}},
b_{1,\omega_k}^{(j),n_{\omega_k}^{(j)}}
\right)
\label{eq:C6_proof_axis_terminal_prior}
\end{equation}

To expose the measurement-dependent part of
Eq.~\eqref{eq:C6_proof_axis_CAVI}, separate the measurement-source
conditional densities from the remaining terms in the joint density.
The black dot is defined at its first occurrence by the complete
non-measurement term under the underbrace below. Using
Eq.~\eqref{eq:C6_proof_r_source_density2} gives

\begingroup
\allowdisplaybreaks[4]
\begin{align}
&\ln q_k^{\ddot l_1,(j,\mathcal P,\mathcal C),[\ell+1]}
\!\left(\ddot l_1^{1:n_{\pi_k}^{(j)}}\right)\notag\\
&\quad=\mathbb E_{\substack{
		q_k^{\bm r,(j,\mathcal P,\mathcal C),[\ell+1]}
		q_k^{\theta,(j,\mathcal P,\mathcal C),[\ell]}
		q_k^{\ddot l_2,(j,\mathcal P,\mathcal C),[\ell]}\\
		{}\times\prod_{m=1}^{M_k^{\mathcal P,\mathcal C}}
		q_k^{\bm y^m,(j,\mathcal P,\mathcal C),[\ell]}}}
\!\left[
\underbrace{
	\ln p_k\!\left(
	\begin{gathered}
	\bm r^{1:n_{\omega_k}^{(j)}},
	\bm\theta^{1:n_{\omega_k}^{(j)}},
	\ddot l_1^{1:n_{\omega_k}^{(j)}},
	\ddot l_2^{1:n_{\omega_k}^{(j)}},\\
	\mathcal Y_k^{(\mathcal P,\mathcal C)},
	\mathcal Z_k^{(\mathcal P,\mathcal C)}
	\end{gathered}
	\right)
	-\sum_{m=1}^{M_k^{\mathcal P,\mathcal C}}
	\ln p\!\left(
	\bm y_k^{\mathcal P,\mathcal C,m}\mid
	\bm r^{1:n_{\omega_k}^{(j)}},\theta^{n_{\omega_k}^{(j)}},
	\ddot l_{1}^{n_{\omega_k}^{(j)}},
	\ddot l_{2}^{n_{\omega_k}^{(j)}}
	\right)
}_{\mathlarger{\bullet}}
\right]\notag\\
&\qquad+
\mathbb E_{\substack{
		q_k^{\bm r,(j,\mathcal P,\mathcal C),[\ell+1]}
		q_k^{\theta,(j,\mathcal P,\mathcal C),[\ell]}
		q_k^{\ddot l_2,(j,\mathcal P,\mathcal C),[\ell]}\\
		{}\times\prod_{m=1}^{M_k^{\mathcal P,\mathcal C}}
		q_k^{\bm y^m,(j,\mathcal P,\mathcal C),[\ell]}}}
\!\Bigg[
\sum_{m=1}^{M_k^{\mathcal P,\mathcal C}}
\ln p\!\left(
\bm y_k^{\mathcal P,\mathcal C,m}\mid
\bm r^{1:n_{\omega_k}^{(j)}},\theta^{n_{\omega_k}^{(j)}},
\ddot l_{1}^{n_{\omega_k}^{(j)}},
\ddot l_{2}^{n_{\omega_k}^{(j)}}
\right)
\Bigg]
+\mathtt c_{l_1}\notag\\
&\quad=
\mathbb E_{\substack{
		q_k^{\bm r,(j,\mathcal P,\mathcal C),[\ell+1]}
		q_k^{\theta,(j,\mathcal P,\mathcal C),[\ell]}
		q_k^{\ddot l_2,(j,\mathcal P,\mathcal C),[\ell]}\\
		{}\times\prod_{m=1}^{M_k^{\mathcal P,\mathcal C}}
		q_k^{\bm y^m,(j,\mathcal P,\mathcal C),[\ell]}}}
\!\left[\mathlarger{\bullet}\right]
-\underbrace{\frac{1}{2}
	\sum_{m=1}^{M_k^{\mathcal P,\mathcal C}}
	\mathbb E_{q_k^{\ddot l_2,(j,\mathcal P,\mathcal C),[\ell]}}
	\!\left[
	\ln\det\!\left(
	\mathtt s
	\mathcal{S}\left(\theta^{n_{\omega_k}^{(j)}},\ddot l_1^{n_{\omega_k}^{(j)}},
	\ddot l_2^{n_{\omega_k}^{(j)}}\right)
	\right)
	\right]}_{\substack{
		=\frac{1}{2}
		\sum_{m=1}^{M_k^{\mathcal P,\mathcal C}}
		\mathbb E_{q_k^{\ddot l_2,(j,\mathcal P,\mathcal C),[\ell]}}
		\!\left[
		\ln\!\left(
		\mathtt s^2
		\ddot l_{1}^{n_{\omega_k}^{(j)}}
		\ddot l_{2}^{n_{\omega_k}^{(j)}}
		\right)
		\right]\\
		=\frac{M_k^{\mathcal P,\mathcal C}}{2}
		\ln\ddot l_{1}^{n_{\omega_k}^{(j)}}
		+\underbrace{\frac{M_k^{\mathcal P,\mathcal C}}{2}\ln(\mathtt s^2)
			+\frac{M_k^{\mathcal P,\mathcal C}}{2}
			\mathbb E_{q_k^{\ddot l_2,(j,\mathcal P,\mathcal C),[\ell]}}
			\!\left[\ln\ddot l_{2}^{n_{\omega_k}^{(j)}}\right]}_{
			\text{Constant, absorbed in}\ \mathtt c_{l_1}}}}
\notag\\
&\qquad-
\frac{1}{2}
\sum_{m=1}^{M_k^{\mathcal P,\mathcal C}}
\mathbb E_{\substack{
		q_k^{\bm r,(j,\mathcal P,\mathcal C),[\ell+1]}
		q_k^{\theta,(j,\mathcal P,\mathcal C),[\ell]}
		q_k^{\ddot l_2,(j,\mathcal P,\mathcal C),[\ell]}\\
		{}\times q_k^{\bm y^m,(j,\mathcal P,\mathcal C),[\ell]}}}
\!\Bigg[
\left(
\underbrace{
	\bm y_k^{\mathcal P,\mathcal C,m}
	-\dot{\bm H}_k^{(j)}
	\bm r^{1:n_{\omega_k}^{(j)}}
}_{\blacksquare}
\right)^{\mathrm T}
\left(
\mathtt s\mathcal{S}\left(\theta^{n_{\omega_k}^{(j)}},\ddot l_1^{n_{\omega_k}^{(j)}},
\ddot l_2^{n_{\omega_k}^{(j)}}\right)
\right)^{-1}
\left(\blacksquare\right)
\Bigg]
+\mathtt c_{l_1}\notag\\
&\quad=
\mathbb E_{\substack{
		q_k^{\bm r,(j,\mathcal P,\mathcal C),[\ell+1]}
		q_k^{\theta,(j,\mathcal P,\mathcal C),[\ell]}
		q_k^{\ddot l_2,(j,\mathcal P,\mathcal C),[\ell]}\\
		{}\times\prod_{m=1}^{M_k^{\mathcal P,\mathcal C}}
		q_k^{\bm y^m,(j,\mathcal P,\mathcal C),[\ell]}}}
\!\left[\mathlarger{\bullet}\right]
-\frac{M_k^{\mathcal P,\mathcal C}}{2}
\ln\ddot l_{1}^{n_{\omega_k}^{(j)}}\notag\\
&\qquad-
\frac{1}{2}
\sum_{m=1}^{M_k^{\mathcal P,\mathcal C}}
\mathbb E_{\substack{
		q_k^{\bm r,(j,\mathcal P,\mathcal C),[\ell+1]}
		q_k^{\theta,(j,\mathcal P,\mathcal C),[\ell]}
		q_k^{\ddot l_2,(j,\mathcal P,\mathcal C),[\ell]}\\
		{}\times q_k^{\bm y^m,(j,\mathcal P,\mathcal C),[\ell]}}}
\!\Bigg[
\left(\blacksquare\right)^{\mathrm T}
\left(
\mathtt s\mathcal{S}\left(\theta^{n_{\omega_k}^{(j)}},\ddot l_1^{n_{\omega_k}^{(j)}},
\ddot l_2^{n_{\omega_k}^{(j)}}\right)
\right)^{-1}
\left(\blacksquare\right)
\Bigg]
+\mathtt c_{l_1}\notag\\
&\quad=
\mathbb E_{\substack{
		q_k^{\bm r,(j,\mathcal P,\mathcal C),[\ell+1]}
		q_k^{\theta,(j,\mathcal P,\mathcal C),[\ell]}
		q_k^{\ddot l_2,(j,\mathcal P,\mathcal C),[\ell]}\\
		{}\times\prod_{m=1}^{M_k^{\mathcal P,\mathcal C}}
		q_k^{\bm y^m,(j,\mathcal P,\mathcal C),[\ell]}}}
\!\left[\mathlarger{\bullet}\right]
-\frac{M_k^{\mathcal P,\mathcal C}}{2}
\ln\ddot l_{1}^{n_{\omega_k}^{(j)}}\notag\\
&\qquad-\underbrace{\frac{1}{2}
	\sum_{m=1}^{M_k^{\mathcal P,\mathcal C}}
	\mathbb E_{\substack{
			q_k^{\theta,(j,\mathcal P,\mathcal C),[\ell]}\\
			q_k^{\ddot l_2,(j,\mathcal P,\mathcal C),[\ell]}}}
	\!\Bigg[
	\operatorname{tr}\!\Bigg(
	\underbrace{
		\mathbb E_{\substack{
				q_k^{\bm r,(j,\mathcal P,\mathcal C),[\ell+1]}
				q_k^{\bm y^m,(j,\mathcal P,\mathcal C),[\ell]}}}
		\!\Bigg[
		\left(\blacksquare\right)
		\left(\blacksquare\right)^{\mathrm T}
		\Bigg]
	}_{\mathsf A}
	\left(
	\mathtt s\mathcal{S}\left(\theta^{n_{\omega_k}^{(j)}},\ddot l_1^{n_{\omega_k}^{(j)}},
	\ddot l_2^{n_{\omega_k}^{(j)}}\right)
	\right)^{-1}
	\Bigg)
	\Bigg]}_{\mathsf B}
+\mathtt c_{l_1}
\label{eq:C6_proof_axis_dependent_terms}
\end{align}
\endgroup

The term~$\mathsf{A}$ in
Eq.~\eqref{eq:C6_proof_axis_dependent_terms} is the second-order
moment of the residual between the measurement-source variable and the
predicted measurement. We denote this residual second-order moment by
$\bm C_k^{(j,\mathcal P,\mathcal C),m,[\ell+1]}$. It can be obtained by
expanding the residual around the posterior means of
$\bm y_k^{\mathcal P,\mathcal C,m}$ and
$\bm r^{1:n_{\omega_k}^{(j)}}$. 

The detailed expansion is
\begin{equation}
\begin{aligned}
&\mathbb E_{\substack{
q_k^{\bm r,(j,\mathcal P,\mathcal C),[\ell+1]}
q_k^{\bm y^m,(j,\mathcal P,\mathcal C),[\ell]}}}
\!\left[
\left(
\bm y_k^{\mathcal P,\mathcal C,m}
-\dot{\bm H}_k^{(j)}
\bm r^{1:n_{\omega_k}^{(j)}}
\right)
\left(
\bm y_k^{\mathcal P,\mathcal C,m}
-\dot{\bm H}_k^{(j)}
\bm r^{1:n_{\omega_k}^{(j)}}
\right)^{\mathrm T}
\right]\\
&=
\mathbb E_{\substack{
q_k^{\bm r,(j,\mathcal P,\mathcal C),[\ell+1]}
q_k^{\bm y^m,(j,\mathcal P,\mathcal C),[\ell]}}}
\!\Bigg[
\Bigg(
\begin{gathered}
\hat{\bm y}_{\pi_k}^{(j,\mathcal P,\mathcal C),m,[\ell]}
-\dot{\bm H}_k^{(j)}
\hat{\bm\varGamma}_{\pi_k}^{(j,\mathcal P,\mathcal C),[\ell+1]}\\
{}+\bm y_k^{\mathcal P,\mathcal C,m}
-\hat{\bm y}_{\pi_k}^{(j,\mathcal P,\mathcal C),m,[\ell]}\\
{}-\dot{\bm H}_k^{(j)}
\left(
\bm r^{1:n_{\omega_k}^{(j)}}
-\hat{\bm\varGamma}_{\pi_k}^{(j,\mathcal P,\mathcal C),[\ell+1]}
\right)
\end{gathered}
\Bigg)
\Bigg(
\begin{gathered}
\hat{\bm y}_{\pi_k}^{(j,\mathcal P,\mathcal C),m,[\ell]}
-\dot{\bm H}_k^{(j)}
\hat{\bm\varGamma}_{\pi_k}^{(j,\mathcal P,\mathcal C),[\ell+1]}\\
{}+\bm y_k^{\mathcal P,\mathcal C,m}
-\hat{\bm y}_{\pi_k}^{(j,\mathcal P,\mathcal C),m,[\ell]}\\
{}-\dot{\bm H}_k^{(j)}
\left(
\bm r^{1:n_{\omega_k}^{(j)}}
-\hat{\bm\varGamma}_{\pi_k}^{(j,\mathcal P,\mathcal C),[\ell+1]}
\right)
\end{gathered}
\Bigg)^{\mathrm T}
\Bigg]\\
&=
\left(
\hat{\bm y}_{\pi_k}^{(j,\mathcal P,\mathcal C),m,[\ell]}
-\dot{\bm H}_k^{(j)}
\hat{\bm\varGamma}_{\pi_k}^{(j,\mathcal P,\mathcal C),[\ell+1]}
\right)
\left(
\hat{\bm y}_{\pi_k}^{(j,\mathcal P,\mathcal C),m,[\ell]}
-\dot{\bm H}_k^{(j)}
\hat{\bm\varGamma}_{\pi_k}^{(j,\mathcal P,\mathcal C),[\ell+1]}
\right)^{\mathrm T}\\
&\quad{+}
\underbrace{
\left(
\hat{\bm y}_{\pi_k}^{(j,\mathcal P,\mathcal C),m,[\ell]}
 -\dot{\bm H}_k^{(j)}
 \hat{\bm\varGamma}_{\pi_k}^{(j,\mathcal P,\mathcal C),[\ell+1]}
\right)
\mathbb E_{q_k^{\bm y^m,(j,\mathcal P,\mathcal C),[\ell]}}
\!\left[
\left(
\bm y_k^{\mathcal P,\mathcal C,m}
 -\hat{\bm y}_{\pi_k}^{(j,\mathcal P,\mathcal C),m,[\ell]}
\right)^{\mathrm T}
\right]}_{\bm 0}\\
&\quad-\underbrace{
\left(
\hat{\bm y}_{\pi_k}^{(j,\mathcal P,\mathcal C),m,[\ell]}
 -\dot{\bm H}_k^{(j)}
 \hat{\bm\varGamma}_{\pi_k}^{(j,\mathcal P,\mathcal C),[\ell+1]}
\right)
\mathbb E_{q_k^{\bm r,(j,\mathcal P,\mathcal C),[\ell+1]}}
\!\left[
\left(
\bm r^{1:n_{\omega_k}^{(j)}}
 -\hat{\bm\varGamma}_{\pi_k}^{(j,\mathcal P,\mathcal C),[\ell+1]}
\right)^{\mathrm T}
\right]
(\dot{\bm H}_k^{(j)})^{\mathrm T}}_{\bm 0}\\
&\quad+\underbrace{
\mathbb E_{q_k^{\bm y^m,(j,\mathcal P,\mathcal C),[\ell]}}
\!\left[
\bm y_k^{\mathcal P,\mathcal C,m}
 -\hat{\bm y}_{\pi_k}^{(j,\mathcal P,\mathcal C),m,[\ell]}
\right]
\left(
\hat{\bm y}_{\pi_k}^{(j,\mathcal P,\mathcal C),m,[\ell]}
 -\dot{\bm H}_k^{(j)}
 \hat{\bm\varGamma}_{\pi_k}^{(j,\mathcal P,\mathcal C),[\ell+1]}
\right)^{\mathrm T}}_{\bm 0}\\
&\quad+
\mathbb E_{q_k^{\bm y^m,(j,\mathcal P,\mathcal C),[\ell]}}
\!\left[
\left(
\bm y_k^{\mathcal P,\mathcal C,m}
 -\hat{\bm y}_{\pi_k}^{(j,\mathcal P,\mathcal C),m,[\ell]}
\right)
\left(
\bm y_k^{\mathcal P,\mathcal C,m}
 -\hat{\bm y}_{\pi_k}^{(j,\mathcal P,\mathcal C),m,[\ell]}
\right)^{\mathrm T}
\right]\\
&\quad-\underbrace{
\mathbb E_{q_k^{\bm y^m,(j,\mathcal P,\mathcal C),[\ell]}}
\!\left[
\bm y_k^{\mathcal P,\mathcal C,m}
 -\hat{\bm y}_{\pi_k}^{(j,\mathcal P,\mathcal C),m,[\ell]}
\right]
\mathbb E_{q_k^{\bm r,(j,\mathcal P,\mathcal C),[\ell+1]}}
\!\left[
\left(
\bm r^{1:n_{\omega_k}^{(j)}}
 -\hat{\bm\varGamma}_{\pi_k}^{(j,\mathcal P,\mathcal C),[\ell+1]}
\right)^{\mathrm T}
\right]
(\dot{\bm H}_k^{(j)})^{\mathrm T}}_{\bm 0}\\
&\quad-\underbrace{
\dot{\bm H}_k^{(j)}
\mathbb E_{q_k^{\bm r,(j,\mathcal P,\mathcal C),[\ell+1]}}
\!\left[
\bm r^{1:n_{\omega_k}^{(j)}}
 -\hat{\bm\varGamma}_{\pi_k}^{(j,\mathcal P,\mathcal C),[\ell+1]}
\right]
\left(
\hat{\bm y}_{\pi_k}^{(j,\mathcal P,\mathcal C),m,[\ell]}
 -\dot{\bm H}_k^{(j)}
 \hat{\bm\varGamma}_{\pi_k}^{(j,\mathcal P,\mathcal C),[\ell+1]}
\right)^{\mathrm T}}_{\bm 0}\\
&\quad-\underbrace{
\dot{\bm H}_k^{(j)}
\mathbb E_{q_k^{\bm r,(j,\mathcal P,\mathcal C),[\ell+1]}}
\!\left[
\bm r^{1:n_{\omega_k}^{(j)}}
 -\hat{\bm\varGamma}_{\pi_k}^{(j,\mathcal P,\mathcal C),[\ell+1]}
\right]
\mathbb E_{q_k^{\bm y^m,(j,\mathcal P,\mathcal C),[\ell]}}
\!\left[
\left(
\bm y_k^{\mathcal P,\mathcal C,m}
 -\hat{\bm y}_{\pi_k}^{(j,\mathcal P,\mathcal C),m,[\ell]}
\right)^{\mathrm T}
\right]}_{\bm 0}\\
&\quad+
\dot{\bm H}_k^{(j)}
\mathbb E_{q_k^{\bm r,(j,\mathcal P,\mathcal C),[\ell+1]}}
\!\left[
\left(
\bm r^{1:n_{\omega_k}^{(j)}}
 -\hat{\bm\varGamma}_{\pi_k}^{(j,\mathcal P,\mathcal C),[\ell+1]}
\right)
\left(
\bm r^{1:n_{\omega_k}^{(j)}}
 -\hat{\bm\varGamma}_{\pi_k}^{(j,\mathcal P,\mathcal C),[\ell+1]}
\right)^{\mathrm T}
\right]
(\dot{\bm H}_k^{(j)})^{\mathrm T}\\
&= 
\left(
\hat{\bm y}_{\pi_k}^{(j,\mathcal P,\mathcal C),m,[\ell]}
 -\dot{\bm H}_k^{(j)}
 \hat{\bm\varGamma}_{\pi_k}^{(j,\mathcal P,\mathcal C),[\ell+1]}
\right)
\left(
\hat{\bm y}_{\pi_k}^{(j,\mathcal P,\mathcal C),m,[\ell]}
 -\dot{\bm H}_k^{(j)}
 \hat{\bm\varGamma}_{\pi_k}^{(j,\mathcal P,\mathcal C),[\ell+1]}
\right)^{\mathrm T}\\
&\quad+
\mathbb E_{q_k^{\bm y^m,(j,\mathcal P,\mathcal C),[\ell]}}
\!\left[
\left(
\bm y_k^{\mathcal P,\mathcal C,m}
-\hat{\bm y}_{\pi_k}^{(j,\mathcal P,\mathcal C),m,[\ell]}
\right)
\left(
\bm y_k^{\mathcal P,\mathcal C,m}
-\hat{\bm y}_{\pi_k}^{(j,\mathcal P,\mathcal C),m,[\ell]}
\right)^{\mathrm T}
\right]\\
&\quad+
\dot{\bm H}_k^{(j)}
\mathbb E_{q_k^{\bm r,(j,\mathcal P,\mathcal C),[\ell+1]}}
\!\left[
\left(
\bm r^{1:n_{\omega_k}^{(j)}}
-\hat{\bm\varGamma}_{\pi_k}^{(j,\mathcal P,\mathcal C),[\ell+1]}
\right)
\left(
\bm r^{1:n_{\omega_k}^{(j)}}
-\hat{\bm\varGamma}_{\pi_k}^{(j,\mathcal P,\mathcal C),[\ell+1]}
\right)^{\mathrm T}
\right]
(\dot{\bm H}_k^{(j)})^{\mathrm T}\\
&=
\left(
\hat{\bm y}_{\pi_k}^{(j,\mathcal P,\mathcal C),m,[\ell]}
-\dot{\bm H}_k^{(j)}
\hat{\bm\varGamma}_{\pi_k}^{(j,\mathcal P,\mathcal C),[\ell+1]}
\right)
\left(
\hat{\bm y}_{\pi_k}^{(j,\mathcal P,\mathcal C),m,[\ell]}
-\dot{\bm H}_k^{(j)}
\hat{\bm\varGamma}_{\pi_k}^{(j,\mathcal P,\mathcal C),[\ell+1]}
\right)^{\mathrm T}+
\bm\varXi_{\pi_k}^{\bm y,(j,\mathcal P,\mathcal C),m,[\ell]}
+
\dot{\bm H}_k^{(j)}
\bm\varXi_{\pi_k}^{\bm\varGamma,(j,\mathcal P,\mathcal C),[\ell+1]}
(\dot{\bm H}_k^{(j)})^{\mathrm T}\\
&=\bm C_k^{(j,\mathcal P,\mathcal C),m,[\ell+1]}
\end{aligned}
\label{eq:C6_proof_axis_residual_moment}
\end{equation}

The zero-valued terms in
Eq.~\eqref{eq:C6_proof_axis_residual_moment} follow from the zero
first-order central moments of the measurement-source variable and the
trajectory kinematic state:
\begin{equation}
\mathbb E_{q_k^{\bm y^m,(j,\mathcal P,\mathcal C),[\ell]}}
\!\left[
\bm y_k^{\mathcal P,\mathcal C,m}
-\hat{\bm y}_{\pi_k}^{(j,\mathcal P,\mathcal C),m,[\ell]}
\right]
=\bm 0
\label{eq:C6_proof_axis_source_centered_mean}
\end{equation}
and
\begin{equation}
\mathbb E_{q_k^{\bm r,(j,\mathcal P,\mathcal C),[\ell+1]}}
\!\left[
\bm r^{1:n_{\omega_k}^{(j)}}
-\hat{\bm\varGamma}_{\pi_k}^{(j,\mathcal P,\mathcal C),[\ell+1]}
\right]
=\bm 0
\label{eq:C6_proof_axis_state_centered_mean}
\end{equation}

The mixed terms also vanish because the two centered variables are
independent under the structured variational factorization.

Accumulating the second-order residual moments of all measurement-source
variables in the measurement cell and multiplying the result by
$(2\mathtt s)^{-1}$ give
\begin{equation}
\bar{\bm C}_k^{(j,\mathcal P,\mathcal C),[\ell+1]}
=\frac{1}{2\mathtt s}
\sum_{m=1}^{M_k^{\mathcal P,\mathcal C}}
\bm C_k^{(j,\mathcal P,\mathcal C),m,[\ell+1]}
\label{eq:C6_proof_axis_Cbar}
\end{equation}
Taking the expectation with respect to the variational posterior of the
trajectory orientation state at the $\ell$-th iteration gives the
residual-statistics matrix in the principal-axis coordinate system
\begin{equation}
\widetilde{\bm C}_k^{(j,\mathcal P,\mathcal C),[\ell+1]}
=\mathbb E_{q_k^{\theta,(j,\mathcal P,\mathcal C),[\ell]}}
\!\left[
\mathcal R\left(-\theta^{n_{\omega_k}^{(j)}}\right)
\bar{\bm C}_k^{(j,\mathcal P,\mathcal C),[\ell+1]}
\mathcal R\left(\theta^{n_{\omega_k}^{(j)}}\right)
\right]
\label{eq:C6_proof_axis_Bbar}
\end{equation}

Then, substituting Eq.~\eqref{eq:C6_proof_axis_residual_moment} into the
term \(\mathsf{B}\) in Eq.~\eqref{eq:C6_proof_axis_dependent_terms} gives
\begin{equation}
\begin{aligned}
&-\frac{1}{2}
\sum_{m=1}^{M_k^{\mathcal P,\mathcal C}}
\mathbb E_{\substack{
		q_k^{\theta,(j,\mathcal P,\mathcal C),[\ell]}\\
		q_k^{\ddot l_2,(j,\mathcal P,\mathcal C),[\ell]}}}
\!\Bigg[
\operatorname{tr}\!\Bigg(
\bm C_k^{(j,\mathcal P,\mathcal C),m,[\ell+1]}
\left(
\mathtt s\mathcal{S}\left(\theta^{n_{\omega_k}^{(j)}},\ddot l_1^{n_{\omega_k}^{(j)}},
\ddot l_2^{n_{\omega_k}^{(j)}}\right)
\right)^{-1}
\Bigg)
\Bigg]\\
&\quad=
-\frac{1}{2}
\mathbb E_{\substack{
		q_k^{\theta,(j,\mathcal P,\mathcal C),[\ell]}\\
		q_k^{\ddot l_2,(j,\mathcal P,\mathcal C),[\ell]}}}
\!\Bigg[
\sum_{m=1}^{M_k^{\mathcal P,\mathcal C}}
\operatorname{tr}\!\Bigg(
\bm C_k^{(j,\mathcal P,\mathcal C),m,[\ell+1]}
\left(
\mathtt s\mathcal{S}\left(\theta^{n_{\omega_k}^{(j)}},\ddot l_1^{n_{\omega_k}^{(j)}},
\ddot l_2^{n_{\omega_k}^{(j)}}\right)
\right)^{-1}
\Bigg)
\Bigg]\\
&\quad=
-\frac{1}{2\mathtt s}
\mathbb E_{\substack{
		q_k^{\theta,(j,\mathcal P,\mathcal C),[\ell]}\\
		q_k^{\ddot l_2,(j,\mathcal P,\mathcal C),[\ell]}}}
\!\Bigg[
\sum_{m=1}^{M_k^{\mathcal P,\mathcal C}}
\operatorname{tr}\!\Bigg(
\bm C_k^{(j,\mathcal P,\mathcal C),m,[\ell+1]}
\mathcal R\left(\theta^{n_{\omega_k}^{(j)}}\right)
\operatorname{diag}\!\left(
(\ddot l_{1}^{n_{\omega_k}^{(j)}})^{-1},
(\ddot l_{2}^{n_{\omega_k}^{(j)}})^{-1}
\right)
\mathcal R\left(-\theta^{n_{\omega_k}^{(j)}}\right)
\Bigg)
\Bigg]\\
&\quad=
-\frac{1}{2\mathtt s}
\mathbb E_{\substack{
		q_k^{\theta,(j,\mathcal P,\mathcal C),[\ell]}\\
		q_k^{\ddot l_2,(j,\mathcal P,\mathcal C),[\ell]}}}
\!\Bigg[
\operatorname{tr}\!\Bigg(
\operatorname{diag}\!\left(
(\ddot l_{1}^{n_{\omega_k}^{(j)}})^{-1},
(\ddot l_{2}^{n_{\omega_k}^{(j)}})^{-1}
\right)
\sum_{m=1}^{M_k^{\mathcal P,\mathcal C}}
\mathcal R\left(-\theta^{n_{\omega_k}^{(j)}}\right)
\bm C_k^{(j,\mathcal P,\mathcal C),m,[\ell+1]}
\mathcal R\left(\theta^{n_{\omega_k}^{(j)}}\right)
\Bigg)
\Bigg]\\
&\quad=
-\mathbb E_{\substack{
		q_k^{\theta,(j,\mathcal P,\mathcal C),[\ell]}\\
		q_k^{\ddot l_2,(j,\mathcal P,\mathcal C),[\ell]}}}
\!\Bigg[
\operatorname{tr}\!\Bigg(
\operatorname{diag}\!\left(
(\ddot l_{1}^{n_{\omega_k}^{(j)}})^{-1},
(\ddot l_{2}^{n_{\omega_k}^{(j)}})^{-1}
\right)
\mathcal R\left(-\theta^{n_{\omega_k}^{(j)}}\right)
\bar{\bm C}_k^{(j,\mathcal P,\mathcal C),[\ell+1]}
\mathcal R\left(\theta^{n_{\omega_k}^{(j)}}\right)
\Bigg)
\Bigg]\\
&\quad=
-\mathbb E_{\substack{
		q_k^{\theta,(j,\mathcal P,\mathcal C),[\ell]}\\
		q_k^{\ddot l_2,(j,\mathcal P,\mathcal C),[\ell]}}}
\!\Bigg[
\frac{
	\left[
	\mathcal R\left(-\theta^{n_{\omega_k}^{(j)}}\right)
	\bar{\bm C}_k^{(j,\mathcal P,\mathcal C),[\ell+1]}
	\mathcal R\left(\theta^{n_{\omega_k}^{(j)}}\right)
	\right]_{11}
}{
\ddot l_{1}^{n_{\omega_k}^{(j)}}
}
+
\frac{
	\left[
	\mathcal R\left(-\theta^{n_{\omega_k}^{(j)}}\right)
	\bar{\bm C}_k^{(j,\mathcal P,\mathcal C),[\ell+1]}
	\mathcal R\left(\theta^{n_{\omega_k}^{(j)}}\right)
	\right]_{22}
}{
\ddot l_{2}^{n_{\omega_k}^{(j)}}
}
\Bigg]\\
&\quad=
-\mathbb E_{q_k^{\ddot l_2,(j,\mathcal P,\mathcal C),[\ell]}}
\!\Bigg[
\frac{
	[\widetilde{\bm C}_k^{(j,\mathcal P,\mathcal C),[\ell+1]}]_{11}
}{
\ddot l_{1}^{n_{\omega_k}^{(j)}}
}
+
\frac{
	[\widetilde{\bm C}_k^{(j,\mathcal P,\mathcal C),[\ell+1]}]_{22}
}{
\ddot l_{2}^{n_{\omega_k}^{(j)}}
}
\Bigg]\\
&\quad=
-\frac{
	[\widetilde{\bm C}_k^{(j,\mathcal P,\mathcal C),[\ell+1]}]_{11}
}{
\ddot l_{1}^{n_{\omega_k}^{(j)}}
}
-\underbrace{
	[\widetilde{\bm C}_k^{(j,\mathcal P,\mathcal C),[\ell+1]}]_{22}
	\mathbb E_{q_k^{\ddot l_2,(j,\mathcal P,\mathcal C),[\ell]}}
	\!\left[
	(\ddot l_{2}^{n_{\omega_k}^{(j)}})^{-1}
	\right]
}_{\substack{
	\text{Independent of the current update variable}\
	\ddot l_{1}^{n_{\omega_k}^{(j)}}}}
\end{aligned}
\label{eq:C6_proof_axis_quadratic_contribution}
\end{equation}

The second term in the last line of
Eq.~\eqref{eq:C6_proof_axis_quadratic_contribution} is independent of
the current update variable $\ddot l_{1}^{n_{\omega_k}^{(j)}}$. It is therefore absorbed
into $\mathtt c_{l_1}$ in the CAVI kernel. The remaining terms in
Eq.~\eqref{eq:C6_proof_axis_quadratic_contribution} constitute the
$-\mathsf B$ contribution in
Eq.~\eqref{eq:C6_proof_axis_dependent_terms}. Using this contribution in
Eq.~\eqref{eq:C6_proof_axis_dependent_terms} gives
\begin{equation}
\begin{aligned}
&\ln q_k^{\ddot l_1,(j,\mathcal P,\mathcal C),[\ell+1]}
\!\left(
\ddot l_1^{1:n_{\pi_k}^{(j)}}
\right)\\
&\quad=
\underbrace{
	\mathbb E_{\substack{
			q_k^{\bm r,(j,\mathcal P,\mathcal C),[\ell+1]}
			q_k^{\theta,(j,\mathcal P,\mathcal C),[\ell]}
			q_k^{\ddot l_2,(j,\mathcal P,\mathcal C),[\ell]}\\
			{}\times\prod_{m=1}^{M_k^{\mathcal P,\mathcal C}}
			q_k^{\bm y^m,(j,\mathcal P,\mathcal C),[\ell]}}}
	\!\left[
	\ln p_k\!\left(
	\begin{gathered}
	\bm r^{1:n_{\omega_k}^{(j)}},
	\bm\theta^{1:n_{\omega_k}^{(j)}},
	\ddot l_1^{1:n_{\omega_k}^{(j)}},
	\ddot l_2^{1:n_{\omega_k}^{(j)}},\\
	\mathcal Y_k^{(\mathcal P,\mathcal C)},
	\mathcal Z_k^{(\mathcal P,\mathcal C)}
	\end{gathered}
	\right)
	-\sum_{m=1}^{M_k^{\mathcal P,\mathcal C}}
	\ln p\!\left(
	\bm y_k^{\mathcal P,\mathcal C,m}\mid
	\bm r^{1:n_{\omega_k}^{(j)}},\theta^{n_{\omega_k}^{(j)}},
	\ddot l_{1}^{n_{\omega_k}^{(j)}},
	\ddot l_{2}^{n_{\omega_k}^{(j)}}
	\right)
	\right]}_{\mathsf{C}}\\
&\qquad
-\frac{M_k^{\mathcal P,\mathcal C}}{2}
\ln\ddot l_{1}^{n_{\omega_k}^{(j)}}
-\frac{
	[\widetilde{\bm C}_k^{(j,\mathcal P,\mathcal C),[\ell+1]}]_{11}
}{
\ddot l_{1}^{n_{\omega_k}^{(j)}}
}
+\mathtt c_{l_1}
\end{aligned}
\label{eq:C6_proof_axis_terminal_likelihood}
\end{equation}

The term $\mathsf C$ in
Eq.~\eqref{eq:C6_proof_axis_terminal_likelihood} contains the structured
Beta--Bartlett prior in Eq.~\eqref{eq:C6_SSAL_Markov_joint}.
Combining this prior with the terminal measurement contribution gives
the following sequence-level variational factor:
\begin{equation}
\begin{aligned}
&q_k^{\ddot l_1,(j,\mathcal P,\mathcal C),[\ell+1]}
\!\left(\ddot l_1^{1:n_{\pi_k}^{(j)}}\right)\propto
p\!\left(
\ddot l_1^{1:n_{\omega_k}^{(j)}}
\right)
\left(
\ddot l_1^{n_{\omega_k}^{(j)}}
\right)^{-M_k^{\mathcal P,\mathcal C}/2}
\exp\!\left(
-\frac{
[\widetilde{\bm C}_k^{(j,\mathcal P,\mathcal C),[\ell+1]}]_{11}
}{
\ddot l_1^{n_{\omega_k}^{(j)}}
}
\right)
\end{aligned}
\label{eq:C6_proof_axis_structured_factor}
\end{equation}

The predicted terminal marginal of this Markov joint is the
inverse-Gamma density in Eq.~\eqref{eq:C6_proof_axis_terminal_prior}.
Combining that marginal with the terminal contribution in
Eq.~\eqref{eq:C6_proof_axis_terminal_likelihood} gives
\begin{equation}
\begin{aligned}
&\ln q_k^{\ddot l_1,(j,\mathcal P,\mathcal C),[\ell+1]}
\!\left(
\ddot l_1^{n_{\pi_k}^{(j)}}
\right)=-\left(
a_{1,\omega_k}^{(j),n_{\omega_k}^{(j)}}
+\frac{M_k^{\mathcal P,\mathcal C}}{2}+1
\right)
\ln\ddot l_1^{n_{\pi_k}^{(j)}}-
\frac{
b_{1,\omega_k}^{(j),n_{\omega_k}^{(j)}}
+[\widetilde{\bm C}_k^{(j,\mathcal P,\mathcal C),[\ell+1]}]_{11}
}{
\ddot l_1^{n_{\pi_k}^{(j)}}
}
+\mathtt c_{l_1}
\end{aligned}
\label{eq:C6_proof_axis_11}
\end{equation}

The logarithmic kernel of the terminal posterior inverse-Gamma density is
\begin{equation}
\begin{aligned}
\ln q_k^{\ddot l_1,(j,\mathcal P,\mathcal C),[\ell+1]}
\!\left(\ddot l_1^{n_{\pi_k}^{(j)}}\right)
&=\ln\mathcal{IG}\!\left(
\ddot l_1^{n_{\pi_k}^{(j)}};
 a_{1,\pi_k}^{(j,\mathcal P,\mathcal C),n_{\pi_k}^{(j)}},
 b_{1,\pi_k}^{(j,\mathcal P,\mathcal C),n_{\pi_k}^{(j)},[\ell+1]}
\right)\\
&\quad=-\left(
 a_{1,\pi_k}^{(j,\mathcal P,\mathcal C),n_{\pi_k}^{(j)}}+1
\right)\ln\ddot l_1^{n_{\pi_k}^{(j)}}
-\frac{
 b_{1,\pi_k}^{(j,\mathcal P,\mathcal C),n_{\pi_k}^{(j)},[\ell+1]}
}{
\ddot l_1^{n_{\pi_k}^{(j)}}
}
+\mathtt c_{l_1}
\end{aligned}
\label{eq:C6_proof_axis_terminal_posterior_IG_kernel}
\end{equation}

Comparing Eqs.~\eqref{eq:C6_proof_axis_11} and
\eqref{eq:C6_proof_axis_terminal_posterior_IG_kernel} gives the terminal
trajectory shape and scale parameters.

\begin{equation}
a_{1,\pi_k}^{(j,\mathcal P,\mathcal C),n_{\pi_k}^{(j)}}
=a_{1,\omega_k}^{(j),n_{\omega_k}^{(j)}}
+\frac{M_k^{\mathcal P,\mathcal C}}{2}
\label{eq:C6_proof_axis_terminal_shape}
\end{equation}

The corresponding scale parameter at the $(\ell+1)$-th iteration is
\begin{equation}
b_{1,\pi_k}^{(j,\mathcal P,\mathcal C),n_{\pi_k}^{(j)},[\ell+1]}
=b_{1,\omega_k}^{(j),n_{\omega_k}^{(j)}}
+[\widetilde{\bm C}_k^{(j,\mathcal P,\mathcal C),[\ell+1]}]_{11}
\label{eq:C6_proof_axis_terminal_scale}
\end{equation}

Therefore, the reciprocal terminal scale parameter satisfies
\begin{equation}
\breve b_{1,\pi_k}^{(j,\mathcal P,\mathcal C),n_{\pi_k}^{(j)},[\ell+1]}
=\left[
b_{1,\omega_k}^{(j),n_{\omega_k}^{(j)}}
+[\widetilde{\bm C}_k^{(j,\mathcal P,\mathcal C),[\ell+1]}]_{11}
\right]^{-1}
\label{eq:C6_proof_axis_terminal_invscale}
\end{equation}

The terminal update provides the last filtering marginal of the
Beta--Bartlett chain. For the historical trajectory times, the recovery
operator~$\bm U_{n_{\pi_k}^{(j)}}(\mathtt g)$ defined below recovers the
forward parameters from the stored smoothed marginals; the backward
recursion in \cite[Eqs.~(4.9) and~(4.10)]{Kartal-2022-VS-ETT} then
propagates the terminal information, after which the resulting
inverse-Gamma marginals are projected onto the DGDIG form in
Eq.~\eqref{eq:C6_single_component}. Applying this recursion to the first
trajectory SSAL state gives, for
$h=1,\ldots,n_{\pi_k}^{(j)}-1$,
\begin{equation}
a_{1,\pi_k}^{(j,\mathcal P,\mathcal C),h}
=(1-\mathtt g)
\left[
\bm U_{n_{\pi_k}^{(j)}}(\mathtt g)
\bm A_{1,\omega_k}^{(j)}
\right]_h
+\mathtt g
a_{1,\pi_k}^{(j,\mathcal P,\mathcal C),h+1}
\label{eq:C6_proof_axis_backward_shape}
\end{equation}
The corresponding reciprocal scale parameter satisfies
\begin{equation}
\begin{aligned}
\breve b_{1,\pi_k}^{(j,\mathcal P,\mathcal C),h,[\ell+1]}
=(1-\mathtt g)
\left[
\bm U_{n_{\pi_k}^{(j)}}(\mathtt g)
\breve{\bm B}_{1,\omega_k}^{(j)}
\right]_h+
\mathtt g
\breve b_{1,\pi_k}^{(j,\mathcal P,\mathcal C),h+1,[\ell+1]}
\end{aligned}
\label{eq:C6_proof_axis_backward_invscale}
\end{equation}
\noindent where $[\cdot]_h$ denotes the $h$-th element of a vector.
According to the backward recursion in
\cite[Eq.~(4.10)]{Kartal-2022-VS-ETT}, for a historical parameter vector
of length $n_{\pi_k}^{(j)}-1$, the recovery matrix of the
corresponding forward parameters is
$\bm W_{n_{\pi_k}^{(j)}-1}^{-1}(\mathtt g)
\bm T_{n_{\pi_k}^{(j)}-1}(\mathtt g)$. Since the current predicted
parameter at the terminal trajectory time does not require recovery, the
recovery matrix of the complete trajectory parameters is
\begin{equation}
\bm U_{n_{\pi_k}^{(j)}}(\mathtt g)
=\begin{cases}
\bm I_1, & n_{\pi_k}^{(j)}=1\\
\operatorname{blkdiag}\!\left(
\bm W_{n_{\pi_k}^{(j)}-1}^{-1}(\mathtt g)
\bm T_{n_{\pi_k}^{(j)}-1}(\mathtt g),1
\right), & n_{\pi_k}^{(j)}\geq2
\end{cases}
\label{eq:C6_proof_axis_recovery_operator}
\end{equation}

According to the definitions of $\bm J_n$ and $\bm T_n(\mathtt g)$,
when $\bm T_n(\mathtt g)$ acts on an arbitrary vector, each of its first
$n-1$ elements equals the current element minus $\mathtt g$ times the
subsequent element, whereas the terminal element remains unchanged.
Stacking Eq.~\eqref{eq:C6_proof_axis_backward_shape} and
Eq.~\eqref{eq:C6_proof_axis_terminal_shape} gives
\begin{equation}
\begin{aligned}
&\bm T_{n_{\pi_k}^{(j)}}(\mathtt g)
\bm A_{1,\pi_k}^{(j,\mathcal P,\mathcal C)}=\bm W_{n_{\pi_k}^{(j)}}(\mathtt g)
\bm U_{n_{\pi_k}^{(j)}}(\mathtt g)
\bm A_{1,\omega_k}^{(j)}
+\frac{M_k^{\mathcal P,\mathcal C}}{2}
\bm e_{n_{\pi_k}^{(j)}}
\end{aligned}
\label{eq:C6_proof_axis_shape_matrix}
\end{equation}

Since $\bm T_n(\mathtt g)$ is a unit upper-triangular matrix, it is
invertible. Left-multiplying both sides of
Eq.~\eqref{eq:C6_proof_axis_shape_matrix} by
$\bm T_{n_{\pi_k}^{(j)}}^{-1}(\mathtt g)$ gives the shape-parameter
vector of the first posterior trajectory SSAL state
\begin{equation}
\begin{aligned}
&\bm A_{1,\pi_k}^{(j,\mathcal P,\mathcal C)}
=\bm T_{n_{\pi_k}^{(j)}}^{-1}(\mathtt g)
\left[
\bm W_{n_{\pi_k}^{(j)}}(\mathtt g)
\bm U_{n_{\pi_k}^{(j)}}(\mathtt g)
\bm A_{1,\omega_k}^{(j)}
+\frac{M_k^{\mathcal P,\mathcal C}}{2}
\bm e_{n_{\pi_k}^{(j)}}
\right]
\end{aligned}
\label{eq:C6_proof_axis_shape_result}
\end{equation}

Then, stacking Eq.~\eqref{eq:C6_proof_axis_backward_invscale} and
Eq.~\eqref{eq:C6_proof_axis_terminal_invscale} gives
\begin{equation}
\begin{aligned}
\bm T_{n_{\pi_k}^{(j)}}(\mathtt g)
\breve{\bm B}_{1,\pi_k}^{(j,\mathcal P,\mathcal C),[\ell+1]}=(1-\mathtt g)
\bm\varPi_{n_{\pi_k}^{(j)}}
\bm U_{n_{\pi_k}^{(j)}}(\mathtt g)
\breve{\bm B}_{1,\omega_k}^{(j)}
+\left[
b_{1,\omega_k}^{(j),n_{\omega_k}^{(j)}}
+[\widetilde{\bm C}_k^{(j,\mathcal P,\mathcal C),[\ell+1]}]_{11}
\right]^{-1}
\bm e_{n_{\pi_k}^{(j)}}
\end{aligned}
\label{eq:C6_proof_axis_invscale_matrix}
\end{equation}

Left-multiplying both sides of
Eq.~\eqref{eq:C6_proof_axis_invscale_matrix} by
$\bm T_{n_{\pi_k}^{(j)}}^{-1}(\mathtt g)$ gives the reciprocal
scale-parameter vector of the first posterior trajectory SSAL state at
the $(\ell+1)$-th iteration
\begin{equation}
\begin{aligned}
&\breve{\bm B}_{1,\pi_k}^{(j,\mathcal P,\mathcal C),[\ell+1]}
=\bm T_{n_{\pi_k}^{(j)}}^{-1}(\mathtt g)
\left\{
(1-\mathtt g)
\bm\varPi_{n_{\pi_k}^{(j)}}
\bm U_{n_{\pi_k}^{(j)}}(\mathtt g)
\breve{\bm B}_{1,\omega_k}^{(j)}
+\left[
b_{1,\omega_k}^{(j),n_{\omega_k}^{(j)}}
+[\widetilde{\bm C}_k^{(j,\mathcal P,\mathcal C),[\ell+1]}]_{11}
\right]^{-1}
\bm e_{n_{\pi_k}^{(j)}}
\right\}
\end{aligned}
\label{eq:C6_proof_axis_invscale_result}
\end{equation}

For the second posterior trajectory SSAL state, replacing the
principal-axis index $1$ in the above derivation with $2$ and replacing
the principal-axis residual statistic
$[\widetilde{\bm C}_k^{(j,\mathcal P,\mathcal C),[\ell+1]}]_{11}$ with
$[\widetilde{\bm C}_k^{(j,\mathcal P,\mathcal C),[\ell+1]}]_{22}$ yield
the shape-parameter vector and reciprocal scale-parameter vector updates
of the same form. Therefore, Eqs.~\eqref{eq:C6_proof_axis_shape_result}
and~\eqref{eq:C6_proof_axis_invscale_result}, after replacing the
principal-axis index, hold for both $u\in\{1,2\}$.

Since $\mathtt g\in(0,1)$ and the inverse-Gamma parameters obtained from
the terminal trajectory update are positive, the backward recursion in
\cite[Eq.~(4.10)]{Kartal-2022-VS-ETT} preserves the inverse-Gamma form of
the smoothed marginal at each historical time. Taking the elementwise
reciprocal of
$\breve{\bm B}_{u,\pi_k}^{(j,\mathcal P,\mathcal C),[\ell+1]}$ gives the
scale-parameter vector
$\bm B_{u,\pi_k}^{(j,\mathcal P,\mathcal C),[\ell+1]}$.

Therefore, Eqs.~\eqref{eq:C6_proof_axis_shape_result} and
\eqref{eq:C6_proof_axis_invscale_result} prove
Proposition~\ref{prop:C6_axis_update}.

\newpage
\section{Proof of Proposition~\ref{prop:C6_theta_update}}
\label{app:C6_theta_update_proof}

According to the structured variational factorization in
Eq.~\eqref{eq:C6_traj_vb_factor} and the CAVI coordinate-update rule, the
variational factor of the posterior trajectory orientation state sequence
satisfies
\begingroup

\begin{equation}
\begin{aligned}
\ln q_k^{\theta,(j,\mathcal P,\mathcal C),[\ell+1]}
\!\left(\bm\theta^{1:n_{\pi_k}^{(j)}}\right)
&=
\mathbb E_{\substack{
q_k^{\bm r,(j,\mathcal P,\mathcal C),[\ell+1]}
\prod_{u=1}^{2}q_k^{\ddot l_u,(j,\mathcal P,\mathcal C),[\ell+1]}\\
{}\times\prod_{m=1}^{M_k^{\mathcal P,\mathcal C}}
q_k^{\bm y^m,(j,\mathcal P,\mathcal C),[\ell]}}}
\!\Bigg[
\ln p_k\!\left(
\begin{gathered}
\bm r^{1:n_{\omega_k}^{(j)}},
\bm\theta^{1:n_{\omega_k}^{(j)}},
\ddot l_1^{1:n_{\omega_k}^{(j)}},
\ddot l_2^{1:n_{\omega_k}^{(j)}},\\
\mathcal Y_k^{(\mathcal P,\mathcal C)},
\mathcal Z_k^{(\mathcal P,\mathcal C)}
\end{gathered}
\right)
\Bigg]
+\mathtt c_{\theta}
\end{aligned}
\label{eq:C6_proof_theta_CAVI}
\end{equation}
\endgroup

\begingroup

\noindent where the expectation is taken with respect to the variational
posteriors of the trajectory kinematic state sequence and the two
trajectory SSAL state sequences at the $(\ell+1)$-th iteration and all
measurement-source variables at the $\ell$-th iteration, excluding the
variational factor of the trajectory orientation state sequence being
updated; $\mathtt c_{\theta}$ denotes a constant term independent of
$\bm\theta^{1:n_{\pi_k}^{(j)}}$.
\endgroup

The trajectory orientation state density of the $j$-th predicted
trajectory component is
\begin{equation}
p_{\omega_k}^{\theta,(j)}
\!\left(\bm\theta^{1:n_{\omega_k}^{(j)}}\right)
=\mathcal N\!\left(
\bm\theta^{1:n_{\omega_k}^{(j)}};
\hat{\bm\varTheta}_{\omega_k}^{(j)},
\bm\varXi_{\omega_k}^{\bm\varTheta,(j)}
\right)
\label{eq:C6_proof_theta_pred_density}
\end{equation}

According to the variational posteriors of the SSAL states at the
$(\ell+1)$-th iteration obtained in
Proposition~\ref{prop:C6_axis_update}, the scale-weighted reciprocal
expectations of the two terminal trajectory SSAL states form the
following diagonal precision matrix:
\begin{equation}
\bm\varPhi_{\pi_k}^{(j,\mathcal P,\mathcal C),[\ell+1]}
=\operatorname{diag}\!\left(
\iota_{1,\pi_k}^{(j,\mathcal P,\mathcal C),n_{\omega_k}^{(j)},[\ell+1]},
\iota_{2,\pi_k}^{(j,\mathcal P,\mathcal C),n_{\omega_k}^{(j)},[\ell+1]}
\right)
\label{eq:C6_proof_theta_axis_precision}
\end{equation}

Combining the posterior trajectory kinematic state sequence at the
$(\ell+1)$-th iteration obtained in
Proposition~\ref{prop:C6_r_update} with the variational posteriors of the
measurement-source variables at the $\ell$-th iteration, the second-order
residual moment corresponding to the $m$-th measurement-source variable
in the measurement cell is given by Eq.~\eqref{eq:C6_axis_Cm_update}.
\begingroup

Using the measurement-source conditional density in
Eq.~\eqref{eq:C6_proof_r_source_density2},
$\det(\mathcal R(\theta))=1$,
$\bm a^{\mathrm T}\bm B\bm a
=\operatorname{tr}(\bm B\bm a\bm a^{\mathrm T})$,
Eqs.~\eqref{eq:C6_proof_theta_axis_precision} and
\eqref{eq:C6_axis_Cm_update}, and the cyclic property of the trace,
Eq.~\eqref{eq:C6_proof_theta_CAVI} can be expanded through the following
chain of equalities:
\begin{equation}
\begin{aligned}
&\ln q_k^{\theta,(j,\mathcal P,\mathcal C),[\ell+1]}
\!\left(\bm\theta^{1:n_{\pi_k}^{(j)}}\right)\\
&\quad=\ln p_{\omega_k}^{\theta,(j)}
\!\left(\bm\theta^{1:n_{\omega_k}^{(j)}}\right)
+\sum_{m=1}^{M_k^{\mathcal P,\mathcal C}}
\mathbb E_{\substack{
q_k^{\bm r,(j,\mathcal P,\mathcal C),[\ell+1]}
\prod_{u=1}^{2}q_k^{\ddot l_u,(j,\mathcal P,\mathcal C),[\ell+1]}\\
{}\times q_k^{\bm y^m,(j,\mathcal P,\mathcal C),[\ell]}}}
\!\Bigg[
\ln p\!\left(
\bm y_k^{\mathcal P,\mathcal C,m}\mid
\bm r^{1:n_{\omega_k}^{(j)}},
\theta^{n_{\omega_k}^{(j)}},
\ddot l_1^{n_{\omega_k}^{(j)}},
\ddot l_2^{n_{\omega_k}^{(j)}}
\right)
\Bigg]
+\mathtt c_{\theta}\\[1mm]
&\quad=\ln p_{\omega_k}^{\theta,(j)}
\!\left(\bm\theta^{1:n_{\omega_k}^{(j)}}\right)
-\frac{1}{2}
\sum_{m=1}^{M_k^{\mathcal P,\mathcal C}}
\mathbb E_{\substack{
q_k^{\bm r,(j,\mathcal P,\mathcal C),[\ell+1]}
\prod_{u=1}^{2}q_k^{\ddot l_u,(j,\mathcal P,\mathcal C),[\ell+1]}\\
{}\times q_k^{\bm y^m,(j,\mathcal P,\mathcal C),[\ell]}}}
\!\Bigg[
\ln\det\!\left(
\mathtt s\mathcal S\!\left(
\theta^{n_{\omega_k}^{(j)}},
\ddot l_1^{n_{\omega_k}^{(j)}},
\ddot l_2^{n_{\omega_k}^{(j)}}
\right)
\right)\\
&\qquad+
\left(
\bm y_k^{\mathcal P,\mathcal C,m}
-\dot{\bm H}_k^{(j)}\bm r^{1:n_{\omega_k}^{(j)}}
\right)^{\mathrm T}
\left[
\mathtt s\mathcal S\!\left(
\theta^{n_{\omega_k}^{(j)}},
\ddot l_1^{n_{\omega_k}^{(j)}},
\ddot l_2^{n_{\omega_k}^{(j)}}
\right)
\right]^{-1}
\left(
\bm y_k^{\mathcal P,\mathcal C,m}
-\dot{\bm H}_k^{(j)}\bm r^{1:n_{\omega_k}^{(j)}}
\right)
\Bigg]
+\mathtt c_{\theta}\\[1mm]
&\quad=\ln p_{\omega_k}^{\theta,(j)}
\!\left(\bm\theta^{1:n_{\omega_k}^{(j)}}\right)
-\frac{1}{2}
\sum_{m=1}^{M_k^{\mathcal P,\mathcal C}}
\operatorname{tr}\!\Bigg[
\mathcal R\!\left(\theta^{n_{\omega_k}^{(j)}}\right)
\bm\varPhi_{\pi_k}^{(j,\mathcal P,\mathcal C),[\ell+1]}
\mathcal R\!\left(-\theta^{n_{\omega_k}^{(j)}}\right)
\bm C_k^{(j,\mathcal P,\mathcal C),m,[\ell+1]}
\Bigg]
+\mathtt c_{\theta}\\[1mm]
&\quad=\ln p_{\omega_k}^{\theta,(j)}
\!\left(\bm\theta^{1:n_{\omega_k}^{(j)}}\right)
\underbrace{
-\frac{1}{2}
\sum_{m=1}^{M_k^{\mathcal P,\mathcal C}}
\operatorname{tr}\!\left[
\bm\varPhi_{\pi_k}^{(j,\mathcal P,\mathcal C),[\ell+1]}
\mathcal R\!\left(-\theta^{n_{\omega_k}^{(j)}}\right)
\bm C_k^{(j,\mathcal P,\mathcal C),m,[\ell+1]}
\mathcal R\!\left(\theta^{n_{\omega_k}^{(j)}}\right)
\right]
}_{\mathsf D}
+\mathtt c_{\theta}
\end{aligned}
\label{eq:C6_proof_theta_likelihood}
\end{equation}
\endgroup

The first derivative of the rotation function with respect to the
orientation state is
\begin{equation}
\dot{\mathcal R}(\theta)
=\frac{\partial\mathcal R(\theta)}{\partial\theta}
=\begin{bmatrix}
-\sin\theta&-\cos\theta\\
\cos\theta&-\sin\theta
\end{bmatrix}
\label{eq:C6_proof_rotation_derivative}
\end{equation}

Expanding the rotation function and its inverse to first order at the
terminal trajectory orientation mean
$\hat\theta_{\pi_k}^{(j,\mathcal P,\mathcal C),n_{\omega_k}^{(j)},[\ell]}$
at the $\ell$-th iteration gives
\begin{equation}
\mathcal R\!\left(\theta^{n_{\omega_k}^{(j)}}\right)
\approx\mathcal R\!\left(
\hat\theta_{\pi_k}^{(j,\mathcal P,\mathcal C),n_{\omega_k}^{(j)},[\ell]}
\right)+\dot{\mathcal R}\!\left(
\hat\theta_{\pi_k}^{(j,\mathcal P,\mathcal C),n_{\omega_k}^{(j)},[\ell]}
\right)
\left(
\theta^{n_{\omega_k}^{(j)}}-
\hat\theta_{\pi_k}^{(j,\mathcal P,\mathcal C),n_{\omega_k}^{(j)},[\ell]}
\right)
\label{eq:C6_proof_theta_linearization}
\end{equation}

\begin{equation}
\mathcal R\!\left(-\theta^{n_{\omega_k}^{(j)}}\right)
\approx\mathcal R\!\left(
-\hat\theta_{\pi_k}^{(j,\mathcal P,\mathcal C),n_{\omega_k}^{(j)},[\ell]}
\right)-\dot{\mathcal R}\!\left(
-\hat\theta_{\pi_k}^{(j,\mathcal P,\mathcal C),n_{\omega_k}^{(j)},[\ell]}
\right)
\left(
\theta^{n_{\omega_k}^{(j)}}-
\hat\theta_{\pi_k}^{(j,\mathcal P,\mathcal C),n_{\omega_k}^{(j)},[\ell]}
\right)
\label{eq:C6_proof_theta_inverse_linearization}
\end{equation}

\begingroup

Because
$\bm\varPhi_{\pi_k}^{(j,\mathcal P,\mathcal C),[\ell+1]}$ and
$\bm C_k^{(j,\mathcal P,\mathcal C),m,[\ell+1]}$ are symmetric, while
$\mathcal R(-\theta)=\mathcal R(\theta)^{\mathrm T}$ and
$-\dot{\mathcal R}(-\theta)=\dot{\mathcal R}(\theta)^{\mathrm T}$, the two
linear cross terms are identical. Substituting
Eqs.~\eqref{eq:C6_proof_theta_linearization} and
\eqref{eq:C6_proof_theta_inverse_linearization} into the
orientation-dependent measurement term in
Eq.~\eqref{eq:C6_proof_theta_likelihood}, expanding the matrix product,
and collecting the terms in $\theta^{n_{\omega_k}^{(j)}}$ gives the
following chain of equalities:
\allowdisplaybreaks[3]
\begin{align}
\mathsf{D}&=-\frac{1}{2}
\sum_{m=1}^{M_k^{\mathcal P,\mathcal C}}
\operatorname{tr}\!\left[
\bm\varPhi_{\pi_k}^{(j,\mathcal P,\mathcal C),[\ell+1]}
\mathcal R\!\left(-\theta^{n_{\omega_k}^{(j)}}\right)
\bm C_k^{(j,\mathcal P,\mathcal C),m,[\ell+1]}
\mathcal R\!\left(\theta^{n_{\omega_k}^{(j)}}\right)
\right]
\notag\\
&\quad\approx
-\frac{1}{2}
\sum_{m=1}^{M_k^{\mathcal P,\mathcal C}}
\operatorname{tr}\!\Bigg[
\bm\varPhi_{\pi_k}^{(j,\mathcal P,\mathcal C),[\ell+1]}
\left[
\mathcal R\!\left(
-\hat\theta_{\pi_k}^{(j,\mathcal P,\mathcal C),
n_{\omega_k}^{(j)},[\ell]}
\right)
-\dot{\mathcal R}\!\left(
-\hat\theta_{\pi_k}^{(j,\mathcal P,\mathcal C),
n_{\omega_k}^{(j)},[\ell]}
\right)
\underbrace{
\left(
\theta^{n_{\omega_k}^{(j)}}-
\hat\theta_{\pi_k}^{(j,\mathcal P,\mathcal C),
n_{\omega_k}^{(j)},[\ell]}
\right)
}_{\Delta\theta}
\right]
\notag\\
&\hspace{28mm}\times
\bm C_k^{(j,\mathcal P,\mathcal C),m,[\ell+1]}
\left[
\mathcal R\!\left(
\hat\theta_{\pi_k}^{(j,\mathcal P,\mathcal C),
n_{\omega_k}^{(j)},[\ell]}
\right)
+\dot{\mathcal R}\!\left(
\hat\theta_{\pi_k}^{(j,\mathcal P,\mathcal C),
n_{\omega_k}^{(j)},[\ell]}
\right)\Delta\theta
\right]
\Bigg]
\notag\\
&\quad=
-\frac{1}{2}
\sum_{m=1}^{M_k^{\mathcal P,\mathcal C}}
\Bigg\{
\operatorname{tr}\!\Bigg[
\bm\varPhi_{\pi_k}^{(j,\mathcal P,\mathcal C),[\ell+1]}
\mathcal R\!\left(
-\hat\theta_{\pi_k}^{(j,\mathcal P,\mathcal C),
n_{\omega_k}^{(j)},[\ell]}
\right)
\bm C_k^{(j,\mathcal P,\mathcal C),m,[\ell+1]}
\mathcal R\!\left(
\hat\theta_{\pi_k}^{(j,\mathcal P,\mathcal C),
n_{\omega_k}^{(j)},[\ell]}
\right)
\Bigg]
\notag\\
&\qquad+
\Delta\theta\,
\operatorname{tr}\!\Bigg[
\bm\varPhi_{\pi_k}^{(j,\mathcal P,\mathcal C),[\ell+1]}
\left[
-\dot{\mathcal R}\!\left(
-\hat\theta_{\pi_k}^{(j,\mathcal P,\mathcal C),
n_{\omega_k}^{(j)},[\ell]}
\right)
\right]
\bm C_k^{(j,\mathcal P,\mathcal C),m,[\ell+1]}
\mathcal R\!\left(
\hat\theta_{\pi_k}^{(j,\mathcal P,\mathcal C),
n_{\omega_k}^{(j)},[\ell]}
\right)
\Bigg]
\notag\\
&\qquad+
\Delta\theta\,
\operatorname{tr}\!\Bigg[
\bm\varPhi_{\pi_k}^{(j,\mathcal P,\mathcal C),[\ell+1]}
\mathcal R\!\left(
-\hat\theta_{\pi_k}^{(j,\mathcal P,\mathcal C),
n_{\omega_k}^{(j)},[\ell]}
\right)
\bm C_k^{(j,\mathcal P,\mathcal C),m,[\ell+1]}
\dot{\mathcal R}\!\left(
\hat\theta_{\pi_k}^{(j,\mathcal P,\mathcal C),
n_{\omega_k}^{(j)},[\ell]}
\right)
\Bigg]
\notag\\
&\qquad+
(\Delta\theta)^2
\operatorname{tr}\!\Bigg[
\bm\varPhi_{\pi_k}^{(j,\mathcal P,\mathcal C),[\ell+1]}
\left[
-\dot{\mathcal R}\!\left(
-\hat\theta_{\pi_k}^{(j,\mathcal P,\mathcal C),
n_{\omega_k}^{(j)},[\ell]}
\right)
\right]
\bm C_k^{(j,\mathcal P,\mathcal C),m,[\ell+1]}
\dot{\mathcal R}\!\left(
\hat\theta_{\pi_k}^{(j,\mathcal P,\mathcal C),
n_{\omega_k}^{(j)},[\ell]}
\right)
\Bigg]
\Bigg\}
\notag\\
&\quad=
-\frac{1}{2}
\sum_{m=1}^{M_k^{\mathcal P,\mathcal C}}
\operatorname{tr}\!\Bigg[
\bm\varPhi_{\pi_k}^{(j,\mathcal P,\mathcal C),[\ell+1]}
\left[
-\dot{\mathcal R}\!\left(
-\hat\theta_{\pi_k}^{(j,\mathcal P,\mathcal C),
n_{\omega_k}^{(j)},[\ell]}
\right)
\right]
\bm C_k^{(j,\mathcal P,\mathcal C),m,[\ell+1]}
\dot{\mathcal R}\!\left(
\hat\theta_{\pi_k}^{(j,\mathcal P,\mathcal C),
n_{\omega_k}^{(j)},[\ell]}
\right)
\Bigg](\Delta\theta)^2
\notag\\
&\qquad-
\sum_{m=1}^{M_k^{\mathcal P,\mathcal C}}
\operatorname{tr}\!\Bigg[
\bm\varPhi_{\pi_k}^{(j,\mathcal P,\mathcal C),[\ell+1]}
\mathcal R\!\left(
-\hat\theta_{\pi_k}^{(j,\mathcal P,\mathcal C),
n_{\omega_k}^{(j)},[\ell]}
\right)
\bm C_k^{(j,\mathcal P,\mathcal C),m,[\ell+1]}
\dot{\mathcal R}\!\left(
\hat\theta_{\pi_k}^{(j,\mathcal P,\mathcal C),
n_{\omega_k}^{(j)},[\ell]}
\right)
\Bigg]\Delta\theta
+\mathtt c_{\theta}
\notag\\
&\quad=
-\frac{1}{2}
\sum_{m=1}^{M_k^{\mathcal P,\mathcal C}}
\operatorname{tr}\!\Bigg[
\bm\varPhi_{\pi_k}^{(j,\mathcal P,\mathcal C),[\ell+1]}
\left[
-\dot{\mathcal R}\!\left(
-\hat\theta_{\pi_k}^{(j,\mathcal P,\mathcal C),
n_{\omega_k}^{(j)},[\ell]}
\right)
\right]
\bm C_k^{(j,\mathcal P,\mathcal C),m,[\ell+1]}
\dot{\mathcal R}\!\left(
\hat\theta_{\pi_k}^{(j,\mathcal P,\mathcal C),
n_{\omega_k}^{(j)},[\ell]}
\right)
\Bigg]
\left(
\theta^{n_{\omega_k}^{(j)}}-
\hat\theta_{\pi_k}^{(j,\mathcal P,\mathcal C),
n_{\omega_k}^{(j)},[\ell]}
\right)^2
\notag\\
&\qquad-
\sum_{m=1}^{M_k^{\mathcal P,\mathcal C}}
\operatorname{tr}\!\Bigg[
\bm\varPhi_{\pi_k}^{(j,\mathcal P,\mathcal C),[\ell+1]}
\mathcal R\!\left(
-\hat\theta_{\pi_k}^{(j,\mathcal P,\mathcal C),
n_{\omega_k}^{(j)},[\ell]}
\right)
\bm C_k^{(j,\mathcal P,\mathcal C),m,[\ell+1]}
\dot{\mathcal R}\!\left(
\hat\theta_{\pi_k}^{(j,\mathcal P,\mathcal C),
n_{\omega_k}^{(j)},[\ell]}
\right)
\Bigg]
\left(
\theta^{n_{\omega_k}^{(j)}}-
\hat\theta_{\pi_k}^{(j,\mathcal P,\mathcal C),
n_{\omega_k}^{(j)},[\ell]}
\right)
+\mathtt c_{\theta}
\notag\\
&\quad=
-\frac{1}{2}
\underbrace{
\sum_{m=1}^{M_k^{\mathcal P,\mathcal C}}
\operatorname{tr}\!\Bigg[
\bm\varPhi_{\pi_k}^{(j,\mathcal P,\mathcal C),[\ell+1]}
\left[
-\dot{\mathcal R}\!\left(
-\hat\theta_{\pi_k}^{(j,\mathcal P,\mathcal C),
n_{\omega_k}^{(j)},[\ell]}
\right)
\right]
\bm C_k^{(j,\mathcal P,\mathcal C),m,[\ell+1]}
\dot{\mathcal R}\!\left(
\hat\theta_{\pi_k}^{(j,\mathcal P,\mathcal C),
n_{\omega_k}^{(j)},[\ell]}
\right)
\Bigg]
}_{\zeta_{\pi_k}^{(j,\mathcal P,\mathcal C),[\ell+1]}}
\left(\theta^{n_{\omega_k}^{(j)}}\right)^2
\notag\\
&\qquad+
\underbrace{
\begin{aligned}
&\hat\theta_{\pi_k}^{(j,\mathcal P,\mathcal C),
n_{\omega_k}^{(j)},[\ell]}
\sum_{m=1}^{M_k^{\mathcal P,\mathcal C}}
\operatorname{tr}\!\Bigg[
\bm\varPhi_{\pi_k}^{(j,\mathcal P,\mathcal C),[\ell+1]}
\left[
-\dot{\mathcal R}\!\left(
-\hat\theta_{\pi_k}^{(j,\mathcal P,\mathcal C),
n_{\omega_k}^{(j)},[\ell]}
\right)
\right]
\bm C_k^{(j,\mathcal P,\mathcal C),m,[\ell+1]}
\dot{\mathcal R}\!\left(
\hat\theta_{\pi_k}^{(j,\mathcal P,\mathcal C),
n_{\omega_k}^{(j)},[\ell]}
\right)
\Bigg]\\
&\quad-
\sum_{m=1}^{M_k^{\mathcal P,\mathcal C}}
\operatorname{tr}\!\Bigg[
\bm\varPhi_{\pi_k}^{(j,\mathcal P,\mathcal C),[\ell+1]}
\mathcal R\!\left(
-\hat\theta_{\pi_k}^{(j,\mathcal P,\mathcal C),
n_{\omega_k}^{(j)},[\ell]}
\right)
\bm C_k^{(j,\mathcal P,\mathcal C),m,[\ell+1]}
\dot{\mathcal R}\!\left(
\hat\theta_{\pi_k}^{(j,\mathcal P,\mathcal C),
n_{\omega_k}^{(j)},[\ell]}
\right)
\Bigg]
\end{aligned}
}_{\eta_{\pi_k}^{(j,\mathcal P,\mathcal C),[\ell+1]}}
\theta^{n_{\omega_k}^{(j)}}
+\mathtt c_{\theta}
\label{eq:C6_proof_theta_quadratic_likelihood}
\end{align}
\endgroup

When
$\zeta_{\pi_k}^{(j,\mathcal P,\mathcal C),[\ell+1]}>0$, completing the
square for the quadratic and linear terms in
$\theta^{n_{\omega_k}^{(j)}}$ in
Eq.~\eqref{eq:C6_proof_theta_quadratic_likelihood} gives
\begingroup

\begin{equation}
\begin{aligned}
&-\frac{1}{2}
\zeta_{\pi_k}^{(j,\mathcal P,\mathcal C),[\ell+1]}
\left(\theta^{n_{\omega_k}^{(j)}}\right)^2
+\eta_{\pi_k}^{(j,\mathcal P,\mathcal C),[\ell+1]}
\theta^{n_{\omega_k}^{(j)}}\\
&\quad=
-\frac{1}{2}
\zeta_{\pi_k}^{(j,\mathcal P,\mathcal C),[\ell+1]}
\Bigg[
\left(\theta^{n_{\omega_k}^{(j)}}\right)^2
-2
\frac{
\eta_{\pi_k}^{(j,\mathcal P,\mathcal C),[\ell+1]}
}{
\zeta_{\pi_k}^{(j,\mathcal P,\mathcal C),[\ell+1]}
}
\theta^{n_{\omega_k}^{(j)}}
+\left(
\frac{
\eta_{\pi_k}^{(j,\mathcal P,\mathcal C),[\ell+1]}
}{
\zeta_{\pi_k}^{(j,\mathcal P,\mathcal C),[\ell+1]}
}
\right)^2
-\left(
\frac{
\eta_{\pi_k}^{(j,\mathcal P,\mathcal C),[\ell+1]}
}{
\zeta_{\pi_k}^{(j,\mathcal P,\mathcal C),[\ell+1]}
}
\right)^2
\Bigg]\\
&\quad=
-\frac{1}{2}
\zeta_{\pi_k}^{(j,\mathcal P,\mathcal C),[\ell+1]}
\left(
\theta^{n_{\omega_k}^{(j)}}-
\frac{
\eta_{\pi_k}^{(j,\mathcal P,\mathcal C),[\ell+1]}
}{
\zeta_{\pi_k}^{(j,\mathcal P,\mathcal C),[\ell+1]}
}
\right)^2
+\underbrace{
\frac{
(\eta_{\pi_k}^{(j,\mathcal P,\mathcal C),[\ell+1]})^2
}{
2\zeta_{\pi_k}^{(j,\mathcal P,\mathcal C),[\ell+1]}
}
}_{
\substack{
\text{Independent of }\theta^{n_{\omega_k}^{(j)}}, \
\text{absorbed into }\mathtt c_{\theta}
}
}
\end{aligned}
\label{eq:C6_proof_theta_square_completion}
\end{equation}
\endgroup

\begingroup

The quantity $\mathsf D$ underbraced in
Eq.~\eqref{eq:C6_proof_theta_likelihood} is the orientation-dependent
contribution of the measurement-source conditional densities to the log
variational posterior. Its exponential is therefore the corresponding
local orientation measurement factor. Using
Eqs.~\eqref{eq:C6_proof_theta_quadratic_likelihood} and
\eqref{eq:C6_proof_theta_square_completion} and absorbing all terms
independent of $\theta^{n_{\omega_k}^{(j)}}$ into the proportionality
constant gives
\begin{equation}
\begin{aligned}
\exp\!\left(\mathsf D\right)
&\approx
\exp\!\Bigg[
-\frac{1}{2}
\zeta_{\pi_k}^{(j,\mathcal P,\mathcal C),[\ell+1]}
\left(\theta^{n_{\omega_k}^{(j)}}\right)^2
+\eta_{\pi_k}^{(j,\mathcal P,\mathcal C),[\ell+1]}
\theta^{n_{\omega_k}^{(j)}}
+\mathtt c_{\theta}
\Bigg]\\
&\quad\propto
\exp\!\Bigg[
-\frac{1}{2}
\zeta_{\pi_k}^{(j,\mathcal P,\mathcal C),[\ell+1]}
\left(
\theta^{n_{\omega_k}^{(j)}}-
\frac{
\eta_{\pi_k}^{(j,\mathcal P,\mathcal C),[\ell+1]}
}{
\zeta_{\pi_k}^{(j,\mathcal P,\mathcal C),[\ell+1]}
}
\right)^2
\Bigg]\\
&\quad=
\sqrt{
\frac{2\pi}{
\zeta_{\pi_k}^{(j,\mathcal P,\mathcal C),[\ell+1]}
}}
\mathcal N\!\left(
\theta^{n_{\omega_k}^{(j)}};
\frac{
\eta_{\pi_k}^{(j,\mathcal P,\mathcal C),[\ell+1]}
}{
\zeta_{\pi_k}^{(j,\mathcal P,\mathcal C),[\ell+1]}
},
(\zeta_{\pi_k}^{(j,\mathcal P,\mathcal C),[\ell+1]})^{-1}
\right)\\
&\quad\propto
\mathcal N\!\left(
\theta^{n_{\omega_k}^{(j)}};
\underbrace{
\frac{
\eta_{\pi_k}^{(j,\mathcal P,\mathcal C),[\ell+1]}
}{
\zeta_{\pi_k}^{(j,\mathcal P,\mathcal C),[\ell+1]}
}
}_{\widetilde\theta_{\pi_k}^{(j,\mathcal P,\mathcal C),[\ell+1]}},
(\zeta_{\pi_k}^{(j,\mathcal P,\mathcal C),[\ell+1]})^{-1}
\right)
\end{aligned}
\label{eq:C6_proof_theta_local_gaussian}
\end{equation}
\endgroup

Since the local Gaussian measurement term acts only on the terminal
trajectory orientation state, the terminal trajectory orientation can be
expressed as
$\bm e_{n_{\omega_k}^{(j)}}^{\mathrm T}
\bm\theta^{1:n_{\omega_k}^{(j)}}$. Multiplying the local measurement
term in Eq.~\eqref{eq:C6_proof_theta_local_gaussian} by the predicted
trajectory orientation state density in
Eq.~\eqref{eq:C6_proof_theta_pred_density} gives
\begingroup

\begin{equation}
\begin{aligned}
&\ln q_k^{\theta,(j,\mathcal P,\mathcal C),[\ell+1]}
\!\left(\bm\theta^{1:n_{\omega_k}^{(j)}}\right)\\
&\quad=-\frac{1}{2}
\left(
\bm\theta^{1:n_{\omega_k}^{(j)}}-
\hat{\bm\varTheta}_{\omega_k}^{(j)}
\right)^{\mathrm T}
(\bm\varXi_{\omega_k}^{\bm\varTheta,(j)})^{-1}
\left(
\bm\theta^{1:n_{\omega_k}^{(j)}}-
\hat{\bm\varTheta}_{\omega_k}^{(j)}
\right)-
\frac{1}{2}
\zeta_{\pi_k}^{(j,\mathcal P,\mathcal C),[\ell+1]}
\left(
\widetilde\theta_{\pi_k}^{(j,\mathcal P,\mathcal C),[\ell+1]}-
\bm e_{n_{\omega_k}^{(j)}}^{\mathrm T}
\bm\theta^{1:n_{\omega_k}^{(j)}}
\right)^2
+\mathtt c_{\theta}\\
&\quad=-\frac{1}{2}
\Bigg[
\left(\bm\theta^{1:n_{\omega_k}^{(j)}}\right)^{\mathrm T}
(\bm\varXi_{\omega_k}^{\bm\varTheta,(j)})^{-1}
\bm\theta^{1:n_{\omega_k}^{(j)}}
-2
\left(\bm\theta^{1:n_{\omega_k}^{(j)}}\right)^{\mathrm T}
(\bm\varXi_{\omega_k}^{\bm\varTheta,(j)})^{-1}
\hat{\bm\varTheta}_{\omega_k}^{(j)}+
(\hat{\bm\varTheta}_{\omega_k}^{(j)})^{\mathrm T}
(\bm\varXi_{\omega_k}^{\bm\varTheta,(j)})^{-1}
\hat{\bm\varTheta}_{\omega_k}^{(j)}
\Bigg]\\
&\qquad-
\frac{1}{2}
\zeta_{\pi_k}^{(j,\mathcal P,\mathcal C),[\ell+1]}
\Bigg[
\left(
\widetilde\theta_{\pi_k}^{(j,\mathcal P,\mathcal C),[\ell+1]}
\right)^2
-2
\widetilde\theta_{\pi_k}^{(j,\mathcal P,\mathcal C),[\ell+1]}
\bm e_{n_{\omega_k}^{(j)}}^{\mathrm T}
\bm\theta^{1:n_{\omega_k}^{(j)}}+
\left(\bm\theta^{1:n_{\omega_k}^{(j)}}\right)^{\mathrm T}
\bm e_{n_{\omega_k}^{(j)}}
\bm e_{n_{\omega_k}^{(j)}}^{\mathrm T}
\bm\theta^{1:n_{\omega_k}^{(j)}}
\Bigg]
+\mathtt c_{\theta}\\
&\quad=-\frac{1}{2}
\left(\bm\theta^{1:n_{\omega_k}^{(j)}}\right)^{\mathrm T}
\Bigg[
(\bm\varXi_{\omega_k}^{\bm\varTheta,(j)})^{-1}
+\zeta_{\pi_k}^{(j,\mathcal P,\mathcal C),[\ell+1]}
\bm e_{n_{\omega_k}^{(j)}}
\bm e_{n_{\omega_k}^{(j)}}^{\mathrm T}
\Bigg]
\bm\theta^{1:n_{\omega_k}^{(j)}} \\
&\qquad+\left(\bm\theta^{1:n_{\omega_k}^{(j)}}\right)^{\mathrm T}
\Bigg[
(\bm\varXi_{\omega_k}^{\bm\varTheta,(j)})^{-1}
\hat{\bm\varTheta}_{\omega_k}^{(j)}
+\zeta_{\pi_k}^{(j,\mathcal P,\mathcal C),[\ell+1]}
\bm e_{n_{\omega_k}^{(j)}}
\widetilde\theta_{\pi_k}^{(j,\mathcal P,\mathcal C),[\ell+1]}
\Bigg]
+\mathtt c_{\theta}
\end{aligned}
\label{eq:C6_proof_theta_trajectory_quadratic}
\end{equation}

Since $(\bm\varXi_{\omega_k}^{\bm\varTheta,(j)})^{-1}$ is positive
definite and
$\zeta_{\pi_k}^{(j,\mathcal P,\mathcal C),[\ell+1]}
\bm e_{n_{\omega_k}^{(j)}}\bm e_{n_{\omega_k}^{(j)}}^{\mathrm T}$ is
positive semidefinite, the quadratic coefficient in
Eq.~\eqref{eq:C6_proof_theta_trajectory_quadratic} is positive definite.
Hence, the posterior trajectory orientation density is Gaussian:
\begin{equation}
\begin{aligned}
&q_k^{\theta,(j,\mathcal P,\mathcal C),[\ell+1]}
\!\left(\bm\theta^{1:n_{\omega_k}^{(j)}}\right)=
\mathcal N\!\left(
\bm\theta^{1:n_{\omega_k}^{(j)}};
\hat{\bm\varTheta}_{\pi_k}^{(j,\mathcal P,\mathcal C),[\ell+1]},
\bm\varXi_{\pi_k}^{\bm\varTheta,
(j,\mathcal P,\mathcal C),[\ell+1]}
\right)
\end{aligned}
\label{eq:C6_proof_theta_posterior_density}
\end{equation}

The logarithmic information form of
Eq.~\eqref{eq:C6_proof_theta_posterior_density} is
\begin{equation}
\begin{aligned}
\ln q_k^{\theta,(j,\mathcal P,\mathcal C),[\ell+1]}
\!\left(\bm\theta^{1:n_{\omega_k}^{(j)}}\right)
&=-\frac{1}{2}
\left(
\bm\theta^{1:n_{\omega_k}^{(j)}}-
\hat{\bm\varTheta}_{\pi_k}^{(j,\mathcal P,\mathcal C),[\ell+1]}
\right)^{\mathrm T}
\left(
\bm\varXi_{\pi_k}^{\bm\varTheta,
(j,\mathcal P,\mathcal C),[\ell+1]}
\right)^{-1}
\left(
\bm\theta^{1:n_{\omega_k}^{(j)}}-
\hat{\bm\varTheta}_{\pi_k}^{(j,\mathcal P,\mathcal C),[\ell+1]}
\right)
+\mathtt c_{\theta}\\
&\quad=-\frac{1}{2}
\left(\bm\theta^{1:n_{\omega_k}^{(j)}}\right)^{\mathrm T}
\left(
\bm\varXi_{\pi_k}^{\bm\varTheta,
(j,\mathcal P,\mathcal C),[\ell+1]}
\right)^{-1}
\bm\theta^{1:n_{\omega_k}^{(j)}}\\
&\qquad+
\left(\bm\theta^{1:n_{\omega_k}^{(j)}}\right)^{\mathrm T}
\left(
\bm\varXi_{\pi_k}^{\bm\varTheta,
(j,\mathcal P,\mathcal C),[\ell+1]}
\right)^{-1}
\hat{\bm\varTheta}_{\pi_k}^{(j,\mathcal P,\mathcal C),[\ell+1]}
+\mathtt c_{\theta}
\end{aligned}
\label{eq:C6_proof_theta_posterior_information_form}
\end{equation}

Comparing Eqs.~\eqref{eq:C6_proof_theta_trajectory_quadratic} and
\eqref{eq:C6_proof_theta_posterior_information_form} gives
\endgroup
\begin{equation}
\left(
\bm\varXi_{\pi_k}^{\bm\varTheta,(j,\mathcal P,\mathcal C),[\ell+1]}
\right)^{-1}
=\left(
\bm\varXi_{\omega_k}^{\bm\varTheta,(j)}
\right)^{-1}
+\zeta_{\pi_k}^{(j,\mathcal P,\mathcal C),[\ell+1]}
\bm e_{n_{\omega_k}^{(j)}}
\bm e_{n_{\omega_k}^{(j)}}^{\mathrm T}
\label{eq:C6_proof_theta_information_matrix}
\end{equation}

\begin{equation}
\left(
\bm\varXi_{\pi_k}^{\bm\varTheta,(j,\mathcal P,\mathcal C),[\ell+1]}
\right)^{-1}
\hat{\bm\varTheta}_{\pi_k}^{(j,\mathcal P,\mathcal C),[\ell+1]}
=\left(
\bm\varXi_{\omega_k}^{\bm\varTheta,(j)}
\right)^{-1}
\hat{\bm\varTheta}_{\omega_k}^{(j)}
+\zeta_{\pi_k}^{(j,\mathcal P,\mathcal C),[\ell+1]}
\bm e_{n_{\omega_k}^{(j)}}
\widetilde\theta_{\pi_k}^{(j,\mathcal P,\mathcal C),[\ell+1]}
\label{eq:C6_proof_theta_information_vector}
\end{equation}

Adding the predicted terminal trajectory orientation variance and the
local Gaussian measurement variance gives the equivalent orientation
measurement innovation variance
\begin{equation}
\varXi_{\pi_k}^{\widetilde\theta,(j,\mathcal P,\mathcal C),[\ell+1]}
=\bm e_{n_{\omega_k}^{(j)}}^{\mathrm T}
\bm\varXi_{\omega_k}^{\bm\varTheta,(j)}
\bm e_{n_{\omega_k}^{(j)}}
+\left(
\zeta_{\pi_k}^{(j,\mathcal P,\mathcal C),[\ell+1]}
\right)^{-1}
\label{eq:C6_proof_theta_innovation_variance}
\end{equation}

Substituting Eq.~\eqref{eq:C6_proof_theta_innovation_variance} into
Eq.~\eqref{eq:C6_proof_theta_information_matrix}, taking the inverse, and
rearranging gives
\begin{equation}
\bm\varXi_{\pi_k}^{\bm\varTheta,(j,\mathcal P,\mathcal C),[\ell+1]}
=\bm\varXi_{\omega_k}^{\bm\varTheta,(j)}-
\bm\varXi_{\omega_k}^{\bm\varTheta,(j)}
\bm e_{n_{\omega_k}^{(j)}}
\left(
\varXi_{\pi_k}^{\widetilde\theta,(j,\mathcal P,\mathcal C),[\ell+1]}
\right)^{-1}
\bm e_{n_{\omega_k}^{(j)}}^{\mathrm T}
\bm\varXi_{\omega_k}^{\bm\varTheta,(j)}
\label{eq:C6_proof_theta_covariance_result}
\end{equation}

From the coefficient corresponding to the equivalent orientation
measurement innovation in Eq.~\eqref{eq:C6_proof_theta_covariance_result},
the trajectory orientation state gain vector is
\begin{equation}
\bm\varPsi_{\pi_k}^{(j,\mathcal P,\mathcal C),[\ell+1]}
=\bm\varXi_{\omega_k}^{\bm\varTheta,(j)}
\bm e_{n_{\omega_k}^{(j)}}
\left(
\varXi_{\pi_k}^{\widetilde\theta,(j,\mathcal P,\mathcal C),[\ell+1]}
\right)^{-1}
\label{eq:C6_proof_theta_gain_result}
\end{equation}

Substituting Eq.~\eqref{eq:C6_proof_theta_covariance_result} into
Eq.~\eqref{eq:C6_proof_theta_information_vector} and rearranging gives the
posterior trajectory orientation state mean vector
\begin{equation}
\hat{\bm\varTheta}_{\pi_k}^{(j,\mathcal P,\mathcal C),[\ell+1]}
=\hat{\bm\varTheta}_{\omega_k}^{(j)}+
\bm\varPsi_{\pi_k}^{(j,\mathcal P,\mathcal C),[\ell+1]}
\left[
\widetilde\theta_{\pi_k}^{(j,\mathcal P,\mathcal C),[\ell+1]}-
\bm e_{n_{\omega_k}^{(j)}}^{\mathrm T}
\hat{\bm\varTheta}_{\omega_k}^{(j)}
\right]
\label{eq:C6_proof_theta_mean_result}
\end{equation}

Substituting Eq.~\eqref{eq:C6_proof_theta_gain_result} into
Eq.~\eqref{eq:C6_proof_theta_covariance_result} gives the posterior
trajectory orientation state covariance matrix
\begin{equation}
\bm\varXi_{\pi_k}^{\bm\varTheta,(j,\mathcal P,\mathcal C),[\ell+1]}
=\bm\varXi_{\omega_k}^{\bm\varTheta,(j)}
-\bm\varPsi_{\pi_k}^{(j,\mathcal P,\mathcal C),[\ell+1]}
\bm e_{n_{\omega_k}^{(j)}}^{\mathrm T}
\bm\varXi_{\omega_k}^{\bm\varTheta,(j)}
\label{eq:C6_proof_theta_covariance_compact}
\end{equation}

Therefore, Eqs.~\eqref{eq:C6_proof_theta_mean_result} and
\eqref{eq:C6_proof_theta_covariance_compact} prove
Proposition~\ref{prop:C6_theta_update}.

\newpage
\section{Proof of Proposition~\ref{prop:C6_y_update}}
\label{app:C6_y_update_proof}

\begingroup

The derivation below considers the $m$-th measurement-source variable in
measurement cell $(\mathcal P,\mathcal C)$. According to the structured
variational factorization in Eq.~\eqref{eq:C6_traj_vb_factor} and the
CAVI coordinate-update rule, its variational factor satisfies
\begin{equation}
\begin{aligned}
&\ln q_k^{\bm y^m,(j,\mathcal P,\mathcal C),[\ell+1]}
\!\left(\bm y_k^{\mathcal P,\mathcal C,m}\right)=\mathbb E_{\substack{
q_k^{\bm r,(j,\mathcal P,\mathcal C),[\ell+1]}
q_k^{\theta,(j,\mathcal P,\mathcal C),[\ell+1]}\\
{}\times\prod_{u=1}^{2}
q_k^{\ddot l_u,(j,\mathcal P,\mathcal C),[\ell+1]}}}
\!\Bigg[
\ln p_k\!\left(
\begin{gathered}
\bm r^{1:n_{\omega_k}^{(j)}},
\bm\theta^{1:n_{\omega_k}^{(j)}},
\ddot l_1^{1:n_{\omega_k}^{(j)}},
\ddot l_2^{1:n_{\omega_k}^{(j)}},\\
\bm y_k^{\mathcal P,\mathcal C,m},
\bm z_k^{\mathcal P,\mathcal C,m}
\end{gathered}
\right)
\Bigg]
+\mathtt c_{\bm y_k^{\mathcal P,\mathcal C,m}}
\end{aligned}
\label{eq:C6_proof_y_CAVI}
\end{equation}
\noindent where the expectation is taken with respect to the variational
posteriors of the posterior trajectory kinematic state sequence,
posterior trajectory orientation state sequence, and two posterior
trajectory SSAL state sequences at the $(\ell+1)$-th iteration, excluding
the variational factor of the $m$-th measurement-source variable being
updated; $\mathtt c_{\bm y_k^{\mathcal P,\mathcal C,m}}$ denotes a
constant term independent of
$\bm y_k^{\mathcal P,\mathcal C,m}$.

Given the $m$-th measurement-source variable, the conditional density of
the actual measurement is
\begin{equation}
p\!\left(
\bm z_k^{\mathcal P,\mathcal C,m}\mid
\bm y_k^{\mathcal P,\mathcal C,m}
\right)
=\mathcal N\!\left(
\bm z_k^{\mathcal P,\mathcal C,m};
\bm y_k^{\mathcal P,\mathcal C,m},
\bm R_k
\right)
\label{eq:C6_proof_y_measurement_density}
\end{equation}

Given the terminal trajectory state, the conditional density of the
$m$-th measurement-source variable is
\begin{equation}
\begin{aligned}
&p\!\left(
\bm y_k^{\mathcal P,\mathcal C,m}\mid
\bm r^{1:n_{\omega_k}^{(j)}},
\theta^{n_{\omega_k}^{(j)}},
\ddot l_1^{n_{\omega_k}^{(j)}},
\ddot l_2^{n_{\omega_k}^{(j)}}
\right)=\mathcal N\!\left(
\bm y_k^{\mathcal P,\mathcal C,m};
\dot{\bm H}_k^{(j)}\bm r^{1:n_{\omega_k}^{(j)}},
\mathtt s
\mathcal R\!\left(\theta^{n_{\omega_k}^{(j)}}\right)
\mathcal W\!\left(
\ddot l_1^{n_{\omega_k}^{(j)}},
\ddot l_2^{n_{\omega_k}^{(j)}}
\right)
\mathcal R\!\left(-\theta^{n_{\omega_k}^{(j)}}\right)
\right)
\end{aligned}
\label{eq:C6_proof_y_source_density}
\end{equation}

All remaining terms in the joint density are independent of
$\bm y_k^{\mathcal P,\mathcal C,m}$ and can be absorbed into
$\mathtt c_{\bm y_k^{\mathcal P,\mathcal C,m}}$. Therefore,
Eq.~\eqref{eq:C6_proof_y_CAVI} can be expanded as
\begin{equation}
\begin{aligned}
&\ln q_k^{\bm y^m,(j,\mathcal P,\mathcal C),[\ell+1]}
\!\left(\bm y_k^{\mathcal P,\mathcal C,m}\right)\\
&=\ln p\!\left(
\bm z_k^{\mathcal P,\mathcal C,m}\mid
\bm y_k^{\mathcal P,\mathcal C,m}
\right)+
\mathbb E_{\substack{
q_k^{\bm r,(j,\mathcal P,\mathcal C),[\ell+1]}
q_k^{\theta,(j,\mathcal P,\mathcal C),[\ell+1]}\\
{}\times\prod_{u=1}^{2}
q_k^{\ddot l_u,(j,\mathcal P,\mathcal C),[\ell+1]}}}
\!\Bigg[
\ln p\!\left(
\bm y_k^{\mathcal P,\mathcal C,m}\mid
\bm r^{1:n_{\omega_k}^{(j)}},
\theta^{n_{\omega_k}^{(j)}},
\ddot l_1^{n_{\omega_k}^{(j)}},
\ddot l_2^{n_{\omega_k}^{(j)}}
\right)
\Bigg]
+\mathtt c_{\bm y_k^{\mathcal P,\mathcal C,m}}
\end{aligned}
\label{eq:C6_proof_y_CAVI_decomposition}
\end{equation}

According to the
variational posteriors at the $(\ell+1)$-th iteration obtained in
Propositions~\ref{prop:C6_axis_update} and~\ref{prop:C6_theta_update}, the
expected inverse of the terminal trajectory extent-spread covariance
matrix is
\begin{equation}
\bar{\bm G}_{\pi_k}^{(j,\mathcal P,\mathcal C),[\ell+1]}
=\mathbb E_{q_k^{\theta,(j,\mathcal P,\mathcal C),[\ell+1]}}
\!\left[
\mathcal R\!\left(\theta^{n_{\omega_k}^{(j)}}\right)
\operatorname{diag}\!\left(
\iota_{1,\pi_k}^{(j,\mathcal P,\mathcal C),n_{\pi_k}^{(j)},[\ell+1]},
\iota_{2,\pi_k}^{(j,\mathcal P,\mathcal C),n_{\pi_k}^{(j)},[\ell+1]}
\right)
\mathcal R\!\left(-\theta^{n_{\omega_k}^{(j)}}\right)
\right]
\label{eq:C6_proof_y_expected_precision}
\end{equation}

According to Proposition~\ref{prop:C6_r_update}, the mean vector of the
posterior trajectory kinematic state sequence at the $(\ell+1)$-th
iteration is
$\hat{\bm\varGamma}_{\pi_k}^{(j,\mathcal P,\mathcal C),[\ell+1]}$.
Substituting Eq.~\eqref{eq:C6_proof_y_source_density} into the second term
of Eq.~\eqref{eq:C6_proof_y_CAVI_decomposition} and expanding its Gaussian
log kernel gives
\begin{equation}
\begin{aligned}
&\mathbb E_{\substack{
q_k^{\bm r,(j,\mathcal P,\mathcal C),[\ell+1]}
q_k^{\theta,(j,\mathcal P,\mathcal C),[\ell+1]}\\
{}\times\prod_{u=1}^{2}
q_k^{\ddot l_u,(j,\mathcal P,\mathcal C),[\ell+1]}}}
\!\Bigg[
\ln p\!\left(
\bm y_k^{\mathcal P,\mathcal C,m}\mid
\bm r^{1:n_{\omega_k}^{(j)}},
\theta^{n_{\omega_k}^{(j)}},
\ddot l_1^{n_{\omega_k}^{(j)}},
\ddot l_2^{n_{\omega_k}^{(j)}}
\right)
\Bigg]\\
&\quad=-\frac{1}{2}
\mathbb E_{\substack{
q_k^{\bm r,(j,\mathcal P,\mathcal C),[\ell+1]}
q_k^{\theta,(j,\mathcal P,\mathcal C),[\ell+1]}\\
{}\times\prod_{u=1}^{2}
q_k^{\ddot l_u,(j,\mathcal P,\mathcal C),[\ell+1]}}}
\!\Bigg[
\ln\det\!\Bigg(
\mathtt s
\mathcal R\!\left(\theta^{n_{\omega_k}^{(j)}}\right)
\mathcal W\!\left(
\ddot l_1^{n_{\omega_k}^{(j)}},
\ddot l_2^{n_{\omega_k}^{(j)}}
\right)
\mathcal R\!\left(-\theta^{n_{\omega_k}^{(j)}}\right)
\Bigg)\\
&\qquad+
\left(
\bm y_k^{\mathcal P,\mathcal C,m}-
\dot{\bm H}_k^{(j)}\bm r^{1:n_{\omega_k}^{(j)}}
\right)^{\mathrm T}
\left(
\mathtt s
\mathcal R\!\left(\theta^{n_{\omega_k}^{(j)}}\right)
\mathcal W\!\left(
\ddot l_1^{n_{\omega_k}^{(j)}},
\ddot l_2^{n_{\omega_k}^{(j)}}
\right)
\mathcal R\!\left(-\theta^{n_{\omega_k}^{(j)}}\right)
\right)^{-1}
\left(
\bm y_k^{\mathcal P,\mathcal C,m}-
\dot{\bm H}_k^{(j)}\bm r^{1:n_{\omega_k}^{(j)}}
\right)
\Bigg]
+\mathtt c_{\bm y_k^{\mathcal P,\mathcal C,m}}\\
&\quad=-\frac{1}{2}
\mathbb E_{\substack{
q_k^{\bm r,(j,\mathcal P,\mathcal C),[\ell+1]}
q_k^{\theta,(j,\mathcal P,\mathcal C),[\ell+1]}\\
{}\times\prod_{u=1}^{2}
q_k^{\ddot l_u,(j,\mathcal P,\mathcal C),[\ell+1]}}}
\!\Bigg[
(\bm y_k^{\mathcal P,\mathcal C,m})^{\mathrm T}
\left(
\mathtt s
\mathcal R\!\left(\theta^{n_{\omega_k}^{(j)}}\right)
\mathcal W\!\left(
\ddot l_1^{n_{\omega_k}^{(j)}},
\ddot l_2^{n_{\omega_k}^{(j)}}
\right)
\mathcal R\!\left(-\theta^{n_{\omega_k}^{(j)}}\right)
\right)^{-1}
\bm y_k^{\mathcal P,\mathcal C,m}\\
&\qquad-2
(\bm y_k^{\mathcal P,\mathcal C,m})^{\mathrm T}
\left(
\mathtt s
\mathcal R\!\left(\theta^{n_{\omega_k}^{(j)}}\right)
\mathcal W\!\left(
\ddot l_1^{n_{\omega_k}^{(j)}},
\ddot l_2^{n_{\omega_k}^{(j)}}
\right)
\mathcal R\!\left(-\theta^{n_{\omega_k}^{(j)}}\right)
\right)^{-1}
\dot{\bm H}_k^{(j)}\bm r^{1:n_{\omega_k}^{(j)}}\\
&\qquad+
\left(\dot{\bm H}_k^{(j)}\bm r^{1:n_{\omega_k}^{(j)}}\right)^{\mathrm T}
\left(
\mathtt s
\mathcal R\!\left(\theta^{n_{\omega_k}^{(j)}}\right)
\mathcal W\!\left(
\ddot l_1^{n_{\omega_k}^{(j)}},
\ddot l_2^{n_{\omega_k}^{(j)}}
\right)
\mathcal R\!\left(-\theta^{n_{\omega_k}^{(j)}}\right)
\right)^{-1}
\dot{\bm H}_k^{(j)}\bm r^{1:n_{\omega_k}^{(j)}}
\Bigg]
+\mathtt c_{\bm y_k^{\mathcal P,\mathcal C,m}}\\
&\quad=-\frac{1}{2}
(\bm y_k^{\mathcal P,\mathcal C,m})^{\mathrm T}
\bar{\bm G}_{\pi_k}^{(j,\mathcal P,\mathcal C),[\ell+1]}
\bm y_k^{\mathcal P,\mathcal C,m}
+(\bm y_k^{\mathcal P,\mathcal C,m})^{\mathrm T}
\bar{\bm G}_{\pi_k}^{(j,\mathcal P,\mathcal C),[\ell+1]}
\dot{\bm H}_k^{(j)}
\hat{\bm\varGamma}_{\pi_k}^{(j,\mathcal P,\mathcal C),[\ell+1]}
+\mathtt c_{\bm y_k^{\mathcal P,\mathcal C,m}}
\end{aligned}
\label{eq:C6_proof_y_source_quadratic}
\end{equation}

The log-determinant term and the state-only quadratic term in
Eq.~\eqref{eq:C6_proof_y_source_quadratic} are independent of
$\bm y_k^{\mathcal P,\mathcal C,m}$ and have been absorbed into
$\mathtt c_{\bm y_k^{\mathcal P,\mathcal C,m}}$. On the other hand,
Eq.~\eqref{eq:C6_proof_y_measurement_density} gives the following actual
measurement terms involving $\bm y_k^{\mathcal P,\mathcal C,m}$:
\begin{equation}
\begin{aligned}
\ln p\!\left(
\bm z_k^{\mathcal P,\mathcal C,m}\mid
\bm y_k^{\mathcal P,\mathcal C,m}
\right)
&=-\frac{1}{2}
(\bm y_k^{\mathcal P,\mathcal C,m})^{\mathrm T}
\bm R_k^{-1}
\bm y_k^{\mathcal P,\mathcal C,m}
+(\bm y_k^{\mathcal P,\mathcal C,m})^{\mathrm T}
\bm R_k^{-1}
\bm z_k^{\mathcal P,\mathcal C,m}
+\mathtt c_{\bm y_k^{\mathcal P,\mathcal C,m}}
\end{aligned}
\label{eq:C6_proof_y_measurement_quadratic}
\end{equation}

Substituting Eqs.~\eqref{eq:C6_proof_y_source_quadratic} and
\eqref{eq:C6_proof_y_measurement_quadratic} into
Eq.~\eqref{eq:C6_proof_y_CAVI_decomposition} gives
\begin{equation}
\begin{aligned}
\ln q_k^{\bm y^m,(j,\mathcal P,\mathcal C),[\ell+1]}
\!\left(\bm y_k^{\mathcal P,\mathcal C,m}\right)&=-\frac{1}{2}
(\bm y_k^{\mathcal P,\mathcal C,m})^{\mathrm T}
\left[
\bm R_k^{-1}
+\bar{\bm G}_{\pi_k}^{(j,\mathcal P,\mathcal C),[\ell+1]}
\right]
\bm y_k^{\mathcal P,\mathcal C,m}\\
&\quad+
(\bm y_k^{\mathcal P,\mathcal C,m})^{\mathrm T}
\left[
\bm R_k^{-1}\bm z_k^{\mathcal P,\mathcal C,m}
+\bar{\bm G}_{\pi_k}^{(j,\mathcal P,\mathcal C),[\ell+1]}
\dot{\bm H}_k^{(j)}
\hat{\bm\varGamma}_{\pi_k}^{(j,\mathcal P,\mathcal C),[\ell+1]}
\right]
+\mathtt c_{\bm y_k^{\mathcal P,\mathcal C,m}}
\end{aligned}
\label{eq:C6_proof_y_quadratic_form}
\end{equation}

Eq.~\eqref{eq:C6_proof_y_quadratic_form} is quadratic in
$\bm y_k^{\mathcal P,\mathcal C,m}$. Since $\bm R_k$ and
$\bar{\bm G}_{\pi_k}^{(j,\mathcal P,\mathcal C),[\ell+1]}$ are positive
definite, the coefficient matrix of the quadratic term is positive
definite. Therefore, the variational posterior of the $m$-th
measurement-source variable is Gaussian:
\begin{equation}
\begin{aligned}
q_k^{\bm y^m,(j,\mathcal P,\mathcal C),[\ell+1]}
\!\left(\bm y_k^{\mathcal P,\mathcal C,m}\right)=\mathcal N\!\left(
\bm y_k^{\mathcal P,\mathcal C,m};
\hat{\bm y}_{\pi_k}^{(j,\mathcal P,\mathcal C),m,[\ell+1]},
\bm\varXi_{\pi_k}^{\bm y,(j,\mathcal P,\mathcal C),m,[\ell+1]}
\right)
\end{aligned}
\label{eq:C6_proof_y_posterior_density}
\end{equation}

The logarithmic information form of
Eq.~\eqref{eq:C6_proof_y_posterior_density} is
\begin{equation}
\begin{aligned}
&\ln q_k^{\bm y^m,(j,\mathcal P,\mathcal C),[\ell+1]}
\!\left(\bm y_k^{\mathcal P,\mathcal C,m}\right)\\
&\quad=-\frac{1}{2}
\left(
\bm y_k^{\mathcal P,\mathcal C,m}-
\hat{\bm y}_{\pi_k}^{(j,\mathcal P,\mathcal C),m,[\ell+1]}
\right)^{\mathrm T}
\left(
\bm\varXi_{\pi_k}^{\bm y,(j,\mathcal P,\mathcal C),m,[\ell+1]}
\right)^{-1}
\left(
\bm y_k^{\mathcal P,\mathcal C,m}-
\hat{\bm y}_{\pi_k}^{(j,\mathcal P,\mathcal C),m,[\ell+1]}
\right)
+\mathtt c_{\bm y_k^{\mathcal P,\mathcal C,m}}\\
&\quad=-\frac{1}{2}
(\bm y_k^{\mathcal P,\mathcal C,m})^{\mathrm T}
\left(
\bm\varXi_{\pi_k}^{\bm y,(j,\mathcal P,\mathcal C),m,[\ell+1]}
\right)^{-1}
\bm y_k^{\mathcal P,\mathcal C,m}\\
&\qquad+
(\bm y_k^{\mathcal P,\mathcal C,m})^{\mathrm T}
\left(
\bm\varXi_{\pi_k}^{\bm y,(j,\mathcal P,\mathcal C),m,[\ell+1]}
\right)^{-1}
\hat{\bm y}_{\pi_k}^{(j,\mathcal P,\mathcal C),m,[\ell+1]}
+\mathtt c_{\bm y_k^{\mathcal P,\mathcal C,m}}
\end{aligned}
\label{eq:C6_proof_y_posterior_information_form}
\end{equation}

Comparing Eqs.~\eqref{eq:C6_proof_y_quadratic_form} and
\eqref{eq:C6_proof_y_posterior_information_form} gives the posterior
precision matrix
\begin{equation}
\left(
\bm\varXi_{\pi_k}^{\bm y,(j,\mathcal P,\mathcal C),m,[\ell+1]}
\right)^{-1}
=\bm R_k^{-1}
+\bar{\bm G}_{\pi_k}^{(j,\mathcal P,\mathcal C),[\ell+1]}
\label{eq:C6_proof_y_information_matrix}
\end{equation}

Taking the inverse of both sides of
Eq.~\eqref{eq:C6_proof_y_information_matrix} gives the posterior
covariance matrix of the $m$-th measurement-source variable
\begin{equation}
\bm\varXi_{\pi_k}^{\bm y,(j,\mathcal P,\mathcal C),m,[\ell+1]}
=\left[
\bm R_k^{-1}
+\bar{\bm G}_{\pi_k}^{(j,\mathcal P,\mathcal C),[\ell+1]}
\right]^{-1}
\label{eq:C6_proof_y_covariance_result}
\end{equation}

The corresponding information vector satisfies
\begin{equation}
\begin{aligned}
\left(
\bm\varXi_{\pi_k}^{\bm y,(j,\mathcal P,\mathcal C),m,[\ell+1]}
\right)^{-1}
\hat{\bm y}_{\pi_k}^{(j,\mathcal P,\mathcal C),m,[\ell+1]}
&=\bm R_k^{-1}
\bm z_k^{\mathcal P,\mathcal C,m}
+\bar{\bm G}_{\pi_k}^{(j,\mathcal P,\mathcal C),[\ell+1]}
\dot{\bm H}_k^{(j)}
\hat{\bm\varGamma}_{\pi_k}^{(j,\mathcal P,\mathcal C),[\ell+1]}
\end{aligned}
\label{eq:C6_proof_y_information_vector}
\end{equation}

Left-multiplying both sides of Eq.~\eqref{eq:C6_proof_y_information_vector}
by
$\bm\varXi_{\pi_k}^{\bm y,(j,\mathcal P,\mathcal C),m,[\ell+1]}$
gives the posterior mean of the $m$-th measurement-source variable
\begin{equation}
\begin{aligned}
\hat{\bm y}_{\pi_k}^{(j,\mathcal P,\mathcal C),m,[\ell+1]}
&=\bm\varXi_{\pi_k}^{\bm y,(j,\mathcal P,\mathcal C),m,[\ell+1]}
\left[
\bm R_k^{-1}
\bm z_k^{\mathcal P,\mathcal C,m}
+\bar{\bm G}_{\pi_k}^{(j,\mathcal P,\mathcal C),[\ell+1]}
\dot{\bm H}_k^{(j)}
\hat{\bm\varGamma}_{\pi_k}^{(j,\mathcal P,\mathcal C),[\ell+1]}
\right]
\end{aligned}
\label{eq:C6_proof_y_mean_result}
\end{equation}

Since each measurement-source variable depends only on its corresponding
actual measurement and the common terminal trajectory state, no
cross-quadratic term exists between any two distinct measurement-source
variables. Therefore, their variational posteriors are mutually
independent. The above derivation holds for
$m=1,\ldots,M_k^{\mathcal P,\mathcal C}$. Consequently,
Eqs.~\eqref{eq:C6_proof_y_mean_result} and
\eqref{eq:C6_proof_y_covariance_result} prove
Proposition~\ref{prop:C6_y_update}.

\endgroup

\newpage
\section{Proof of Proposition~\ref{prop:C6_component_weight}}
\label{app:C6_component_weight_proof}

\begin{lemma}[Gaussian Moment Matching for Extended-Target Measurements]
	\label{lem:C6_predicted_measurement_moments}
	Let~$\bm\xi_k$ denote an extent state and suppose that the conditional
	measurement density is
	\begin{equation}
	p(\bm z_k\mid\bm r_k,\bm\xi_k)
	=
	\mathcal N\!\left(
	\bm z_k;
	\bm H_k\bm r_k,
	\bm S_k^{\bm z}(\bm\xi_k)+\bm R_k
	\right)
	\label{eq:C6_lemma_conditional_measurement}
	\end{equation}
	where~$\bm r_k$ has predicted mean~$\hat{\bm r}_{\omega_k}$ and
	covariance matrix~$\bm\varXi_{\omega_k}^{\bm r}$. Then the first two
	moments of the predicted measurement are
	\begin{equation}
	\mathbb E[\bm z_k]
	=
	\bm H_k\hat{\bm r}_{\omega_k}
	\label{eq:C6_lemma_measurement_mean}
	\end{equation}
	\begin{equation}
	\begin{aligned}
	\operatorname{Cov}(\bm z_k)
	=
	\bm H_k
	\bm\varXi_{\omega_k}^{\bm r}
	\bm H_k^{\mathrm T}+
	\mathbb E\!\left[
	\bm S_k^{\bm z}(\bm\xi_k)
	\right]
	+\bm R_k
	\end{aligned}
	\label{eq:C6_lemma_measurement_covariance}
	\end{equation}
	
	Consequently, the predicted measurement density can be approximated by
	a Gaussian distribution with the moments in
	Eqs.~\eqref{eq:C6_lemma_measurement_mean} and
	\eqref{eq:C6_lemma_measurement_covariance}.
\end{lemma}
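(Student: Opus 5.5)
The plan is to compute the first two moments of $\bm z_k$ by iterated conditioning on $(\bm r_k,\bm\xi_k)$, using the laws of total expectation and total covariance. I will assume, as in the predicted DGDIG density of Eq.~\eqref{eq:C6_terminal_pred_density}, that $\bm r_k$ and $\bm\xi_k$ are independent under the predicted density. I will also assume that $\mathbb E[\bm S_k^{\bm z}(\bm\xi_k)]$ is finite; in the DOAM instance this amounts to the requirement $a_{u,\omega_k}^{(j),n_{\omega_k}^{(j)}}>1$ already stated in Proposition~\ref{prop:C6_component_weight}.

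\emph{Mean.} By Eq.~\eqref{eq:C6_lemma_conditional_measurement}, $\mathbb E[\bm z_k\mid\bm r_k,\bm\xi_k]=\bm H_k\bm r_k$. Taking the outer expectation and using linearity gives $\mathbb E[\bm z_k]=\bm H_k\hat{\bm r}_{\omega_k}$, which is Eq.~\eqref{eq:C6_lemma_measurement_mean}. No independence is needed for this step.

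\emph{Covariance.} I will use
\[
\operatorname{Cov}(\bm z_k)=\mathbb E\!\left[\operatorname{Cov}(\bm z_k\mid\bm r_k,\bm\xi_k)\right]+\operatorname{Cov}\!\left(\mathbb E[\bm z_k\mid\bm r_k,\bm\xi_k]\right).
\]
The first term is $\mathbb E[\bm S_k^{\bm z}(\bm\xi_k)+\bm R_k]=\mathbb E[\bm S_k^{\bm z}(\bm\xi_k)]+\bm R_k$, because $\bm R_k$ is deterministic. The second term is $\operatorname{Cov}(\bm H_k\bm r_k)=\bm H_k\bm\varXi_{\omega_k}^{\bm r}\bm H_k^{\mathrm T}$. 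Summing the two gives Eq.~\eqref{eq:C6_lemma_measurement_covariance}. Independence of $\bm r_k$ and $\bm\xi_k$ is only needed to read the marginal covariance of $\bm r_k$ directly as $\bm\varXi_{\omega_k}^{\bm r}$ and to keep the extent term free of kinematic dependence.

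\emph{Gaussian approximation.} The final claim is a moment-matching projection. The exact predictive density is a continuous scale mixture of Gaussians, not a Gaussian. Among all Gaussians, the one minimising $\mathrm{KL}(p\|\mathcal N)$ is the one with the same mean and covariance, so approximating the predictive density by the Gaussian with these two moments is the optimal such choice. This is the step I expect to be the main obstacle, since it is an approximation rather than an identity and should be presented as such. For the later use in Proposition~\ref{prop:C6_component_weight}, I would then instantiate $\bm S_k^{\bm z}(\bm\xi_k)=\mathtt s\,\mathcal R(\theta)\mathcal W(\ddot l_1,\ddot l_2)\mathcal R(-\theta)$. Its expectation factorises over $\theta$ and the two $\ddot l_u$ by the DGDIG independence. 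The Gaussian identities $\mathbb E[\cos2\theta]=\cos(2\hat\theta)\mathrm e^{-2\varXi^\theta}$ and $\mathbb E[\sin2\theta]=\sin(2\hat\theta)\mathrm e^{-2\varXi^\theta}$ then produce Eq.~\eqref{eq:C6_DOAM_expected_extent}.
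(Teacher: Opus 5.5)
Your proposal is correct and follows the paper's proof. The mean comes from the law of total expectation, the covariance from the law of total covariance, and the Gaussian is then fixed by moment matching; your KL justification for that last step and your finiteness caveat, tied to $a_{u,\omega_k}^{(j),n_{\omega_k}^{(j)}}>1$, are sensible additions that the paper leaves implicit. One small correction: you do not need $\bm r_k$ and $\bm\xi_k$ to be independent, because $\bm\varXi_{\omega_k}^{\bm r}$ is by hypothesis the marginal covariance of $\bm r_k$, and $\mathbb E[\bm S_k^{\bm z}(\bm\xi_k)]$ involves only the marginal of $\bm\xi_k$.
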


\begin{IEEEproof}
	The law of total expectation gives
	\begin{equation}
	\begin{aligned}
	\mathbb E[\bm z_k]
	&=
	\mathbb E\!\left[
	\mathbb E[\bm z_k\mid\bm r_k,\bm\xi_k]
	\right]\\
	&\quad=
	\mathbb E[\bm H_k\bm r_k]
	=
	\bm H_k\hat{\bm r}_{\omega_k}
	\end{aligned}
	\label{eq:C6_lemma_proof_mean}
	\end{equation}
	
	The law of total covariance gives
	\begin{equation}
	\begin{aligned}
	\operatorname{Cov}(\bm z_k)
	&=
	\mathbb E\!\left[
	\operatorname{Cov}
	(\bm z_k\mid\bm r_k,\bm\xi_k)
	\right]+
	\operatorname{Cov}\!\left(
	\mathbb E[\bm z_k\mid\bm r_k,\bm\xi_k]
	\right)\\
	&\quad=
	\mathbb E\!\left[
	\bm S_k^{\bm z}(\bm\xi_k)
	\right]
	+\bm R_k
	+\bm H_k
	\bm\varXi_{\omega_k}^{\bm r}
	\bm H_k^{\mathrm T}
	\end{aligned}
	\label{eq:C6_lemma_proof_covariance}
	\end{equation}
	
	Matching these two moments gives the stated Gaussian approximation.
	This is the same predicted-measurement moment structure used for the MEM
	in~\cite[Eq.~(21)]{2019YangShishanMEM}, but it follows directly from the
	conditional Gaussian measurement model and therefore does not depend on
	a particular extent parameterization.
\end{IEEEproof}

\begin{corollary}[DOAM Predicted Measurement Likelihood]
	\label{cor:C6_DOAM_component_likelihood}
	For the terminal predicted DGDIG density in
	Eq.~\eqref{eq:C6_terminal_pred_density}, let
	\begingroup
	
	\begin{equation}
	\bm S_k^{\bm z}(\bm\xi_k)
	=
	\mathtt s
	\mathcal R\!\left(\theta^{n_{\omega_k}^{(j)}}\right)
	\mathcal W\!\left(
	\ddot l_1^{n_{\omega_k}^{(j)}},
	\ddot l_2^{n_{\omega_k}^{(j)}}
	\right)
	\mathcal R\!\left(-\theta^{n_{\omega_k}^{(j)}}\right)
	\label{eq:C6_corollary_DOAM_conditional_extent}
	\end{equation}
	\endgroup
	
	Then the Gaussian moment-matched covariance matrix of a target-generated
	measurement is given by Eq.~\eqref{eq:C6_DOAM_measurement_cov}, where the
	expected extent-spread covariance is given by
	Eq.~\eqref{eq:C6_DOAM_expected_extent}. Applying the same predicted
	measurement moments independently to all measurements in
	cell~$(\mathcal P,\mathcal C)$ yields the likelihood approximation in
	Eq.~\eqref{eq:C6_DOAM_component_likelihood}.
\end{corollary}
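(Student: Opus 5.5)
The corollary follows by instantiating Lemma~\ref{lem:C6_predicted_measurement_moments} with $\bm\xi_k=(\theta^{n},\ddot l_1^{n},\ddot l_2^{n})$, where $n=n_{\omega_k}^{(j)}$, distributed according to the terminal predicted DGDIG density in Eq.~\eqref{eq:C6_terminal_pred_density}. The kinematic moments there are exactly $\hat{\bm r}_{\omega_k}^{(j),n}$ and $\bm\varXi_{\omega_k}^{\bm r,(j),n}$ from Eqs.~\eqref{eq:C6_terminal_r_mean}--\eqref{eq:C6_terminal_r_statistics}. The conditional model in Eq.~\eqref{eq:C6_z_measurement_model} has the form of Eq.~\eqref{eq:C6_lemma_conditional_measurement} with $\bm S_k^{\bm z}$ given by Eq.~\eqref{eq:C6_corollary_DOAM_conditional_extent}. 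The lemma then gives mean $\bm H_k\hat{\bm r}_{\omega_k}^{(j),n}$ and covariance $\bm H_k\bm\varXi^{\bm r}\bm H_k^{\mathrm T}+\mathbb E[\bm S_k^{\bm z}]+\bm R_k$. By definition, Eq.~\eqref{eq:C6_DOAM_expected_extent_def} identifies $\mathbb E[\bm S_k^{\bm z}]$ with $\bar{\bm S}_{\omega_k}^{(j),n}$, so Eq.~\eqref{eq:C6_DOAM_measurement_cov} holds. What remains is to show that this expectation has the closed form in Eq.~\eqref{eq:C6_DOAM_expected_extent}.

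For that step, expand $\mathcal R(\theta)\operatorname{diag}(\ddot l_1,\ddot l_2)\mathcal R(-\theta)$ using the double-angle identities. It equals
\[
\tfrac12(\ddot l_1+\ddot l_2)\bm I_2+\tfrac12(\ddot l_1-\ddot l_2)\begin{bmatrix}\cos2\theta&\sin2\theta\\ \sin2\theta&-\cos2\theta\end{bmatrix}.
\]
Under the DGDIG factorization, $\theta$, $\ddot l_1$ and $\ddot l_2$ are mutually independent. The expectation therefore factors into products of $\mathbb E[\ddot l_u]$ and the trigonometric moments of $\theta$. For the inverse-Gamma factor, $\mathbb E[\ddot l_u]=b_u/(a_u-1)$, which requires $a_u>1$; this gives Eq.~\eqref{eq:C6_DOAM_axis_prediction_mean}, and hence $\overline{\ddot l}_\Sigma$ and $\overline{\ddot l}_\Delta$.

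For the Gaussian orientation, use the characteristic function: $\mathbb E[\mathrm e^{\mathrm i2\theta}]=\mathrm e^{\mathrm i2\hat\theta-2\varXi^\theta}$. Taking real and imaginary parts yields $\varrho_c$ and $\varrho_s$ in Eqs.~\eqref{eq:C6_DOAM_orientation_cosine_moment}--\eqref{eq:C6_DOAM_orientation_sine_moment}. Multiplying the result by $\mathtt s$ reproduces Eq.~\eqref{eq:C6_DOAM_expected_extent}.

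Finally, to obtain Eq.~\eqref{eq:C6_DOAM_component_likelihood}, apply the same moment-matched Gaussian to each measurement in the cell and take the product. This is the approximation that treats the measurements in the cell as conditionally independent given the predicted moments. It ignores the correlation induced by the shared state, and I will state it as an approximation rather than an identity, as the ``$\approx$'' in the statement indicates.

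The main obstacle is this closed-form trigonometric expectation and keeping the constants straight. The factor $\tfrac12$ and the exponent $-2\varXi^\theta$ (not $-\varXi^\theta/2$) come from $\operatorname{Var}(2\theta)=4\varXi^\theta$. Independence of the orientation and SSAL factors in the terminal predicted density must also be justified, which follows from the product form in Eq.~\eqref{eq:C6_single_component}.
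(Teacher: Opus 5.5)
Your proposal is correct and takes the same route as the paper. Both instantiate the moment-matching lemma with the terminal predicted moments. Both evaluate $\mathbb E[\ddot l_u]=b_u/(a_u-1)$ and obtain $\mathbb E[\cos 2\theta]$ and $\mathbb E[\sin 2\theta]$ from the Gaussian characteristic function, using the independence given by the product-form terminal density, and then take the per-measurement product over the cell as an approximation. Your explicit double-angle decomposition of $\mathcal R(\theta)\operatorname{diag}(\ddot l_1,\ddot l_2)\mathcal R(-\theta)$ simply spells out the substitution step that the paper leaves implicit.
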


\begin{IEEEproof}
	For the $u$-th terminal predicted SSAL state, the mean of the
	inverse-Gamma distribution exists when
	$a_{u,\omega_k}^{(j),n_{\omega_k}^{(j)}}>1$ and is
	\begingroup
	
	\begin{equation}
	\overline{\ddot l}_{u,\omega_k}^{(j),n_{\omega_k}^{(j)}}
	=
	\frac{
		b_{u,\omega_k}^{(j),n_{\omega_k}^{(j)}}
	}{
	a_{u,\omega_k}^{(j),n_{\omega_k}^{(j)}}-1
},
\qquad u\in\{1,2\}
\label{eq:C6_corollary_axis_mean}
\end{equation}
\endgroup

The characteristic function of a Gaussian distribution gives
\begingroup

\begin{equation}
\begin{aligned}
\mathbb E\!\left[
\cos\!\left(2\theta^{n_{\omega_k}^{(j)}}\right)
\right]
&=
\cos\!\left(
2\hat\theta_{\omega_k}^{(j),n_{\omega_k}^{(j)}}
\right)
\mathrm e^{-2\varXi_{\omega_k}^{\theta,(j),n_{\omega_k}^{(j)}}}
\end{aligned}
\label{eq:C6_corollary_cosine_moment}
\end{equation}
\begin{equation}
\begin{aligned}
\mathbb E\!\left[
\sin\!\left(2\theta^{n_{\omega_k}^{(j)}}\right)
\right]
&=
\sin\!\left(
2\hat\theta_{\omega_k}^{(j),n_{\omega_k}^{(j)}}
\right)
\mathrm e^{-2\varXi_{\omega_k}^{\theta,(j),n_{\omega_k}^{(j)}}}
\end{aligned}
\label{eq:C6_corollary_sine_moment}
\end{equation}
\endgroup

Since the orientation and the two SSAL states are mutually independent
under Eq.~\eqref{eq:C6_terminal_pred_density}, substituting
Eqs.~\eqref{eq:C6_corollary_axis_mean}--\eqref{eq:C6_corollary_sine_moment}
into the expectation of Eq.~\eqref{eq:C6_corollary_DOAM_conditional_extent}
gives Eq.~\eqref{eq:C6_DOAM_expected_extent}.

\begingroup

Applying Lemma~\ref{lem:C6_predicted_measurement_moments} independently
to all measurements in the cell and adopting the independent Gaussian
moment-matching approximation gives
\endgroup
\begingroup

\begin{equation}
\begin{aligned}
&\mathcal L_{\omega_k}^{(j,\mathcal P,\mathcal C)}\approx
\prod_{m=1}^{M_k^{\mathcal P,\mathcal C}}
\mathcal N\!\left(
\bm z_k^{\mathcal P,\mathcal C,m};
\bm H_k
\hat{\bm r}_{\omega_k}^{(j),n_{\omega_k}^{(j)}},
\bm\varXi_{\omega_k}^{\bm z,(j),n_{\omega_k}^{(j)}}
\right)
\end{aligned}
\label{eq:C6_corollary_cell_likelihood}
\end{equation}
\endgroup
which is identical to Eq.~\eqref{eq:C6_DOAM_component_likelihood} and
thereby proves the corollary.
\end{IEEEproof}

\begingroup

The factorized terminal density in
Eq.~\eqref{eq:C6_terminal_pred_density} follows by extracting the terminal
marginals of the predicted DGDIG trajectory component, and
Corollary~\ref{cor:C6_DOAM_component_likelihood} gives its
measurement-cell likelihood. Substituting the normalized predicted weight
in Eq.~\eqref{eq:C6_normalized_pred_weight} and the \DOAM likelihood in
Eq.~\eqref{eq:C6_DOAM_component_likelihood} into the detected-component
weight update of the TCPHD recursion in \cite{Cheng2026ExplicitExtent},
whose partition product, cardinality normalizer, and component coefficient
are given by \cite[Eqs.~(42), (44), and~(46)]
{Cheng2026ExplicitExtent}, directly yields
Eq.~\eqref{eq:C6_post_component_weight}, thereby establishing
Proposition~\ref{prop:C6_component_weight}.
\endgroup

\renewcommand{\refname}{Supplementary References}
\putbib[bibtex/references]
\end{bibunit}

\end{document}